\numberwithin{equation}{section}
\theoremstyle{plain}
\newtheorem{theorem}{Theorem}[section]
\newtheorem{propo}[theorem]{Proposition}
\newtheorem{lemma}[theorem]{Lemma}
\newtheorem{corollary}[theorem]{Corollary}
\theoremstyle{remark}
\newtheorem{remark}[theorem]{Remark}
\newtheorem{assum}{Assumption}[section]
\newcommand{\Prob}{\mathbb{P}}
\newcommand{\Y}{\mathcal{Y}}
\newcommand{\bs}[1]{\boldsymbol{#1}}
\newcommand\numberthis{\addtocounter{equation}{1}\tag{\theequation}}
\newcommand{\weak}{\textbf{D}}
\author{
Sanket Agrawal\\
Department of Statistics\\
University of Warwick\\
\texttt{sanket.agrawal@warwick.ac.in}
\and
Joris Bierkens\\
Delft Institute of Applied Mathematics
\\
TU Delft
\and
Gareth O. Roberts\\
Department of Statistics\\
University of Warwick
}
\title{\vspace{-30pt}Large sample scaling analysis of the Zig-Zag algorithm for Bayesian inference}
\date{\today}
\begin{document}

\maketitle




\begin{abstract}
    Piecewise deterministic Markov processes provide scalable methods for sampling from the posterior distributions in big data settings by admitting principled sub-sampling strategies that do not bias the output. An important example is the Zig-Zag process of [{\it Ann. Stats.} {\bf 47} (2019) 1288 - 1320] where clever sub-sampling has been shown to produce an essentially independent sample at a cost that does not scale with the size of the data. However, sub-sampling also leads to slower convergence and poor mixing of the process, a behaviour which questions the promised scalability of the algorithm. We provide a large sample scaling analysis of the Zig-Zag process and its sub-sampling versions in settings of parametric Bayesian inference. In the transient phase of the algorithm, we show that the Zig-Zag trajectories are well approximated by the solution to a system of ODEs. These ODEs possess a drift in the direction of decreasing KL-divergence between the assumed model and the true distribution and are explicitly characterized in the paper. In the stationary phase, we give weak convergence results for different versions of the Zig-Zag process. Based on our results, we estimate that for large data sets of size $n$, using suitable control variates with sub-sampling in Zig-Zag, the algorithm costs $O(1)$ to obtain an essentially independent sample; a computational speed-up of $O(n)$ over the canonical version of Zig-Zag and other traditional MCMC methods.\\

    {\it Keywords and phrases}: MCMC, piecewise deterministic Markov processes, fluid limits, sub-sampling, multi-scale dynamics, averaging.
\end{abstract}

\section{Introduction}
\label{sec:intro}
    Piecewise deterministic Markov processes (PDMPs, \cite{davis1984piecewise}) are a class of non-reversible Markov processes which have recently emerged as alternatives to traditional MCMC methods based on the Metropolis-Hastings algorithm. Practically implementable PDMP algorithms such as the Zig-Zag sampler \citep{bierkens2019}, the Bouncy Particle Sampler \citep{bouchard2018bouncy} and others \citep{corbella2022automatic,vasdekis2021speed,wu2020coordinate,bierkens2020boomerang,bertazzi2022adaptive,bertazzi2022approximations,pagani2024nuzz} have provided highly encouraging results for Bayesian inference as well as in Physics \citep[for example]{krauth2021event}.
    As a result, there has been a surge of interest in the Monte Carlo community to better understand the theoretical properties of PDMPs \citep[see e.g.][]{holderrieth2021cores,deligiannidis2021randomized,andrieu2021hypocoercivity,bierkens2017limit,bierkens2019ergodicity,bierkens2021large,bierkens2022high,bierkens_scaling_2025}. This body of work demonstrates powerful convergence properties with impressive scalability as a function of the dimension. 

    For largely good reasons, much of MCMC theory has focused on the behaviour of algorithms as a function of the dimension
    However, within applications in Bayesian Statistics, it is at least as important to consider how sampler properties vary with data. Theoretical studies to focus on this problem are relatively  rare (though see \cite{altmeyerPolynomialTimeGuarantees2022,sahu_convergence_1999, ascolani_scalability_2024, ascolani_dimension-free_2024, zanella_multilevel_2021, yang_computational_2016}). The main reason for this is that studying the data effect is fraught with additional technical complications making rigorous results hard to obtain. To our knowledge this paper is the first to attempt to make progress on this problem for PDMPs. Furthermore, the emergence of {\em sub-sampling} methodology as outlined in the next paragraph, makes analysis of algorithm complexity as function of data size $n$ of great practical importance. 

    From a practical MCMC perspective, the most exciting property of PDMPs, at least for statistical applications is that of {\em principled sub-sampling} \citep{bierkens2019}, whereby randomly chosen subsets of data can be used to estimate likelihood values within the algorithm without
    biasing its invariant distribution. The authors only cover sub-sampling for the Zig-Zag algorithm, although the unbiasedness property holds for a rather general class of PDMP MCMC algorithms. This unbiasing property is in sharp contrast to most MCMC methods \citep{welling2011bayesian,nemeth2021,bardenet2017markov}, although see \cite{cornish2019scalable}. In fact, \cite{bierkens2019} gives experiments to show that even for data sets of large size $n$, subsets of size $1$ can be used to estimate the likelihood, producing a computational speed-up of $O(n)$. This leads to the property of {\em superefficiency} whereby, after initialization, the entire running cost of the algorithm is dominated for large $n$ by the cost of evaluating the likelihood once.
    In \cite{bierkens2019}, the authors provide two algorithms which utilise this sub-sampling idea: the straightforward ZZ-SS algorithm together with a control-variate variant ZZ-CV which will be formally introduced in Section \ref{s:ZZBayes}.

    In \cite{bierkens2019}, the authors also point out that although principled sub-sampling gives these dramatic improvements in implementation time, the dynamics of ZZ-SS and ZZ-CV differ from that of canonical Zig-Zag: sample paths exhibit more diffusive behaviour to the extra velocity switches induced by the variation in the estimate of the likelihood. This might lead to slower convergence and inferior mixing of the underlying stochastic process (see Figure \ref{fig:transient}). For one-dimensional Zig-Zag, \cite{bierkens2017limit} studies the diffusive behaviour resulting from excess switches outside of a statistical context. In \cite{andrieu2021peskun}, the authors show that canonical Zig-Zag has superior mixing (in the sense of minimizing asymptotic variance for estimating a typical $L_2$ function) than any alternative with excess switches. Since both ZZ-SS and ZZ-CV can be viewed as Zig-Zag algorithms with excess switches, this raises the question of whether the inferior mixing of sub-sampling Zig-Zag algorithms relative to their canonical parent might negate their computational advantage.

    \begin{figure}
    \centering
        \begin{subfigure}[t]{\linewidth}
        \centering
            \begin{subfigure}[t]{0.49\linewidth}
                \includegraphics[scale = 0.35]{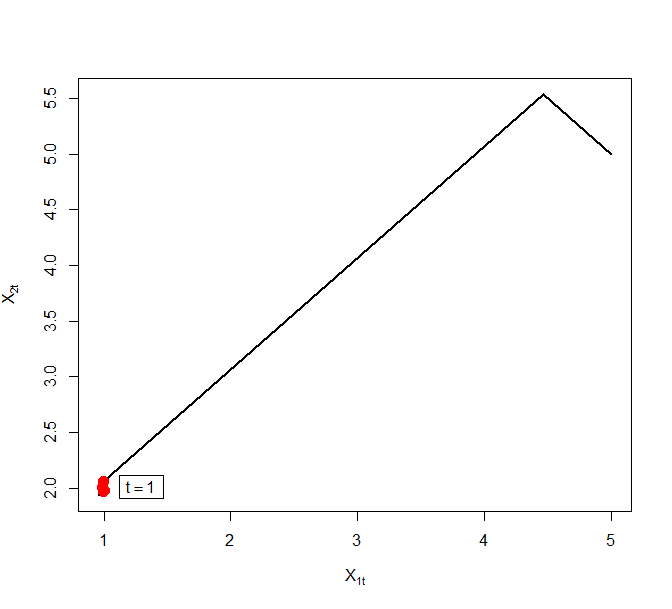}
            \end{subfigure}
            \begin{subfigure}[t]{0.49\linewidth}
                \includegraphics[scale = 0.35]{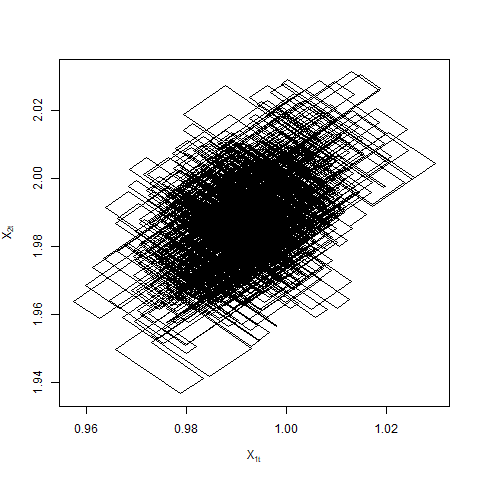}
            \end{subfigure}
        \subcaption{Canonical Zig-Zag (ZZ)}
        \end{subfigure}
        \begin{subfigure}[t]{\linewidth}
        \centering
            \begin{subfigure}[t]{0.49\linewidth}
                \includegraphics[scale = 0.35]{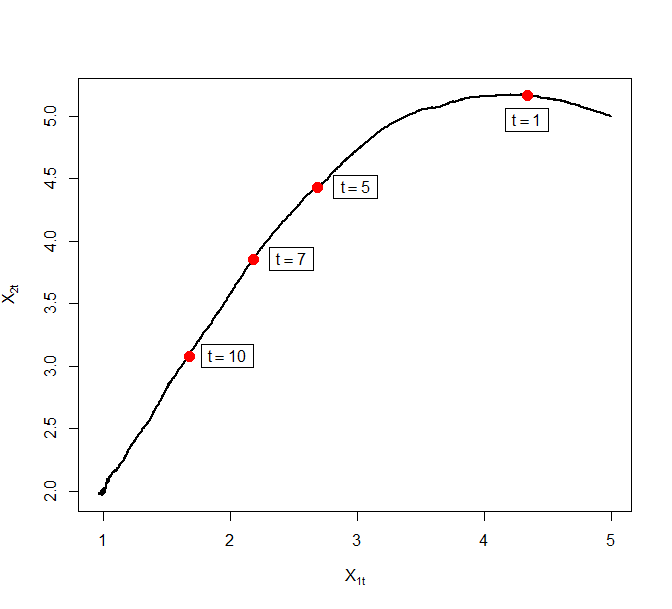}
            \end{subfigure}
            \begin{subfigure}[t]{0.49\linewidth}
                \includegraphics[scale = 0.35]{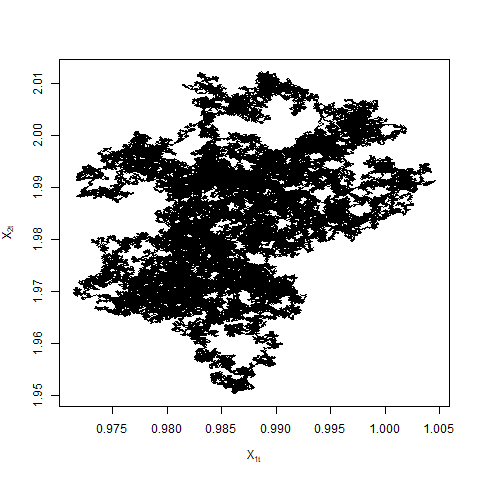}
            \end{subfigure}
        \subcaption{Zig-Zag with vanilla sub-sampling (ZZ-SS)}
        \end{subfigure}
        \begin{subfigure}[t]{\linewidth}
        \centering
            \begin{subfigure}[t]{0.49\linewidth}
                \includegraphics[scale = 0.35]{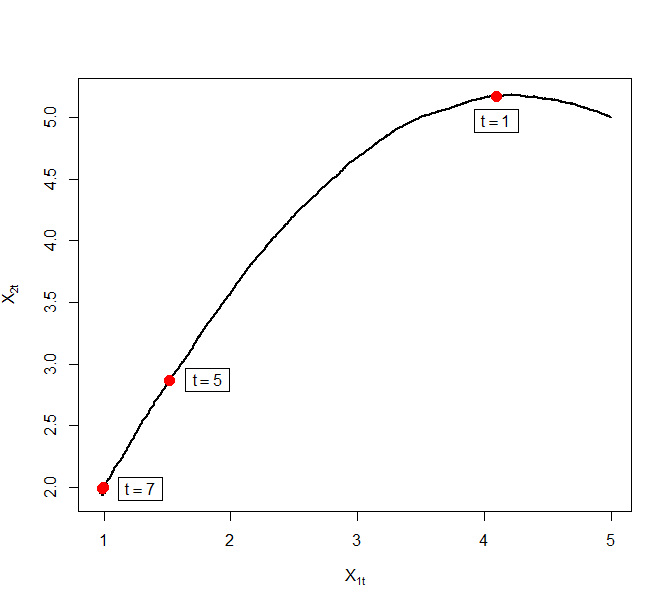}
            \end{subfigure}
            \begin{subfigure}[t]{0.49\linewidth}
                \includegraphics[scale = 0.35]{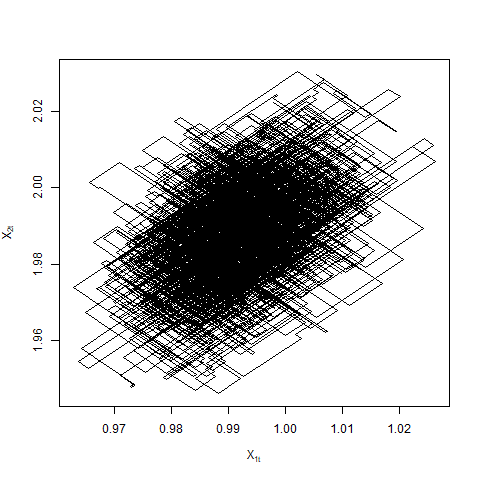}
            \end{subfigure}
        \subcaption{Zig-Zag with control variates (ZZ-CV)}
        \end{subfigure}
        \caption{Trajectories for different versions of the Zig-Zag algorithm targeting a Bayesian posterior for the 2-dimensional Bayesian logistic regression example with true parameter $(1, 2)$. The left panel displays the trajectories in the transient phase with red points marking the location of the process at different times. The right panel illustrates the process in the stationary phase.}
        \label{fig:transient}
    \centering
    \end{figure}

    \subsection{Our contributions and summary of the results}
    
        The goal of this work is to give the first rigorous theoretical results concerning the algorithm mixing deterioration caused by the excess switches induced by sub-sampling. To that end, we carry out a scaling limit analysis of Zig-Zag algorithm as the data set size $n \to \infty$, with dimension $d$ being fixed. Interplaying closely with the large-sample Bayesian asymptotics, we prove weak convergence results for the underlying PDMP for different sub-sampling versions of the Zig-Zag algorithm. The scaling limit approach has been a standard and impactful one in the literature to study mixing properties of MCMC algorithms with respect to dimension (\cite[e.g.][]{bierkens2022high,roberts1997weak,roberts2001}) and dataset size \citep{schmon2021optimal, schmon2021large}. Recently, there has been a growing interest in the non-asymptotic analysis of traditional MCMC methods (\cite{altmeyerPolynomialTimeGuarantees2022, chewi_optimal_2021, dwivedi_log-concave_2019}), however, the non-asymptotic literature on PDMPs remains relatively limited (though see \cite{lu2022explicit, andrieu2021hypocoercivity}).   
        
        In parametric Bayesian inference, under sufficient regularity conditions, it is well known that the posterior distribution converges to a degenerate distribution as $n \to \infty$ \citep{berk1970consistency, kleijn2012}. The point of concentration, say $x_0$ if it exists and is unique, is the one minimizing the Kullback-Liebler (KL) divergence between the true data-generating distribution and the assumed model. We analyze the algorithm in two different phases: first, the {\it transient phase} when the process is $O(1)$ distance away from $x_0$. In this phase in the limit, any MCMC procedure is equivalent to an optimization routine seeking to find $x_0$ and so is Zig-Zag. In Section \ref{sec:transient}, we show that as $n \to \infty$, the Zig-Zag trajectories behave like deterministic smooth flows with a drift pushing them in the direction of decreasing KL-divergence. In particular, Theorem \ref{thm:transient} establishes weak convergence of the Zig-Zag trajectories to the solutions of a system of ODEs. These ODEs are explicitly characterized in equation \eqref{eq:limiting_drift}. We observe that while the canonical Zig-Zag attains optimal speed ($\pm 1$) in the limit, the sub-sampling versions remain sub-optimal with a damping factor that depends on the model. In particular, we find that the vanilla sub-sampling starts with the optimal speed at infinity, suffering from an extreme slowdown as the process gets closer to the minimizer. On the other hand for sub-sampling with control variates, the drift remains positive at stationarity achieving optimal speed for model densities that are log-concave in the parameter.
        
        The fluid limit analysis in the transient phase captures large-scale movements (of $O(1)$) of the process. Under a sufficiently regular Bayesian model, for large $n$, the posterior concentrates within a $n^{-1/2}$-ball around $x_0$ and resembles a Gaussian \citep{kleijn2012}. The high concentration of posterior mass restricts the movement of the Zig-Zag process. Consequently, close to $x_0$ the fluid limit analysis yields a trivial limit (see Theorem \ref{thm:canonical}). Let $\hat{x}_n$ be a sequence of suitable estimators of $x_0$ such that the Bernstein von-Mises Theorem holds: in the rescaled coordinate $\xi(x) = n^{1/2}(x - \hat{x}_n)$, the posterior converges to a Gaussian distribution \citep{kleijn2012}. The second phase we study is when the process is $O(1)$ away from $x_0$ in the $\xi-$parameter. We call this the {\it stationary phase}. In the $\xi-$parameter, the trajectories of the Zig-Zag process have a non-trivial limit, as we show in Section \ref{sec:stationary}. In particular, Theorem \ref{theo:zzss_stat_fixed} establishes weak convergence of ZZ-SS trajectories to the solution of a reversible Ornstein-Uhlenbeck SDE given in \eqref{eq:limiting_sde}. The SDE is invariant with respect to the limiting Gaussian distribution and, as in the transient phase, has a damping factor owing to the sub-sampling. On the other hand, the ZZ-CV and the canonical Zig-Zag are slowed down by a factor $n^{1/2}$ to give a non-trivial limit. Particularly, the slowed down ZZ-CV and canonical Zig-Zag in the $\xi-$parameter converges weakly to a Zig-Zag process invariant with respect to the asymptotic Gaussian distribution (see Theorems \ref{theo:zzcv_stat_fixed} and \ref{theo:zzcv_stat_varying} respectively). The limiting switching rate for ZZ-CV is given by \eqref{eq:limrate} and is, in general, not canonical once again owing to the sub-sampling. 
        
        The weak convergence results further imply results on complexity bounds (Corollary \ref{corr:complexity_ss} and \ref{corr:complexity_cv}). In the stationary phase, the canonical Zig-Zag and ZZ-CV require $O(n^{-1/2})$ time to obtain an essentially independent sample in contrast to $O(1)$ for ZZ-SS. The computational effort of a sampling algorithm to obtain an essentially independent sample is the product of time taken to obtain an essentially independent sample and the computational effort per unit time. For PDMP-like algorithms this further breaks down to,
        \begin{align*}
            \text{computational effort} &:= (\text{time to obtain an essentially independent sample}) \\
            &\quad \quad \quad \times (\text{no. of proposed switch per unit time}) \\
            &\quad \quad \quad \times (\text{computational cost per proposed switch}).
        \end{align*} 
        
        \begin{table}
        \caption{Computational effort of different versions of the Zig-Zag algorithms to obtain an essentially independent sample.}
        \label{tab:summary}
        \begin{tabular}{@{}ccccc@{}}
        \hline
        Algorithm & time to an essentially ind. sample & $\sharp$events$/$time & effort/event  & total effort \\
        \hline\\[-2pt]
        ZZ    & $O\left(n^{-1/2}\right)$  & $O\left(n^{1/2}\right)$ & $O(n)$ & $\bs{O(n)}$ \\
              & (Thm. \ref{theo:zzcv_stat_varying}, Corr. \ref{corr:complexity_cv}) &&& \\
              \\[-2pt]
        ZZ-SS & $O(1)$ & $O(n)$ & $O(1)$ & $\bs{O(n)}$ \\
              & (Thm. \ref{theo:zzss_stat_fixed}, Corr. \ref{corr:complexity_ss}) &&& \\
              \\[-2pt]
        ZZ-CV & $O\left(n^{-1/2}\right)$  & $O\left(n^{1/2}\right)$ & $O(1)$ & $\bs{O(1)}$ \\
              & (Thm. \ref{theo:zzcv_stat_fixed}, Corr. \ref{corr:complexity_cv}) &&& \\[2pt]
        \hline
        \end{tabular}
        \end{table}
        Similar arguments for other methods and/or in different settings have been previously used to estimate algorithmic complexities of Monte Carlo methods (see e.g. \cite{bierkens2022high, cornish2019scalable, pollock_quasi-stationary_2020}). The canonical Zig-Zag incurs an $O(n)$ cost at each proposed switch while the sub-sampling versions are designed to only cost $O(1)$. The number of proposed switches depend on the switching rates of the algorithm. In the $\xi-$parameter, the effective switching rates are $O(n^{1/2})$ for canonical Zig-Zag and ZZ-CV and $O(n)$ for ZZ-SS. The computational effort of different versions of the Zig-Zag algorithm in the stationary phase is summarized in Table \ref{tab:summary}. The overall computational effort of the canonical Zig-Zag and ZZ-SS per independent sample is $O(n)$. In contrast, ZZ-CV achieves a computational speed-up of $O(n)$ in the stationary phase provided the reference point is chosen sufficiently close to $x_0$. Although, note that ZZ-CV will have an initialization cost associated with finding a suitable reference point and then evaluating the likelihood at it. This will incur a one-off $O(n)$ cost after which, essentially independent samples can be obtained at a cost that does not increase with $n$.   
    
        The paper is constructed as follows. Section \ref{s:ZZBayes} details the Zig-Zag algorithm applied to the Bayesian setting with i.i.d observations along with a discussion of the sub-sampling method. Section \ref{sec:transient} contains our main results on the transient phase of the algorithm. We study the stationary phase separately in Section \ref{sec:stationary}. In Section \ref{sec:examples}, we discuss the results of the transient phase in more detail along with some illustrations on simple models. We conclude with a final discussion in Section \ref{sec:conclusion}. All the proofs are deferred to Appendix.

\section{Zig-Zag process for parametric Bayesian inference with i.i.d. observations}
\label{s:ZZBayes}
    In this section, we review how the Zig-Zag process can be designed to sample from a Bayesian posterior distribution in the setting where the data are assumed to be generated independently from a fixed probability model. For details, we refer to \cite{bierkens2019}.

    \subsection{Generator of the Zig-Zag process} 
        The $d-$dimensional Zig-Zag process $(Z_t)_{t \ge 0} = \left((X_t, V_t)\right)_{t \ge 0}$ is a PDMP with state space $E = \mathbb{R}^d \times \mathcal{V}$ where $\mathcal{V} = \{\pm1\}^d$. We refer to $X_t$ and $V_t$ as {\it position} and {\it velocity} process respectively. It is characterized by a collection of continuous {\it switching rates}  $(\lambda_i)_{i=1}^d$ where, for each $i$, $\lambda_i: E \to [0, \infty)$ is such that $t \mapsto \lambda_i(x + vt, v)$ is locally integrable in $(0, \infty)$ for all $(x, v) \in E$. The corresponding infinitesimal generator is given by,
        \begin{equation}
        \label{eq:zigzag_generator}
            \mathcal{L}f(x,v) = v\cdot \nabla_x f(x, v) + \sum_{i=1}^d\lambda_i(x, v)\{f(x, F_i(v)) - f(x, v)\}, \quad (x,v) \in E,
        \end{equation}
        where $F_i: \mathcal{V} \to \mathcal{V}$ flips the sign of the $i$-th component of $v$: for $v \in \mathcal{V}$, $(F_i(v))_i = -v_i$ and $(F_i(v))_j = v_j$ for $j \neq i$ \cite{bierkens2019, davis1993markov}.

    \subsection{Zig-Zag sampler for Bayesian inference}
        Let $\bs{y}^{(n)} = (y_1, \dots, y_n)$ be $n$ i.i.d. observations modeled via a parametric family $\left\{f(\cdot; x), x \in \mathbb{R}^d\right\}$ of probability densities. Consider the classical Bayesian posterior density, $\pi^{(n)}$, given by
        \begin{equation}
        \label{eq:posterior}
            \pi^{(n)}(x) \propto \pi_0(x)\prod_{j=1}^n f(y_j; x); \quad x \in \mathbb{R}^d,
        \end{equation}
        for some prior density $\pi_0$. Define the potential function $U_n(x) := -\log \pi^{(n)}(x)$, $x \in \mathbb{R}^d$. Let $\gamma^n: E \to [0, \infty)^d$ continuous be such that $\gamma^n_i(x, v) = \gamma_i(x, F_i(v))$ for all $(x, v) \in E$. Define,
        \begin{equation}
	\label{eq:rates}
	        \lambda^n_i(x, v) = (v_i \nabla_iU_n(x))_{+} + \gamma^n_i(x, v), \quad (x, v) \in E, \ i = 1, \dots, d.
	\end{equation}
        
        The Zig-Zag process with switching rates $(\lambda^n_i)_{i=1}^d$ given by \eqref{eq:rates} has an invariant distribution whose marginal distribution on $\mathbb{R}^d$ is equivalent to $\pi^{(n)}$ \citep{bierkens2019}. The switching rates for which $\gamma^n \equiv 0$ in \eqref{eq:rates} are called {\it canonical switching rates} and the corresponding Zig-Zag process the {\it canonical Zig-Zag process}. We will denote the canonical rates by $\lambda^n_{\text{can}, i}$. The function $\gamma^n$ itself will be called the {\it excess switching rate.}

        The implementation of the Zig-Zag sampler boils down to simulating events from inhomogeneous Poisson processes with intensity $\lambda^n_i(t) = \lambda^n_i(x + vt, v)$ for each initial condition $(x, v) \in E$. This is typically done by Poisson thinning (see Algorithm 1 in \cite{bierkens2019}). Given initial condition $(x, v)$, suppose $\lambda^n_i(t) \le M^n_i(t)$ for all $i$. For each $i$, propose an event time $\tau_i$ according to the Poisson process with intensity $M^n_i(t)$. Let $i_0 = \arg\min \tau_i$, and $\tau = \tau_{i_0}$. Accept the proposed event and flip the $i_0$-th component of velocity with probability $\lambda^n_{i_0}(x + v\tau, v)/M^n_{i_0}(\tau)$. Repeat the procedure from the new starting point $(x + v\tau, F_{i_0}(v))$ if accepted and $(x + v\tau, v)$ if rejected. The collection $(M^n_i)_{i=1}^d$, termed as {\it computational bounds} in \cite{bierkens2019}, are crucial to the algorithm's performance.

    \subsection{Sub-sampling with Zig-Zag} 
        In the implementation of the Zig-Zag sampler described above, each accept-reject step requires evaluation of one of the $\lambda^n_i$s. Observe that $\nabla U_n$ admits the following additive decomposition,
        \begin{equation}
        \label{eq:decompose}
            \nabla U_n(x) = \sum_{j=1}^n s^j(x) = \sum_{j=1}^n \left(-n^{-1}\nabla\log \pi_0(x) -\nabla\log f(y_j; x)\right), \quad x \in \mathbb{R}^d.
        \end{equation}
        If it costs $O(1)$ to compute each gradient term in the summation, the canonical rates, $\lambda^n_{\text{can}, i}(x, v) = (v_i\nabla_i U_n(x))_{+}$, incur an $O(n)$ computational cost per evaluation. 

        Fix $x$ and let $\zeta$ be random (given data) such that it unbiasedly estimates the sum in \eqref{eq:decompose} i.e. $\mathbb{E}_{\bs{y}^{(n)}}(\zeta(x)) = \nabla U_n(x)$. Suppose now $M^n_i$ satisfy $(v_i\zeta_i(x + vt))_{+} \le M^n_i(t)$ for all possible realizations of $\zeta$. Let $\tau$ and $i_0$ be the proposed event time and coordinate respectively sampled as before. Sample a random realization of $\zeta$ and flip the $i_0$-th component of velocity with probability $(v_i\zeta_i(x + v\tau, v))_{+}/M^n_{i_0}(\tau)$. The resulting algorithm preserves the invariance of the target distribution and the effective switching rate is given by \cite{bierkens2019}, 
        \[
            \lambda^{n}_i(x, v) = \mathbb{E}_{\bs{y}^{(n)}}[(v_i\zeta_i(x))_{+}] \ge (v_i\mathbb{E}_{\bs{y}^{(n)}}\zeta_i(x))_{+} = \lambda^n_{\text{can}, i}(x, v),
        \]
        where the inequality is due to Jensen. The above procedure is thus equivalent to adding a positive excess switching rate to the underlying Zig-Zag process. However, the computational cost per proposed event for this procedure is bounded above by the cost of generating one random realization of $\zeta$.  

        We will consider unbiased estimators for \eqref{eq:decompose} based on random batches of size $m(n) \le n$ from the whole data set. In our asymptotic study in later sections, $m$ may be chosen to depend explicitly on $n$, e.g., $m(n) = \log n$, or be held fixed, i.e. $m(n) = m$. For now, we will suppress the dependence on $n$. Let $(E^j)_{j=1}^n$ be functions from $\mathbb{R}^d$ to $\mathbb{R}^d$ such that for all $x$, 
        \begin{equation}
        \label{eq:ej}
            \nabla_i U_n(x) = \sum_{j=1}^n E^j_i(x), \quad x \in \mathbb{R}^d.
        \end{equation}
        For $m \le n$, let $S$ be a simple random sample (without replacement) of size $m$ from $\{1, \dots, n\}$. Then, $\zeta(x) = nm^{-1}\sum_{j \in S}E^j(x)$ is an unbiased estimator of $\nabla U(x)$. The effective switching rates are given by,
        \begin{equation}
        \label{eq:sub-sampling_rates}
            \lambda^n_i(x, v) = \frac{n}{|\mathcal{S}_{(n, m)}|}\sum_{S \in \mathcal{S}_{(n, m)}}\left(v_i\cdot m^{-1}\sum_{j \in S}E_i^j(x)\right)_{+}, \quad (x, v) \in E, \ i = 1, \dots, d,
        \end{equation}
        where $\mathcal{S}_{(n, m)}$ denotes the collection of all possible simple random samples (without replacement) of size $m$ from $\{1, \dots, n\}$. Given \eqref{eq:ej}, it is easy to observe that $(\lambda^n_i)_{i=1}^d$ defined in \eqref{eq:sub-sampling_rates} satisfy \eqref{eq:rates} for all $n$ and $m \le n$. The resulting algorithm is still exact, i.e., invariant with respect to the targeted posterior distribution; see \cite{bierkens2019}. In this paper, we analyze the following two choices of $E^j$ suggested in \cite{bierkens2019}. Our results, however, can be easily extended to other choices. 
        
        \begin{enumerate}[wide, labelwidth=!, labelindent=0pt]
            \item {\it Zig-Zag with sub-sampling (ZZ-SS):} The vanilla sub-sampling strategy where we set 
    	\begin{equation}
            \label{eq:zzss_ej}
    	    E^j(x) = s^j(x) = -n^{-1}\nabla\log\pi_0(x) -\nabla\log f(y_j; x); \quad \ j = 1, \dots, n.
    	\end{equation}
            This choice of $E^j$ is suitable when the gradient terms $s^j$ in \eqref{eq:decompose} are globally bounded i.e. there exist positive constants $c_i$ such that $|s^j_i(x)| \le c_i$ for all $i, j$ and $x$. We denote the switching rates by $\lambda^n_{\text{ss}, i}$ and call this process Zig-Zag with sub-sampling (ZZ-SS).\\[5pt]

            \item     {\it Zig-Zag with Control Variates (ZZ-CV):}  The second choice is suitable when the gradient terms $s^j$ are globally Lipschitz. The Lipschitz condition allows the use of control variates to reduce variance in the vanilla sub-sampling estimator of $\nabla U_n$ \cite{bierkens2019}. Let $x^*_n \in \mathbb{R}^d$ be a {\it reference point}. The control variate idea is to set 
            \begin{equation}
            \label{eq:zzcv_ej}
        	E^j(x) = s^j(x) - s^j(x^*_n) + \sum_{k=1}^n s^k(x^*_n); \quad j = 1, \dots, n.
            \end{equation}
            The reference point is allowed to depend on the observed data $(y_1, \dots, y_n)$. In practice often a common choice would be to use the MLE or the posterior mode. We suppose throughout this paper that $x^*_n$ is chosen and fixed after the data is observed and before the Zig-Zag algorithm is initialized. Our results in the later sections, although, advocate for continuous updation of the reference point. We denote the switching rates by $\lambda^n_{\text{cv}, i}$ and call the corresponding process Zig-Zag with control variates (ZZ-CV).
        \end{enumerate}
        
        Finally, note that when $m = n$, the rates in \eqref{eq:sub-sampling_rates} reduce to canonical rate irrespective of the choice of $E^j$. We will drop the subscript ``ss'', ``can'', and ``cv'' from $\lambda^n_i$ when the choice is understood from the context or when we are talking about switching rates in general.

    \subsection{Technical setup} 
        Let $(\Omega, \mathcal{F}, \Prob)$ be an arbitrary complete probability space, rich enough to accommodate all the random elements defined in the following. This probability space will be the primary source of randomness. The data sequence is defined on it via a data-generating process (see below) and the Zig-Zag process is viewed as a random element on $(\Omega, \mathcal{F}, \Prob)$ conditioned on the data. All our limit theorems will be of type ``weak convergence (in Skorohod topology) in $\Prob-$probability" or ``weak convergence (in Skorohod topology) $\Prob-$almost surely".  
        
        Denote the observation space by $\Y$ and the space of probability measures on $\Y$ by $\mathcal{M}(\Y)$. Let $Y: \Omega \to \mathcal{Y}$ be a data-generating process such that $Y \sim P$ for some $P \in \mathcal{M}(\mathcal{Y})$ i.e. $P$ is the push-forward of $\Prob$ by $Y$. Fix a model $\{F_x, x \in \mathbb{R}^d\} \subset \mathcal{M}(\mathcal{Y})$ for $P$. Suppose that $P$ and $\{F_x, x \in \mathbb{R}^d\}$ are absolutely continuous with respect to a common base measure $\nu$ on $\Y$. Let $p := dP/d\nu$ be the continuous density of $P$ with respect to $\nu$ and for each $x \in \mathbb{R}^d$, let $f(\cdot ; x) := dF_x/d\nu$ be the respective density of $F_x$. 

        Let $\bs{Y}^{(\mathbb{N})} \sim P^{\otimes \mathbb{N}}$ be a sequence of random elements defined on $\Omega$. Let $n \in \mathbb{N}$ denote the {\it dataset size}. For all $n$, let $\Pi^{(n)}$ be the random measure on $\mathbb{R}^d$ such that, 
	    \[
	        \Pi^{(n)}(dx) \propto \exp\left(\sum_{j=1}^n \log f(Y_j; x)\right)dx; \quad x \in \mathbb{R}^d.
	    \] 
	    For any fixed $n$, given $\bs{Y}^{(n)} = \bs{y}^{(n)} = (y_1, \dots, y_n)$, $\Pi^{(n)}$ is the classical Bayesian posterior for $x$ under a uniform prior. Since we will be concerned with asymptotic behaviour for large $n$ we can work under the assumption of a uniform prior without loss of generality. 
        
        For each $n$, let $T_n: \mathcal{Y}^n \to \mathbb{R}^d$ be a statistic. Define the sequence of {\it reference points}
        \begin{equation}
        \label{eq:reference_points}
            (X^*_n)_{n=1}^{\infty} := (T_n \circ \bs{Y}^{(n)})_{n = 1}^{\infty}
        \end{equation} 
        to be a random element on $\Omega$ taking values in $(\mathbb{R}^d)^{\mathbb{N}}$. We will assume that the sequence $(T_n)_{n=1}^{\infty}$ is chosen and fixed as soon as a model is identified. 

        For each $n$, given $\bs{y}^{(n)}$, let $(Z^n_t)_{t \ge 0}$ be a Zig-Zag process targeting $\Pi^{(n)}$ with switching rate defined as in \eqref{eq:sub-sampling_rates}. The choice of $E^j$ is taken to be either as in \eqref{eq:zzss_ej} or \eqref{eq:zzcv_ej} where the reference point is set to be the observed value of $X^*_n$ i.e. $x^*_n = T_n(\bs{y}^{(n)})$.

    \subsection{Notation and assumptions}
        We will use $\|\cdot\|$ to denote the Euclidean norm in $\mathbb{R}^d$ and to denote the induced matrix norm. For any $a \in \mathbb{R}$, $(a)_{+}$ denotes the positive part of $a$ i.e. $(a)_{+} = \max\{0, a\}$. If $h:\mathbb{R}^d \to \mathbb{R}$ is differentiable, we denote the gradient vector by $\nabla h = (\nabla_1h, \dots, \nabla_dh)$. If $h$ is twice differentiable, we use $\nabla^{\otimes 2}h$ to denote the Hessian of $h$. For $g:E \to \mathbb{R}$, absolutely continuous in its first argument, we will use $\nabla_i g$ to denote the partial derivative of $g$ with respect to $x_i$. We denote the $d-$dimensional vector of all $1$s by $\bs{1}_d$ and the $d \times d$ identity matrix by $\bs{I}_d$.
        
        Define, when it exists, $s(x; y) := -\nabla_x \log f(y; x)$ and $s'(x; y) := -\nabla^{\otimes 2} \log f(y; x)$. We will use $S, S', \dots$ to denote the corresponding random element when $Y \sim P$. Given $\bs{Y}^{(n)} = \bs{y}^{(n)} = (y_1, \dots, y_n)$, we will use $s^j(x)$ and $s'^{j}$ to denote $s(x; y_j)$ and $s'(x; y_j)$ respectively. We make the following assumptions on the model $\{F_x, x \in \mathbb{R}^d\}$ and observed data.

        \begin{assum}[\it Smoothness]
        \label{assum:smoothness}
           For each $y$, $f(y; \cdot) \in \mathcal{C}^3(\mathbb{R}^d)$. There exists $M' > 0$ such that for all $i = 1,\dots, d$,
            \[
                \nabla^{\otimes 2}s_i(x; y) \le M'\bs{I}_d, \quad x \in \mathbb{R}^d, y \in \mathcal{Y}.
            \]
        \end{assum}
        
        \begin{assum}[\it Moments]
        \label{assum:moment}
            For all $x$, the following moment condition is satisfied:
            \[
                \mathbb{E}_{Y \sim P} \|S(x; Y)\| < \infty, \quad \mathbb{E}_{Y \sim P} \|S'(x; Y)\| < \infty.
            \]
        \end{assum}
        
        \begin{assum}[\it Unique minimizer of the KL-divergence] 
        \label{assum:unique_minimizer}    
            The KL-divergence of the assumed model relative to the true data-generating distribution is finite and uniquely minimized at $x_0 \in \mathbb{R}^d$, i.e.:
            \[
                -\mathbb{E}_{Y \sim P}(\log f(Y; x_0)) = \inf_{x \in \mathbb{R}^d} -\mathbb{E}_{Y \sim P}(\log f(Y; x)) < \infty.
            \]
        \end{assum}

        \begin{assum}[\it Bernstein von-Mises Theorem]
        \label{assum:mle_consistent}
            There exists a sequence of estimators $\{\hat{x}_n\}_{n=1}^{\infty}$ and $N_0$ satisfying $\sum_{j=1}^n S(\hat{x}_n; Y_j) = \bs{0} \in \mathbb{R}^d$ for all $n > N_0$ such that, 
            \begin{enumerate}[label=(\roman*)]
                \item $\hat{x}_n \to x_0$ in $\Prob-$probability as $n \to \infty$.
                \item Define $I(x_0) := \mathbb{E}_{Y \sim P}[S'(x_0, Y)] = -\mathbb{E}_{Y \sim P}[\nabla^{\otimes2} \log f(Y; x_0)] > 0$. Then,
                \[
                    I^n := n^{-1}\sum_{j=1}^n s'(\hat{x}_n; y_j) \to I(x_0)
                \]
                in $\Prob-$probability as $n \to \infty$.
                \item Let $\mathcal{B}$ denote the collection of Borel sets on $\mathbb{R}^d$. Then,
                \[
                    \sup_{B \in \mathcal{B}}\left|\Pi^{(n)}(n^{-1/2}B + \hat{x}_n) - N_{\bs{0}, I(x_0)^{-1}}(B)\right| \to 0
                \]
                as $n \to \infty$ in $\Prob-$probability. Here, $N_{\theta, \Sigma}$ denotes the (multivariate) Gaussian distribution with mean vector $\theta$ and variance-covariance matrix $\Sigma$.
            \end{enumerate}
	\end{assum}

        Assumptions \ref{assum:smoothness} and \ref{assum:moment} are used throughout the paper and provide sufficient regularity in the model and on the data. Although, they can be relaxed to a certain degree (see Remark \ref{rem:general}), we do not pursue that generalization in this paper. Assumptions \ref{assum:unique_minimizer} and \ref{assum:mle_consistent} are used in the stationary phase to ensure the posterior converges at a certain rate to a point mass as $n$ goes to infinity. Part (iii) of Assumption \ref{assum:mle_consistent} further implies asymptotic normality of the posterior to allow for non-trivial limits in the stationary phase. Under regularity conditions, the sequence of MLE i.e. $\arg\max_{x \in \mathbb{R}^d}\sum_{j=1}^n \log f(y_j; x)$ satisfies the conditions in Assumption \ref{assum:mle_consistent}\citep{kleijn2012}. For our main results in Sections \ref{sec:transient} and \ref{sec:stationary}, we do not assume that $P \in \{F_{x}, x \in \mathbb{R}^d\}$, i.e. the model is well-specified. If indeed it does, $x_0$ in Assumption \ref{assum:unique_minimizer} is such that $P = F_{x_0}$.

\section{Fluid limits in the transient phase}
\label{sec:transient}
    This section establishes fluid limits for the Zig-Zag process in Bayesian inference. We consider the identifiable situation where, as more data becomes available, the posterior distribution contracts to a point mass. In this case, the gradient of the log-likelihood grows linearly in $n$ and, as a result, the switching intensities become proportionally large (see Proposition \ref{propo:rates}). This results in a limiting process that switches immediately to a `favourable' direction and asymptotically the Zig-Zag process behaves as the solution of an ordinary differential equation as we will detail below. Throughout this section, we assume the following about the sequence of reference points defined in \eqref{eq:reference_points}.

    \begin{assum}
    \label{assum:reference}
        There exists an $X^*$ defined on $\Omega$ such that $\mathbb{E}\|X^*\|^2 < \infty$, $X^*$ is independent of $\bs{Y}^{(\mathbb{N})}$, and for the sequence of reference points we have that $X^*_n \to X^*$, $\Prob-$almost surely. 
    \end{assum}    
    
    For each $n$, let $(Z^n_t)_{t \ge 0} = (X^n_t, V^n_t)_{t \ge 0}$ be a Zig-Zag process targeting $\Pi^{(n)}$ with switching rates $(\lambda^n_i)_{i=1}^d$ defined as in \eqref{eq:sub-sampling_rates}. For each $n$ and $i = 1, \dots, d$, define the sets
    \begin{equation}
    \label{eq:hni}
        H^n_i = \left\{x \in \mathbb{R}^d: \lambda^n_i(x, v) = 0 \ \forall \ v \in \{-1,1\}^d \right\}.
    \end{equation}
    Let $H^n = \cup_{i=1}^dH^n_i$. Since the switching rates are continuous, $(H^n)^c$ is open. Define for each $n$,
    \[
        \tau_n = \inf\{t \ge 0: X^n_t \in H^n \text{ or } X^n_{t^-} \in H^n\},
    \]
    where we set $\tau_n = \infty$ if $H^n = \emptyset$. For some cemetary state $\Delta$, we consider the sequence of stopped processes $\tilde{Z}^n_t$ defined as,
    \[
        \tilde{Z}^n_{t} = (\tilde{X}^n_t, \tilde{V}^n_t) = \begin{cases}
            (X^n_t, V^n_t), & t < \tau_n, \\
            \Delta, & t \ge \tau_n.
        \end{cases}
    \]
    The stopped process is again a PDMP that is sent to the cemetery state as soon as the position process enters $H^n$ and stays there. The corresponding semigroup can be appropriately modified \citep{davis1993markov}. However, we introduce the cemetery state only for the sake of rigour; it can be ignored for all practical purposes. 

    The following proposition is a consequence of the strong law of large numbers for $U-$statistics and applies to any choice of switching rates as defined in Section \ref{sec:intro}. We remind the reader that we omit the subscripts ``can", ``ss", and ``cv" here.  

    \begin{propo}
    \label{propo:rates}
        Suppose Assumptions \ref{assum:smoothness}, \ref{assum:moment}, and \ref{assum:reference} hold. Suppose the subsample size is either fixed i.e. for some $m \in \mathbb{N}, m(n) = m$ for all $n \ge m$ or increases to infinity i.e. $m(n) \to \infty$ as $n \to \infty$. There exists a continuous $\lambda:\Omega \times E \to [0, \infty)^d$ such that for all compact $K\subset \mathbb{R}^d$,
        \[
            \sup_{(x, v) \in K \times \mathcal{V}}\left|\frac{\lambda^n_i(x, v)}{n} - \lambda_i(x, v)\right| \xrightarrow{n \to \infty} 0, \quad \Prob-\text{almost surely}. 
        \]
    \end{propo}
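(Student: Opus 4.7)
The plan is to combine a pointwise strong law of large numbers (SLLN) for $U$-statistics with an equicontinuity argument on compacts, and then appeal to the finiteness of $\mathcal{V}$ to conclude uniform convergence on $K \times \mathcal{V}$. The starting point is the observation that
\[
\frac{\lambda^n_i(x, v)}{n} = \binom{n}{m}^{-1} \sum_{S \in \mathcal{S}_{(n, m)}} \left(v_i \cdot m^{-1}\sum_{j \in S} E^j_i(x)\right)_+
\]
is, up to additive terms of order $O(1/n)$ coming from the prior contribution, a symmetric $U$-statistic of degree $m$ in $Y_1, \dots, Y_n$ with kernel integrable under Assumption \ref{assum:moment}. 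For ZZ-CV the kernel additionally depends on $x^*_n$ through $E^j$; the term $n^{-1}\sum_k s^k(x^*_n)$ is then handled by a separate SLLN, using Assumption \ref{assum:reference} and the independence of $X^*$ from $\bs{Y}^{(\mathbb{N})}$ to pass $x^*_n \to X^*$. The canonical case $m = n$ reduces to the ordinary SLLN for $n^{-1}\nabla U_n(x)$.

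For fixed $(x, v) \in E$, Hoeffding's SLLN for $U$-statistics (together with the auxiliary SLLN above in the ZZ-CV case) then yields
\[
\frac{\lambda^n_i(x, v)}{n} \xrightarrow[n \to \infty]{\Prob\text{-a.s.}} \lambda_i(x, v),
\]
with an explicit expression. For ZZ-SS with fixed $m$, $\lambda_i(x, v) = \E[(v_i m^{-1}\sum_{k=1}^m S^{(k)}_i(x))_+]$ with iid copies $S^{(k)}(x)$ of $S(x; Y)$; for $m(n) \to \infty$ the inner empirical mean concentrates, giving $\lambda_i(x, v) = (v_i \E_{Y \sim P}[S_i(x; Y)])_+$; the corresponding ZZ-CV expressions are analogous and additionally involve $X^*$.

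To promote this pointwise convergence to uniform convergence on a compact $K \subset \mathbb{R}^d$, I would use $|(a)_+ - (b)_+| \le |a - b|$ and the symmetry of the subsample average to bound
\[
\left|\frac{\lambda^n_i(x, v)}{n} - \frac{\lambda^n_i(x', v)}{n}\right| \le n^{-1}\sum_{j=1}^n \left|E^j_i(x) - E^j_i(x')\right|,
\]
control $|s^j_i(x) - s^j_i(x')|$ via the mean value theorem, and extract from a further SLLN that $\sup_{x \in K} n^{-1}\sum_j \|\nabla s^j_i(x)\|$ is $\Prob$-a.s. bounded uniformly in $n$. This gives equicontinuity of $\{\lambda^n_i(\cdot, v)/n\}_n$ on $K$, uniformly in $v \in \mathcal{V}$. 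Combined with pointwise $\Prob$-a.s.\ convergence on a countable dense subset of $K$ and the finiteness of $\mathcal{V}$, this produces the claimed uniform convergence, and continuity of $\lambda$ follows as a uniform limit of continuous functions.

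The main obstacle is the equicontinuity step, since Assumption \ref{assum:smoothness} supplies only a one-sided Hessian bound $\nabla^{\otimes 2}s_i \le M' \bs{I}_d$ rather than a uniform operator-norm bound, so no direct Lipschitz bound on $\nabla s_i$ is available. The uniform-in-$n$ bound on $\sup_{x \in K} n^{-1}\sum_j \|\nabla s^j_i(x)\|$ must instead be extracted by combining the one-sided Hessian control with the integrability $\E\|S'(x; Y)\| < \infty$ and a covering argument on $K$. For ZZ-CV there is the additional technicality of transferring the SLLN through the random reference point, handled via the decomposition $n^{-1}\sum_k s^k(x^*_n) = n^{-1}\sum_k s^k(X^*) + n^{-1}\sum_k(s^k(x^*_n) - s^k(X^*))$ and the same gradient estimate.
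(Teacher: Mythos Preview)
Your approach is essentially the same as the paper's: pointwise convergence via the SLLN for $U$-statistics, uniform upgrade on compacts via smoothness/equicontinuity, and for ZZ-CV the replacement of $x^*_n$ by the limit $X^*$ through a Taylor-type bound on $n^{-1}\sum_j|E^j_i - G^j_i|$ with $G^j_i(x) = s^j_i(x) - s^j_i(X^*) + \E_Y S_i(X^*;Y)$. The paper is in fact less explicit than you are about the equicontinuity step, simply asserting that ``given the smoothness assumption, the convergence is uniform on all compact sets.''

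Two remarks. First, for the case $m(n)\to\infty$ you only say ``the inner empirical mean concentrates''; the paper makes this precise by bounding $|\lambda^n_i(x,v)/n - (v_i\,n^{-1}\sum_j s^j_i(x))_+|$ by the mean absolute deviation of the SRSWOR estimator of the population mean $\bar E(x)$, and then by its standard error $\sqrt{(n-m)/(nm)}\cdot\sqrt{(n-1)^{-1}\sum_j(E^j_i-\bar E)^2}$, which tends to zero uniformly on compacts whenever $m(n)\to\infty$. You should supply this step or an equivalent one, since a generic $U$-statistic SLLN does not directly cover kernels of growing degree. Second, your concern about the one-sided Hessian bound is well taken, but note that the paper itself uses the bound as if it were two-sided (e.g.\ writing $|s^j_i(x^*_n)-s^j_i(x^*)| \le \|\nabla s^j_i(x^*)\|\,\|x^*_n-x^*\| + M'\|x^*_n-x^*\|^2$), so in practice Assumption~\ref{assum:smoothness} should be read as an operator-norm bound $\|\nabla^{\otimes 2}s_i(x;y)\|\le M'$; under that reading your equicontinuity argument goes through directly without the covering workaround you sketch.
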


    \begin{proof}
        See Appendix.
    \end{proof}    

    For each $n$, define the sequence of drifts $b^n: \mathbb{R}^d \to [-1, 1]^d$ such that,
    \begin{equation}
    \label{eq:drift_sec1}
        b^n_i(x) := \begin{cases}
            \dfrac{\lambda^n_i(x, -\bs{1}_d) - \lambda^n_i(x, \bs{1}_d)}{\lambda^n_i(x, -\bs{1}_d) + \lambda^n_i(x, \bs{1}_d)}, & x \notin H^n_i, \\
            0, & \text{otherwise},
        \end{cases}
     \end{equation}
    for all $i = 1, \dots, d$. Then, $b^n$ is well-defined. Moreover, given \eqref{eq:sub-sampling_rates}, $b^n_i$ reduces to,
    \[
        b^n_i(x) := \begin{cases}
            \dfrac{-\sum_{j=1}^n S_i(x, y_j)}{\lambda^n_i(x, -\bs{1}_d) + \lambda^n_i(x, \bs{1}_d)}, & x \notin H^n_i, \\
            0, & \text{otherwise},
        \end{cases}
    \]
    where $S_i(x; y) = -\nabla_{x,i} \log f(y; x)$, for any choice of $E^j$ and $m$. The Jensen's inequality implies that $|b^n_i| \le 1$. However, observe that for the canonical Zig-Zag, $|b^n_i| = 1$ for all $x \notin H^n_i$, which is the optimal speed attainable by the Zig-Zag process.
    
    For each fixed $\omega \in \Omega$, let $\lambda$ be as in the Proposition \ref{propo:rates}. Define, for each $i = 1,\dots, d$,
    \[
        H_i = \left\{x \in \mathbb{R}^d: \lambda_i(x, v) = 0 \ \forall \ \in \{-1, 1\}^d\right\},
    \]
    and let $H = \cup_{i=1}^d H_i$. Note that the sets $H_i$ may vary with $\omega$ e.g. in ZZ-CV where the sum of switching rates depends on the limiting reference point. For each $\omega \in \Omega$, define the asymptotic drift $b(\omega, \cdot ): \mathbb{R}^d\setminus H(\omega) \to [-1, 1]^d$ as,
    \begin{equation}
    \label{eq:limiting_drift}
        b_i(\omega, x) = \frac{-\mathbb{E}_{Y\sim P} [S_i(x; Y)]}{\lambda_i(x, -\bs{1}_d) + \lambda_{i}(x, \bs{1}_d)}, \quad x \in H(\omega).
    \end{equation}
    The following proposition follows easily from the last one. The proof can be found in the Appendix.

    \begin{propo}
    \label{propo:drifts}
        Suppose Assumptions \ref{assum:smoothness}, \ref{assum:moment}, and \ref{assum:reference} hold. With $\Prob-$probability $1$, $b^n \to b$ uniformly over all compact subsets of $\mathbb{R}^d \setminus H$. Furthermore, for each such $\omega$, there exists a unique maximal solution $(X_t)_{t \le \tau}$ to the ODE,
        \begin{equation}
        \label{eq:zz_ode}
            \frac{dX_t}{dt} = b(\omega, X_t)
        \end{equation}
        starting from $X_0 \in H^c$ with $X_t \in H^c$ for all $t < \tau$.
    \end{propo}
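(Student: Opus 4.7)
The proposition has two parts: uniform convergence of $b^n$ to $b$ on compact subsets of $H^c$, and unique solvability of the ODE \eqref{eq:zz_ode}. For the first, I would fix an $\omega$ in the probability-one event supplied by Proposition \ref{propo:rates} and a compact $K \subset \mathbb{R}^d \setminus H(\omega)$. Since each $\lambda_i(\omega,\cdot,v)$ is continuous and $K$ is disjoint from $H$, the map $x \mapsto \lambda_i(x,-\bs{1}_d) + \lambda_i(x,\bs{1}_d)$ attains a strictly positive minimum $\delta_i > 0$ on $K$. By Proposition \ref{propo:rates}, for all large $n$ the scaled denominator $(\lambda^n_i(x,-\bs{1}_d) + \lambda^n_i(x,\bs{1}_d))/n$ exceeds $\delta_i/2$ uniformly on $K$; in particular $K \subset (H^n_i)^c$ eventually, so $b^n_i$ is given by the ratio formula of \eqref{eq:drift_sec1}. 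Dividing numerator and denominator of that ratio by $n$ and combining uniform convergence of each with the uniform lower bound on the denominator yields $\sup_{x\in K}|b^n_i(x) - b_i(\omega,x)| \to 0$.

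For the ODE part, continuity of $b$ on $H^c$ and the coordinate-wise bound $|b_i|\le 1$ are inherited from the corresponding properties of $\lambda_i$. To obtain uniqueness I would establish local Lipschitz regularity of $b$ on $H^c$, which reduces, given the denominator bound above, to the same property for each $\lambda_i$. The $U$-statistic representation of $\lambda_i$ emerging from Proposition \ref{propo:rates} expresses it as an expectation of a positive part of a functional of finitely many i.i.d.\ copies of $E^j_i$; since $z \mapsto z_{+}$ is $1$-Lipschitz, this reduces the Lipschitz bound on $\lambda_i$ to $P$-integrability of a local Lipschitz constant for $E^j_i$. The $\mathcal{C}^3$ smoothness and semi-concavity bound $\nabla^{\otimes 2}s_i \le M'\bs{I}_d$ of Assumption \ref{assum:smoothness}, combined with the moment condition of Assumption \ref{assum:moment}, supply such a constant on any compact set for both the ZZ-SS and ZZ-CV choices \eqref{eq:zzss_ej} and \eqref{eq:zzcv_ej}. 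With $b$ bounded and locally Lipschitz on the open set $H^c$, Picard--Lindel\"of gives a unique local solution to \eqref{eq:zz_ode} from any $X_0 \in H^c$, and the standard maximal-interval construction produces a unique maximal solution on $[0,\tau)$: boundedness of $b$ rules out finite-time blow-up, so the only way a solution can fail to extend is by approaching the boundary $H$.

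The main obstacle I anticipate is verifying local Lipschitz regularity of $\lambda_i$ in a manner that uniformly covers the fixed-$m$ and $m(n)\to\infty$ regimes of Proposition \ref{propo:rates}, and that accommodates both subsampling constructions \eqref{eq:zzss_ej} and \eqref{eq:zzcv_ej} (the ZZ-CV case is more delicate because $E^j$ depends on all $n$ data points through the reference-point correction). This is where the smoothness and moment assumptions must be chased through the explicit formulas; once that regularity is in hand, the Picard--Lindel\"of and maximal-interval extension steps are routine.
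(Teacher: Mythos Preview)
Your approach is correct and mirrors the paper's proof: fix $\omega$ in the almost-sure set of Proposition~\ref{propo:rates}, use continuity of $\lambda_i$ and compactness of $K\subset H^c$ to bound the denominator away from zero, combine this with the uniform convergence of numerator and denominator from Proposition~\ref{propo:rates} to get uniform convergence of the ratio, and then invoke local Lipschitz continuity of $b$ on $H^c$ for Picard--Lindel\"of. Your anticipated obstacle is milder than you suggest: it is the \emph{limiting} rate $\lambda_i$---given by the explicit expectation formulas \eqref{eq:limitrate1}--\eqref{eq:limitrate4} in the proof of Proposition~\ref{propo:rates}---whose local Lipschitz regularity you need, so there is no finite-$n$ dependence on all data points to track and no uniformity across the $m$-regimes to establish; each case is a direct application of the $1$-Lipschitz property of $(\cdot)_+$ together with Assumptions~\ref{assum:smoothness} and~\ref{assum:moment}.
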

    Extend the solution $(X_t)_{t \le \tau}$ to $(\tilde{X}_t)_{t \ge 0}$ as,
    \[
        \tilde{X}_{t} = \begin{cases}
            X_t, & t < \tau, \\
            \Delta, & t \ge \tau.
        \end{cases}
    \]

    Proposition \ref{propo:rates} implies that the switching rates in each coordinate are $O(n)$. This means that, on average, the $i-$th component of the velocity flips every $O(n^{-1})$ time. But since the position component has unit speed in each coordinate, the process only travels $O(n^{-1})$ distance between each direction flip. This leads to a multi-scale dynamics in the components of the Zig-Zag process. As $n$ goes to infinity, the number of switches per unit time goes to infinity, but the distance travelled between each switch becomes infinitesimally small. Consequently, on the usual time scale, the position component $(X^n_t)_{t \ge 0}$ appears to traverse a smooth path and is feasibly approximated by the solution to an ODE. 
    We now state the main result of this section.

    \begin{theorem}
    \label{thm:transient}
        Suppose Assumptions \ref{assum:smoothness}, \ref{assum:moment}, and \ref{assum:reference} hold. Suppose for all $n$, $X^n_0 \in (H^n)^c$, and $X^n_0 \to X_0$ in probability for some $X_0 \in H^c$. Then, $\tilde{X}^n_{t}$ converges weakly (in Skorohod topology) to $\tilde{X}_{t}$, $\Prob-$almost surely.
    \end{theorem}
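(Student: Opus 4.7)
The plan is a stochastic-averaging fluid-limit argument exploiting the separation of timescales: the position $X^n$ evolves at unit coordinate speed while $V^n$ switches at rate $O(n)$ by Proposition~\ref{propo:rates}, so $V^n$ quickly averages to the conditional mean $b^n(X^n)$ and $X^n$ approaches the ODE trajectory. Work $\Prob$-almost surely on the set where Propositions~\ref{propo:rates} and~\ref{propo:drifts} hold and $X^n_0 \to X_0$, and fix $T > 0$. First localise to a compact $K \subset H^c$ with $X_0 \in \mathrm{int}\,K$; continuity and strict positivity of $\lambda_i(x,+1)+\lambda_i(x,-1)$ on $K$ give, via Proposition~\ref{propo:rates}, a constant $c > 0$ and $n_0$ with $\mu^n_i(x) := \lambda^n_i(x,+1)+\lambda^n_i(x,-1) \ge c n$ for all $x \in K$, $i$, and $n \ge n_0$. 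Setting $\sigma_n := \inf\{t : X^n_t \notin K\}$, the paths $X^n_{\cdot\wedge\sigma_n}$ are $1$-Lipschitz in time and hence tight in $C([0,T],\mathbb{R}^d)$.

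The key step is the averaging bound
\begin{equation*}
\sup_{t \le \sigma_n \wedge T} \Bigl\| \int_0^t \bigl(V^n_s - b^n(X^n_s)\bigr)\,ds \Bigr\| \longrightarrow 0 \quad \text{in } \Prob\text{-probability}.
\end{equation*}
Two algebraic observations drive its proof. First, $\lambda^n_i(x,v)$ depends on $v$ only through $v_i$ by \eqref{eq:rates} and \eqref{eq:subsampling_rates}, and a direct expansion gives $2 v_i \lambda^n_i(x,v) = \mu^n_i(x)\bigl(v_i - b^n_i(x)\bigr)$. Second, the auxiliary function $\phi_i(x,v) := -v_i/\mu^n_i(x)$ is $O(1/n)$ on $K$ and its PDMP generator satisfies $\mathcal{L}^n \phi_i(x,v) = v \cdot \nabla_x \phi_i(x,v) + (v_i - b^n_i(x))$. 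Dynkin's formula then yields, for $t \le \sigma_n \wedge T$,
\begin{equation*}
\int_0^t (V^n_i(s) - b^n_i(X^n_s))\,ds = \phi_i(X^n_t,V^n_t) - \phi_i(X^n_0,V^n_0) - \int_0^t V^n_s\cdot\nabla_x\phi_i(X^n_s,V^n_s)\,ds - M^n_i(t),
\end{equation*}
where the first two right-hand terms are $O(1/n)$ (using that $\mu^n_i$ is locally Lipschitz in $x$ with constant $O(n)$ under Assumption~\ref{assum:smoothness}), and $M^n_i$ has predictable quadratic variation bounded by $\int_0^t 4\lambda^n_i/(\mu^n_i)^2\,ds \le 4T/(cn)$, so Doob's $L^2$ inequality finishes the bound uniformly in $t$.

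For identification, take any subsequential limit $X^*$ of $(X^n_{\cdot\wedge\sigma_n})$ in $C([0,T],\mathbb{R}^d)$. Combining $X^n_t = X^n_0 + \int_0^t V^n_s\,ds$, the averaging bound, continuity of $b$, and uniform convergence $b^n \to b$ on $K$ from Proposition~\ref{propo:drifts} forces $X^*_t = X_0 + \int_0^t b(X^*_s)\,ds$ up to $X^*$'s own exit time from $K$. Uniqueness of maximal ODE solutions in $H^c$ (Proposition~\ref{propo:drifts}) identifies $X^* \equiv X$ on $[0, \sigma_K \wedge T]$ and yields $\sigma_n \to \sigma_K$ in $\Prob$-probability. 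To upgrade to Skorohod convergence of $\tilde X^n$ on $\mathbb{R}^d \cup \{\Delta\}$, exhaust $H^c$ by an increasing family of compacts $K_m$ with $X_0 \in \mathrm{int}\,K_1$; the exit times of $X$ from $K_m$ increase to $\tau$, and since $\tilde X^n$ enters the cemetery only after leaving a neighbourhood of $H$, the transition of $\tilde X$ to $\Delta$ at $\tau$ is matched in the Skorohod sense.

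The main obstacle is the averaging estimate of the second paragraph. It depends on both the uniform lower bound $\mu^n_i \ge cn$ on $K \subset H^c$ (precisely why the restriction to compacts of $H^c$ and the cemetery-state construction are needed) and on enough $x$-regularity of $\mu^n_i$ to apply Dynkin's formula to $\phi_i$. The former is free from Proposition~\ref{propo:rates}; the latter holds since $\mu^n_i$ is locally Lipschitz under Assumption~\ref{assum:smoothness} given the structures in \eqref{eq:rates} and \eqref{eq:subsampling_rates}, with a standard mollification argument as a backup in the canonical case where non-differentiability at $\{\nabla_i U_n = 0\}$ is confined to a Lebesgue-null set that trajectories visit only transiently.
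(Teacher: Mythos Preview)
Your proposal is correct and takes a genuinely different route from the paper's proof. The paper establishes weak convergence via generator convergence in the Ethier--Kurtz framework: it discretises time on a mesh $n^{-\delta}$, $1/2<\delta<1$, and shows the discrete-time generator converges uniformly to $\mathcal{L}f = b\cdot\nabla f$ on compactly supported $f$ vanishing near $H$. The core of that argument is a path-coupling of $V^n$ with an auxiliary velocity $\bar V^n$ whose rates are frozen at the initial point $x$, bounding the decoupling probability on $[0,n^{-\delta}]$ and then using the explicit two-state Markov semigroup to compute the time average of $\bar V^n$; weak convergence (and the cemetery absorption) is then read off from Corollary~8.7(f), Chapter~4 of \cite{ethier2009} through the $f_\Delta$ formalism. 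Your approach instead solves the Poisson equation for the fast component directly: the corrector $\phi_i=-v_i/\mu^n_i$ plus Dynkin's formula turns the averaging error into $O(1/n)$ boundary terms and a martingale with $O(1/n)$ quadratic variation, after which tightness and ODE uniqueness identify the limit. This is more elementary---no coupling, no time-discretisation---and exposes the $1/n$ rate of the averaging error more transparently. The trade-off is that the paper's test-function formulation handles the absorption at $H^n$ and the passage $H^n\to H$ for free, whereas your final step (matching the jump to $\Delta$ via compact exhaustion and arguing $\tau_n\to\tau$) needs a little more said about how entry of $X^n$ into $H^n$ relates to the ODE hitting $\partial H^c$, which you have only sketched.
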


    \begin{proof}
        See Appendix.
    \end{proof}

    \begin{remark}[Proof Strategy]
    \label{rem:general}
        Following the approach of \cite{neal_optimal_2012}, we analyse the behaviour of the process averaged over $n^{\delta}$ time. The $0 < \delta < 1$ is chosen small enough to freeze the position process during $(t, t+n^{\delta})$, i.e. $X^n_{t + n^{\delta}} \approx X^n_t$. And large enough so that at time $t + n^{\delta}$, the $V^n_t$ process converges to its invariant distribution conditional on $X^n_t$. The effective drift of $X^n_t$ is then given by the expected velocity with respect to this conditional invariant distribution. This averaging method is the one presented in \cite{pavliotis_multiscale_2008} (and recently applied in \cite{bierkens_scaling_2025}) based on first-order expansion of the generators of Markov processes. However, we employ a different proof using explicit couplings of the fast process as in \cite{Luczak_Norris_2013, neal_optimal_2012}. Furthermore, as one may observe from the proof, the ODE approximation in Theorem \ref{thm:transient} is not special to the Bayesian settings. Given sufficient smoothness of the switching rates $\lambda^n_i$, similar holds true as long as $\lambda^n_i = O(n^{\beta})$ and the position process moves at a speed $n^{\alpha}$ for some $\alpha < \beta$. 
    \end{remark}
    
    In the large $n$ limit, the asympotitc drift $b$ in \eqref{eq:limiting_drift} entirely characterizes the dynamics of the Zig-Zag trajectories. To understand the effect of different sub-sampling schemes, it is sufficient to understand how the denominator $\lambda_i(x, -\bs{1}_d) + \lambda_i(x, \bs{1}_d)$ behaves in each situation. For the canonical Zig-Zag,
    \begin{equation}
    \label{eq:can_drift}
        \lambda_{\text{can}, i}(x, -\bs{1}_d) + \lambda_{\text{can}, i}(x, \bs{1}_d) = \left|\mathbb{E}_{Y \sim P}[S_i(x; Y)]\right|
    \end{equation}
    for all $x$ such that the expectation is non-zero. Once again, this corresponds to the optimal speed achievable by the Zig-Zag process i.e. $\pm 1$. Now suppose the subsample size $m$ is fixed. Then,
    \begin{equation}
    \label{eq:ss_drift}
        \lambda_{\text{ss}, i}(x, -\bs{1}_d) + \lambda_{\text{ss}, i}(x, \bs{1}_d) = \mathbb{E}_{Y_1, \dots, Y_m}\left[\left|m^{-1}\sum_{j=1}^mS_i(x; Y_j)\right|\right], \quad x \in \mathbb{R}^d,
    \end{equation}
    where $Y_1, \dots, Y_m \overset{iid}{\sim} P$. Similarly, for Zig-Zag with control variates the denominator is given, when the expectation is non-zero, by
    \begin{align*}
        &\lambda_{\text{cv}, i}(x, -\bs{1}_d) + \lambda_{\text{cv}, i}(x, \bs{1}_d) \\
        &\quad \quad \quad = \mathbb{E}_{Y_1, \dots, Y_m}\left[\left|m^{-1}\sum_{j=1}^m S_i(x, Y_j) - S_i(X^*, Y_j) + \mathbb{E}_{Y}S_i(X^*, Y)\right|\right], \numberthis \label{eq:cv_drift}
    \end{align*}
    where $Y_1, \dots, Y_m \overset{iid}{\sim} P$ and $X^*$ is as in Assumption \ref{assum:reference}. 
    
    \begin{remark}
        Note that, as a consequence of the above, the asymptotic drift for ZZ-CV is random in terms of $X^*$. However, the procedure for choosing reference points is fixed before implementing the algorithm. Hence, the value of $X^*$ is known as soon as the data is observed.
    \end{remark}

    \begin{remark}
    \label{rem:optimal_m}
        For any $m' > m$, it follows from Jensen's inequality that $b_i(m', x) > b_i(m, x)$ for all $x$ and $i = 1, \dots, d$ irrespective of the sampling scheme. When $m = m(n) \to \infty$ as $n \to \infty$, we show in the proof of Proposition \ref{propo:drifts} that $b_i(m, x) \to b_{\text{can}, i}(x)$ for both ZZ-SS and ZZ-CV. Thus, larger subsamples lead to faster convergence. However, it can be shown that a subsample of size $2m$ does not produce a two times increase in the drift. However, a subsample of size $2m$ incurs twice the computational cost at each proposed switching time. Thus it seems optimal to select a subsample of size $1$. Although when such computations can be parallelised, one might as well use a larger $m$.
    \end{remark}

    \begin{remark}
        The expression for $\lambda$ differs for different sequences of switching rates but by the definition of $H(\omega)$, the denominator in \eqref{eq:limiting_drift} is always positive. And so the limiting flow drifts in the direction of decreasing $\mathbb{E}_{Y\sim P} [-\log f(x; Y)]$ i.e. decreasing KL-divergence between $P$ and the model $\{F_x, x \in \mathbb{R}^d\}$. When Assumption \ref{assum:unique_minimizer} is satisfied, this would imply that the limiting flow drifts towards $x_0$. This is expected behaviour as the posterior mass concentrates around $x_0$ as $n$ goes to infinity and the Zig-Zag process tries to find its way towards stationarity. The denominator in \eqref{eq:limiting_drift} is the damping factor that quantifies the loss in speed due to sub-sampling. For canonical Zig-Zag, \eqref{eq:can_drift} implies that the corresponding ODE has the optimal drift i.e. $+1$ or $-1$ depending on the relative location of $x_0$. For ZZ-SS and ZZ-CV in general, $\lambda_i(x, -\bs{1}_d) + \lambda_{i}(x, \bs{1}_d) \ge |\mathbb{E}_{Y \sim P}[S_i(x; Y)]|$. Although in some cases, ZZ-CV achieves optimal speed as illustrated in Section \ref{sec:examples}.   
    \end{remark}

    Theorem \ref{thm:transient} establishes smooth approximation for the Zig-Zag trajectories until they enter $H^n$. This is not a limitation in approximation due to the Markov property; the fluid limits apply as soon as the process leaves $H^n$. The sets $H^n_i$ take different forms for different switching rates. For canonical Zig-Zag, each $H^n_i$ is the hypersurface $H^n_i = \{x \in \mathbb{R}^d: \sum_{j=1}^n s_i^j(x) = 0\}$. For ZZ-SS, each $H^n_i = \emptyset$ since the sum is always positive. For ZZ-CV with reference point $x^*_n$, the sets $H^n_i$ can be either as small as the empty sets in the ZZ-SS case e.g. if the reference point is not close to any of the conditional modes. Or they can be as large as the hypersurfaces in the canonical case e.g. when the model density is Gaussian in which case the ZZ-CV switching rates are equivalent to the canonical Zig-Zag irrespective of the reference point $x^*_n$. However, note that since the canonical rates are the smallest, for any $\lambda^n_i$ of the form \eqref{eq:sub-sampling_rates},
    \[
        \lambda^n_i(x, v) + \lambda^n_i(x, -v) \ge \lambda^n_{\text{can}, i}(x, v) + \lambda^n_{\text{can}, i}(x, -v),
    \]
    for all $(x, v) \in E$. And so, for any choice of the switching rates,
    \[
        H^n_i \subseteq \left\{x \in \mathbb{R}^d: \sum_{j=1}^n s^j_i(x) = 0\right\}.
    \]

    \begin{figure}[t]
    \centering
    \begin{minipage}{.49\linewidth}
        \centering
        \includegraphics[height = 8.6cm, width = \linewidth]{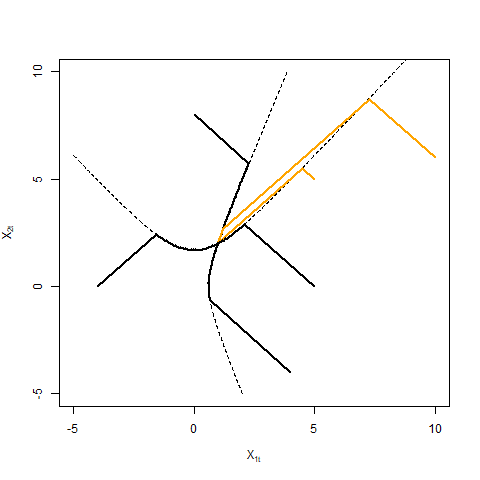}
        \captionof{figure}{Trajectories of the canonical Zig-Zag process in $2$ dimensions with different starting values. The dashed lines represent the hypersurfaces $H^n_i, \ i = 1, 2$. Paths indicated in black stick to one of the hypersurfaces as soon as they hit it whereas those in orange don't. Colours were coded after the paths were observed.}
        \label{fig:snaps}
    \end{minipage}\hspace{5pt}
    \begin{minipage}{.49\linewidth}
        \centering
        \begin{subfigure}{\linewidth}
        \centering
            \includegraphics[width = \linewidth, height = 4.35cm]{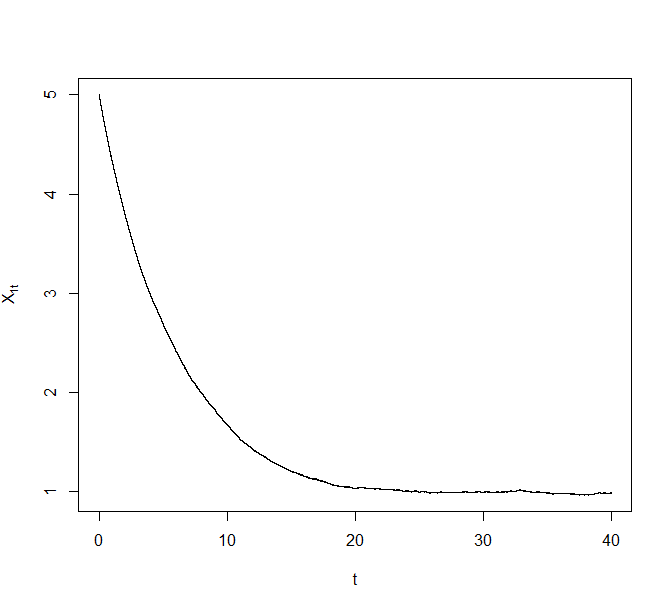}
        \end{subfigure}\\
        \begin{subfigure}{\linewidth}
        \centering
            \includegraphics[width = \linewidth, height = 4.35cm]{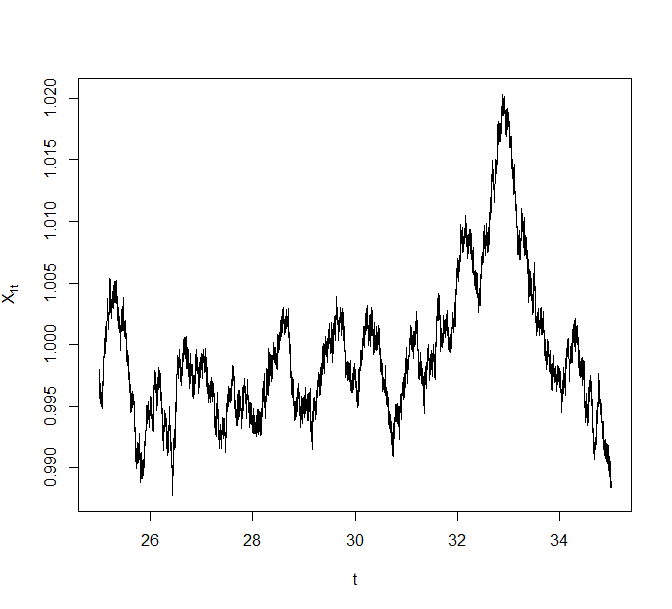}
        \end{subfigure}
        \captionof{figure}{First coordinate of the trajectory of the ZZ-SS algorithm targeting posterior for the Bayesian logistic regression model. The underlying process behaves like a diffusion around the true parameter value $1$.}
        \label{fig:different_phases}
        \begin{tikzpicture}[overlay]
             \draw (1.6,8.3) ellipse (0.8cm and 0.2cm);
             \draw (1.6,8.1) -- (1.0, 6.8);
        \end{tikzpicture}
    \end{minipage}
    \end{figure}

    When $X^n_t$ hits $H^n_i$ for some $i$, the total switching rate, and hence the momentum, in the $i-$th coordinate, becomes zero. However, other coordinates may still carry momentum and affect the overall dynamics of the process. This leads to interesting behaviour: the process either jumps back out of $H^n$ with a certain velocity or remains stuck to $H^n_i$ for a random amount of time (see Figure \ref{fig:snaps}). It is not straightforward to extend the proof of Theorem \ref{thm:transient} as the process behaves $(d-1)-$dimensional on these sets. Consequently, the stability analysis of the sets $H^n_i$ is beyond the scope of this paper.

\section{The stationary phase}
\label{sec:stationary}
    As noted in Section \ref{sec:intro}, the fluid limits only capture deviations in the Zig-Zag trajectories that are big enough on the usual space scale. Suppose Assumptions \ref{assum:unique_minimizer} and \ref{assum:mle_consistent} hold. As $n$ becomes large the posterior concentrates around $\hat{x}_n$. Consequently, the posterior mass inside the interval $(\hat{x}_n - \epsilon, \hat{x}_n + \epsilon)$ goes to $1$ for any $\epsilon > 0$. This restricts the movement of the Zig-Zag process to regions close to $\hat{x}_n$ (see Figure \ref{fig:different_phases}). Once the Zig-Zag process targeting $\Pi^{(n)}$ hits $\hat{x}_n$, it is hence reasonable to expect that the process stays within $\epsilon-$distance of $\hat{x}_n$ with high probability. The following result illustrates this in one dimension under the additional assumption of unimodality of the posterior. The proof can be found in the Appendix.

    \begin{theorem}
    \label{thm:canonical}
        Suppose Assumptions \ref{assum:smoothness} - \ref{assum:mle_consistent} all hold. Suppose there exists an $n_0$ such that for all $n \ge n_0$ and for all $s > 0$, $v\cdot \sum_{j=1}^ns^j(\hat{x}_n + vs) > 0$. Let $Z^n_t$ be a $1-$dimensional canonical Zig-Zag process targeting $\Pi^{(n)}$. Suppose $X^n_0 = \hat{x}_n$. For all $\epsilon > 0$ and $t \ge 0$,
        \[
            \lim_{n \to \infty}\Prob\left(\sup_{s \le t}|X^n_s - \hat{x}_n| > \epsilon\right) = 0.
        \]
    \end{theorem}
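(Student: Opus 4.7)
The plan is to exploit the unimodality assumption so as to decompose the trajectory into an alternating sequence of conditionally independent triangular excursions around $\hat{x}_n$, and then to combine an explicit large-deviation bound on the width of a single triangle with a polynomial-in-$n$ bound on the number of triangles initiated by time $t$.

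First, the structural observation: the hypothesis $v \cdot \sum_{j=1}^n s^j(\hat{x}_n + vs) > 0$ for $s > 0$ is exactly $v\, \nabla U_n(\hat{x}_n + vs) > 0$, so the canonical rate $\lambda^n_{\text{can}}(x, v) = (v \nabla U_n(x))_+$ vanishes at $(\hat{x}_n + s, -1)$ and at $(\hat{x}_n - s, +1)$ for every $s > 0$. Consequently, once the process has flipped to $v = -1$ after moving right it travels deterministically back to $\hat{x}_n$ before it can flip again, and symmetrically on the left. The trajectory thus decomposes into triangles with half-widths $T_1^{(+)}, T_1^{(-)}, T_2^{(+)}, \ldots$, conditionally independent given $\bs{Y}^{(n)}$ by the strong Markov property applied at successive crossings of $\hat{x}_n$, and $\sup_{s \le t}|X^n_s - \hat{x}_n|$ equals the largest half-width among all triangles completed or in progress by time $t$.

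Next I would tail-bound a single half-width. Starting from $(\hat{x}_n, +1)$, the first flip time $T^{(+)}$ is the first arrival of an inhomogeneous Poisson process of intensity $u \mapsto \sum_{j=1}^n s^j(\hat{x}_n + u)$, so
\[
    \Prob\bigl(T^{(+)} > \epsilon \,\big|\, \bs{Y}^{(n)}\bigr) = \exp\Bigl(-\int_0^\epsilon \sum_{j=1}^n s^j(\hat{x}_n + u)\, du\Bigr) = \exp\bigl(-[U_n(\hat{x}_n + \epsilon) - U_n(\hat{x}_n)]\bigr),
\]
and similarly for $T^{(-)}$. A uniform law of large numbers combined with the consistency $\hat{x}_n \to x_0$ (Assumption \ref{assum:mle_consistent}(i)) yields
\[
    n^{-1}\bigl[U_n(\hat{x}_n + \epsilon) - U_n(\hat{x}_n)\bigr] \xrightarrow{\Prob} \mathbb{E}_{Y \sim P}\bigl[\log f(Y; x_0) - \log f(Y; x_0 + \epsilon)\bigr] > 0,
\]
the strict positivity following from the uniqueness of the KL-minimizer (Assumption \ref{assum:unique_minimizer}). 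On a sequence of $\Prob$-events of probability tending to one, the conditional tail $p_n := \Prob(T^{(\pm)} > \epsilon \mid \bs{Y}^{(n)})$ is therefore bounded above by $e^{-cn}$ for some $c > 0$.

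Finally, let $\tau := \inf\{s : |X^n_s - \hat{x}_n| > \epsilon\}$ and let $N_{\tau \wedge t}$ denote the number of triangles initiated before $\tau \wedge t$. A Poisson domination argument based on $\lambda^n_{\text{can}}(x,v) \le \bar{\lambda}_n := \sup_{|x - \hat{x}_n| \le \epsilon}|\nabla U_n(x)|$, together with a Taylor expansion around $\hat{x}_n$ that exploits $\nabla U_n(\hat{x}_n) = 0$ and Assumption \ref{assum:mle_consistent}(ii), gives $\bar{\lambda}_n = O(n)$ on a high-probability data event, whence $\mathbb{E}[N_{\tau \wedge t} \mid \bs{Y}^{(n)}] \le \bar{\lambda}_n t = O(n)$. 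Taking $M = n^2$ and applying a union bound over the events $\{T_i > \epsilon\}_{i \le M}$,
\[
    \Prob\Bigl(\sup_{s \le t}|X^n_s - \hat{x}_n| > \epsilon\Bigr) = \Prob(\tau \le t) \le M\, \mathbb{E}[p_n] + \Prob(N_{\tau \wedge t} > M) + o(1) = o(1),
\]
since $M\, \mathbb{E}[p_n] = O(n^2 e^{-cn}) \to 0$ and $\Prob(N_{\tau \wedge t} > M) = O(1/n)$ by Markov. The main obstacle is threading the in-probability convergence from the LLN/BvM setup through the pathwise randomness of Zig-Zag: the entire argument must be carried out on a sequence of high-probability data events on which both $n^{-1}[U_n(\hat{x}_n + \epsilon) - U_n(\hat{x}_n)]$ is bounded below by a positive constant and $\bar{\lambda}_n = O(n)$, with all conditional Poisson and union-bound estimates performed on that event. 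A subsidiary but essential point is that $\lambda^n_{\text{can}}(\hat{x}_n, v) = 0$ because $\hat{x}_n$ is an MLE, so the process crosses $\hat{x}_n$ deterministically on each cycle; this is what makes the strong-Markov decomposition into i.i.d.\ half-widths work cleanly and must be spelled out explicitly.
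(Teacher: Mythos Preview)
Your argument is correct and shares the paper's key structural insight—decomposing the trajectory into conditionally i.i.d.\ half-excursions via the strong Markov property at successive crossings of $\hat{x}_n$—but diverges from the paper's proof at both subsequent steps. For the single-excursion tail $p_n = \Prob(T^{(\pm)} > \epsilon \mid \bs{Y}^{(n)})$, you use a uniform LLN together with KL uniqueness (Assumption~\ref{assum:unique_minimizer}) to obtain the exponential bound $p_n \le e^{-cn}$; the paper instead rewrites $p_n$ as the posterior density ratio $\pi^{(n)}(\hat{x}_n \pm \epsilon)/\pi^{(n)}(\hat{x}_n)$ and invokes Bernstein--von Mises (Assumption~\ref{assum:mle_consistent}(iii)) plus monotonicity of $\pi^{(n)}$, obtaining only $p_n = o(n^{-1/2})$. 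For the count, you use crude Poisson domination via $\bar\lambda_n = O(n)$, whereas the paper gets the sharper $\mathbb{E}[N(t)+1] = O(n^{1/2})$ through a renewal-theoretic truncation argument (showing $\Prob(Z > n^{-1/2})$ stays bounded away from zero via a second-order Taylor expansion of $U_n$ at $\hat{x}_n$). The paper then combines count and tail via Wald's identity rather than your split-at-$M$ plus Markov. Your route is more elementary—in fact it does not use the BvM conclusion at all, only consistency and KL uniqueness—while the paper's route ties the result explicitly to the $\sqrt{n}$ posterior-contraction scale that drives the rest of Section~\ref{sec:stationary}. One minor point: your final display $M\,\mathbb{E}[p_n] = O(n^2 e^{-cn})$ is valid only after restricting to the high-probability data event you describe, so you must absorb its complement into the $o(1)$ term \emph{before} multiplying by $M$, exactly as you flag in your closing remark.
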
  

    A sufficient condition for the assumption in Theorem \ref{thm:canonical} to hold is when $f(y;x)$ is log-concave as a function of $x$ for any $y$ (where $f(y;x)$ is the data generating density for a given parameter $x$), and $\hat x_n$ satisfying $\sum_j s^j (\hat x_n) = 0$ exists. Since the sequence $\hat{x}_n$ converges to $x_0$ in probability, Theorem \ref{thm:canonical} means that once it gets close to $x_0$, the canonical Zig-Zag stays close to $x_0$ as $n \to \infty$, thus implying a trivial fluid limit. To obtain a non-trivial limit in stationarity the process needs to be analyzed on its natural scale, which depends on the rate of the posterior contraction. Consider the reparameterization, 
    \begin{equation}
    \label{eq:rescaling}
        \xi(x) := n^{1/2}(x - \hat{x}_n), \quad \quad x(\xi) = \hat{x}_n + n^{-1/2}\xi.
    \end{equation}
    The posterior distribution $\Pi^{(n)}$ in terms of the $\xi-$parameter converges to a multivariate Gaussian distribution. Under this reparameterization, it is natural to expect that the Zig-Zag process converges to a non-trivial stochastic process targeting the asymptotic Gaussian distribution. 

    In the $\xi$ coordinate, the Zig-Zag process has a speed $n^{1/2}$. However, the switching rates for ZZ-SS and ZZ-CV in terms of $\xi$ may differ in magnitudes and hence require separate analysis. 

    \subsection{Zig-Zag with sub-sampling (without control-variates)}
        We first consider ZZ-SS. The $i$-th switching rate in the rescaled parameter is written as,
	    \begin{equation}
	        \lambda^n_{\text{ss}, i}(\xi, v) = \frac{n}{m|\mathcal{S}_{(n, m)}|}\sum_{S \in \mathcal{S}_{(m, n)}} \left(v_i \cdot \sum_{j \in S} s^j_i(\hat{x}_n + n^{-1/2}\xi)\right)_{+} .
	    \end{equation}
        For large $n$, the above quantity is $O(n)$. In $(\xi, v)-$coordinate, the velocity component still oscillates at rate $O(n)$ while the position component moves with speed $n^{1/2}$. This is, once again, a situation of multi-scale dynamics as in the last section. It is tempting to scale down the time by a factor $n^{-1/2}$ and do the averaging as in Theorem \ref{thm:transient}. But consider,
	    \begin{align*}
	        \lambda^n_{\text{ss}, i}(\xi, v) - \lambda^n_{\text{ss}, i}(\xi, -v) &= v_i \left(\sum_{j=1}^n s^j_i(\hat{x}_n + n^{-1/2}\xi)\right)
	        &\approx n^{-1/2}v_i\left(\xi \cdot \sum_{j=1}^n \nabla s^j_i(\hat{x}_n)\right). 
	    \end{align*}
        Thus, the difference in the rates in opposite directions is $O(n^{1/2})$ while the sum remains $O(n)$. As a result, the drift $b^n$ goes to $0$ and the averaging leads to a trivial fluid limit (see the proof of Theorem \ref{theo:zzss_stat_fixed} in the Appendix). However, in the $\xi$ parameter, the above observations imply that the infinitesimal drift and the infinitesimal variance in the position component is $O(1)$. This suggests that the process is moving on a diffusive scale. 
	    
	    Let $m(n) = m$ be fixed for some $m \in \mathbb{N}$. For each $n > m$, let $Z^n_t = (X^n_t, V^n_t)_{t \ge 0}$ be a ZZ-SS process targeting the posterior $\Pi^{(n)}$ with fixed sub-sample size $m$. For each $n$, define $(\xi^n_t)_{t \ge 0}$ by
        \begin{equation}
        \label{eq:xint}
            \xi^n_t = n^{1/2}(X^n_t - \hat{x}_n); \quad t \ge 0.
        \end{equation}
        We have the following result. 
    
        \begin{theorem}
        \label{theo:zzss_stat_fixed}
            Suppose assumptions \ref{assum:smoothness} - \ref{assum:mle_consistent} hold. Consider the process $(\xi^n_t)_{t \ge 0}$ with initial condition $\xi^n_0 \sim \nu^n$ for all $n$. Suppose $\nu^n \to \nu$ weakly and that the Ornstein-Uhlenbeck SDE, 
            \begin{equation}
            \label{eq:limiting_sde}
                d\xi_t = -\frac{A\cdot I(x_0)}{2}\ \xi_t dt + A^{1/2}\ dB_t,
            \end{equation}
            with initial condition $\xi_0 \sim \nu$ has a unique weak solution for $t \ge 0$. Here, $x_0$ is as in Assumption \ref{assum:unique_minimizer} and $A$ is a $d \times d$ diagonal matrix with
            \[
                A_{ii} = \frac{2}{\mathbb{E}\left[\left|m^{-1}\sum_{j=1}^m S_i(x_0; Y_j)\right|\right]}, \ Y_1, \dots, Y_m \overset{\text{iid}}{\sim} P, \quad i = 1, \dots, d.
            \]
            Then, as $n \to \infty$, $(\xi^n_t)_{t \ge 0}$ converges weakly (in Skorohod topology) in $\Prob-$probability to the solution $(\xi_t)_{t \ge 0}$ of \eqref{eq:limiting_sde}.
        \end{theorem}

        \begin{proof}
            See Appendix.
        \end{proof}
	
        \begin{remark}
            The limiting diffusion is invariant with respect to $N(0, I^{-1}(x_0))$, i.e. the limiting Gaussian distribution in Bernstein von-Mises theorem. Moreover, The matrix $A$ in \ref{eq:limiting_sde} is the damping factor quantifying the loss in mixing due to sub-sampling.
	\end{remark}

        \begin{remark}
            When the model is well-specified i.e $P = F_{x_0}$, $\mathbb{E}S_i(x_0, Y) = 0$. By Jensen's inequality,
	    \[
	        m^{-1}\mathbb{E}\left[\left|S_i(x_0; Y_1)\right|\right] \le \mathbb{E}\left[\left|m^{-1}\sum_{j=1}^m S_i(x_0; Y_j)\right|\right] \le \mathbb{E}\left[\left|S_i(x_0; Y_1)\right|\right].
	    \]
            Thus, similar to Remark \ref{rem:optimal_m}, using a batch size of $m (> 1)$ is not $m$ times better than using a batch size of $1$. In terms of the total computational cost, it is hence optimal to choose $m = 1$.
        \end{remark}	    

       For any two signed measures $\mu$, $\nu$ on $\mathbb{R}^d$ define the $\|\dots\|_{\text{KR}}$ distance \citep{roberts_complexity_2016} as,
        \begin{equation*}
            \|\mu - \nu \|_{\text{KR}} = \sup_{f \in \text{Lip}_1^1}|\mu(f) - \nu(f)|,
        \end{equation*}
        where,
        \[
            \text{Lip}_1^1 = \{f: \mathbb{R}^d \to \mathbb{R}, |f(x) - f(y)| \le \|x - y\| \text{ for all } x, y \in \mathbb{R}^d, \ |f| \le 1 \}.
        \]
        Let $\Pi^{(n)}_{\sharp \xi}$ denote the push-forward of $\Pi^{(n)}$ under the rescaling \eqref{eq:rescaling}. The following Corollary is a direct consequence of Theorem 1 from \cite{roberts_complexity_2016}.

        \begin{corollary}
        \label{corr:complexity_ss}
            Let $(\xi^n_t)_{t \ge 0}$ be as in Theorem \ref{theo:zzss_stat_fixed} and suppose the technical assumptions are satisfied. Then, for any $\epsilon > 0$, there are $\delta > 0$, $T < \infty$, and $N < \infty$, such that,
            \[
                \mathbb{P}\left(\mathbb{E}_{\xi^n_0 \sim \Pi^{(n)}_{\sharp \xi}}\left\|\mathcal{L}(\xi^n_t) - \Pi^{(n)}_{\sharp \xi}\right\|_{\text{KR}} < \epsilon \text{ for all } t \ge T, n \ge N\right) > 1- \delta.
            \]
        \end{corollary}

        \begin{proof}
            Part (iii) of Assumption \ref{assum:mle_consistent} implies $\Pi^{(n)}_{\sharp \xi} \to N(0, I^{-1}(x_0))$ weakly in $\mathbb{P}-$probability. The result then follows by combining this and Theorem \ref{theo:zzcv_stat_fixed} with a slight modification of Theorem 1 of \cite{roberts_complexity_2016}.
        \end{proof}

    \subsection{Zig-Zag with Control Variates}
    \label{sec:zzcv_stat}
        Next, we consider the ZZ-CV process in the $(\xi, v)-$coordinate. Let $(X_n^*)_{n=1}^{\infty} = (T_n \circ \bs{Y}^{(n)})_{n = 1}^{\infty}$ be a random sequence of reference points in $(\mathbb{R}^d)^{\mathbb{N}}$ as used in ZZ-CV. Define $\xi^*_n = \xi(X^*_n)$ for all $n$ i.e. $\xi^*_n = n^{1/2}(X^*_n - \hat{x}_n)$. For any observed value $x^*_n$ of $X^*_n$, the $i$-th switching rate in $\xi$ parameter is given by
        \begin{equation}
        \label{eq:cv_rate_xi}
            \lambda^n_{\text{cv}, i}(\xi, v) = \frac{n}{m|\mathcal{S}_{(n,m)}|}\sum_{S \in \mathcal{S}_{(n,m)}}\left(v_i \cdot \sum_{j \in S}E_i^j(\xi)\right)_{+},
        \end{equation}
        where
        \begin{align*}
            E_i^j(\xi) = E_i^j(x(\xi)) &= s^j_i(\hat{x}_n + n^{-1/2}\xi) - s^j_i(\hat{x}_n + n^{-1/2}\xi_n^*) + n^{-1}\sum_{k=1}^n s^k_i(\hat{x}_n + n^{-1/2}\xi_n^*) \\
            &\approx n^{-1/2}(\xi^*_n - \xi) \cdot \nabla s^j_i(\hat{x}_n) + n^{-1}\sum_{k=1}^n s^k_i(\hat{x}_n + n^{-1/2}\xi_n^*)
        \end{align*}
        The magnitude of the switching rates depends explicitly on the sequence of reference points and their distance to $\hat{x}_n$. A reasonable choice for $T_n$ is to ensure $X^*_n$ is sufficiently close to $\hat{x}_n$. Suppose that the reference points are chosen such that $\|X^*_n - \hat{x}_n\| = O_{p}(n^{-1/2})$ i.e. $\xi^*_n$ converges to some finite random element. Then, it can be shown that the ZZ-CV switching rates in $\xi$ coordinate are of magnitude $O(n^{1/2})$. This implies, in the $(\xi, v)-$coordinate, the position and the velocity process mix at the same rate which is $O(n^{1/2})$. Slowing down time by a factor $n^{-1/2}$ brings both the components back to the unit time scale. Consequently, the limiting process is a Zig-Zag process with an appropriate switching rate. 
        
        \begin{figure}
        \centering
            \begin{subfigure}{0.325\linewidth}
                \includegraphics[scale = 0.25]{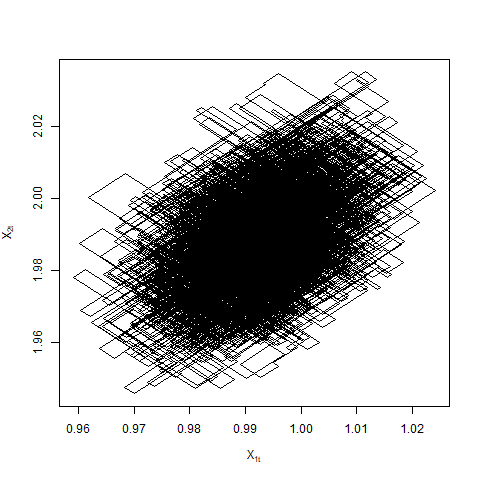}
            \subcaption{$X^* = (1, 2)$}
            \end{subfigure}
            \begin{subfigure}{0.325\linewidth}
                \includegraphics[scale = 0.25]{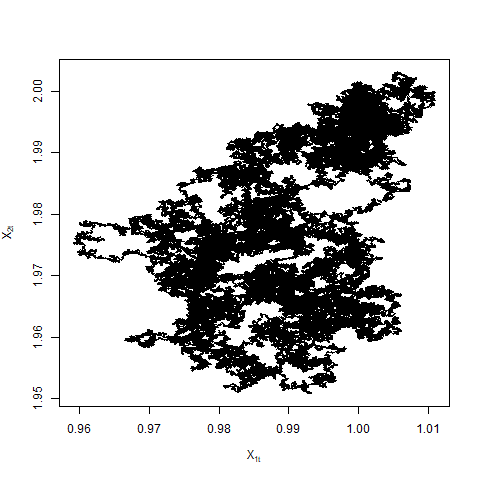}
            \subcaption{$X^* = (-10, 1)$}
            \end{subfigure}
            \begin{subfigure}{0.325\linewidth}
                \includegraphics[scale = 0.25]{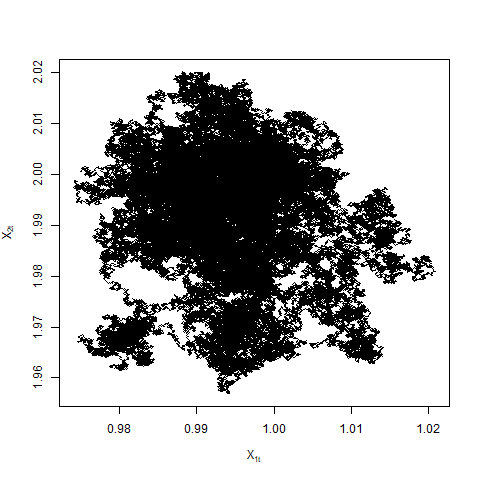}
            \subcaption{$X^* = (5, 5)$}
            \end{subfigure}
            \caption{Stationary phase trajectories of ZZ-CV for Bayesian logistic regression example with true parameter $(1, 2)$ with different reference points.}
        \label{fig:different_refs}
        \end{figure}
	    
        Let $m(n) = m$ be fixed for some $m \in \mathbb{N}$. For each $n > m$, let $Z^n_t = (X^n_t, V^n_t)_{t \ge 0}$ be a ZZ-CV process targeting the posterior $\Pi^{(n)}$ with reference point $x^*_n$ and fixed sub-sample size $m$. Define
        \[
            U^n_t := (\xi^n_t, V^n_t) = \left(n^{1/2}(X^n_{t} - \hat{x}_n), V^n_t\right); \quad t \ge 0, n > m.
        \]
        We have the following result.
        
        \begin{theorem}
        \label{theo:zzcv_stat_fixed}
            Suppose Assumptions \ref{assum:smoothness} - \ref{assum:mle_consistent}. Consider the process $(U^n_t)_{t \ge 0}$ with initial condition $U^n_0 \sim \mu^n$ for all $n$ and suppose $\nu^N \to \mu$ weakly. Let $\xi^*$ be such that $\xi^*_n \to \xi^*$ in $\Prob$-probability. Moreover, $\xi^*$ is independent of $\bs{Y}^{(\mathbb{N})}$ and $\mathbb{E}\|\xi^*\|^2 < \infty$. Then $(U^n_{n^{-1/2}t})_{t \ge 0}$ converges weakly (in Skorohod topology) in $\Prob-$probability to a $d$-dimensional Zig-Zag process $(U_t)_{t \ge 0}$ with $U_0 \sim \mu$, and $i$-th switching rate,
            \begin{align}
            \label{eq:limrate}
                &\lambda_{i}(\xi, v | \xi^*) \\
                &\quad \quad= m^{-1}\mathbb{E}_{Y_1, \dots, Y_m}\left(v_i \cdot \sum_{j=1}^m\left(\xi \cdot \nabla S_i(x_0; Y_j) - \xi^* \cdot (\nabla S_i(x_0; Y_j) - \mathbb{E} \nabla S_i(x_0; Y_j))\right)\right)_{+}, \notag
            \end{align}
            where $\nabla S_i(x; Y)$ is the $i$-th column of the Hessian matrix $S'(x, Y) = -\nabla^{\otimes 2} \log f(Y; x)$ and $Y_1, \dots, Y_m \overset{\text{iid}}{\sim} P$ are independent of $\xi^*$.
        \end{theorem}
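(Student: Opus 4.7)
The plan is to prove Theorem \ref{theo:zzcv_stat_fixed} by generator convergence and the martingale problem for PDMPs. The time rescaling $t \mapsto n^{-1/2} t$ brings both the position process $\xi^n$ (unit speed in the new time) and the switching rates (shown below to be $O(n^{1/2})$ before rescaling) to the same $O(1)$ time-scale, so the limit is a PDMP rather than a diffusion as in Theorem \ref{theo:zzss_stat_fixed}.

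First, I would Taylor-expand $E_i^j(\xi)$ about $\hat x_n$ in the small parameter $n^{-1/2}$. Using Assumption \ref{assum:smoothness} ($C^3$-regularity of $\log f(y;\cdot)$ and the uniform bound on $\nabla^{\otimes 2} s_i$), together with the MLE identity $\sum_{j=1}^n s^j(\hat x_n) = \bs{0}$ and the convergence $n^{-1}\sum_k \nabla s_i^k(\hat x_n) = I^n e_i \to I(x_0) e_i$ from Assumption \ref{assum:mle_consistent}, one obtains uniformly in $(\xi, v, \xi^*_n)$ on compact sets,
\[
    \sum_{j \in S} E_i^j(\xi) = n^{-1/2}\bigl[(\xi - \xi^*_n)\cdot \sum_{j \in S}\nabla s_i^j(\hat x_n) + m\,\xi^*_n \cdot (I^n e_i)\bigr] + O(n^{-1}).
\]
Substituting into \eqref{eq:cv_rate_xi} and using the positive homogeneity of $(\cdot)_+$,
\[
    n^{-1/2}\lambda^n_{\text{cv}, i}(\xi, v) = \frac{1}{|\mathcal{S}_{(n, m)}|}\sum_{S \in \mathcal{S}_{(n, m)}}\Bigl(v_i\bigl[(\xi - \xi^*_n)\cdot m^{-1}\!\!\sum_{j \in S}\nabla s_i^j(\hat x_n) + \xi^*_n \cdot (I^n e_i)\bigr]\Bigr)_+ + o(1).
\]
The leading term is a U-statistic of degree $m$ in the data with a random kernel depending on the parameter $\theta_n := (\hat x_n, I^n, \xi^*_n)$. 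Combining Hoeffding's strong law for U-statistics at the deterministic limit $\theta_\infty := (x_0, I(x_0), \xi^*)$ with a continuity-in-$\theta$ estimate (exploiting the $1$-Lipschitz property of $(\cdot)_+$ and the uniform Hessian bound in Assumption \ref{assum:smoothness}), and using $\hat x_n \to x_0$, $I^n \to I(x_0)$, $\xi^*_n \to \xi^*$ (Assumption \ref{assum:mle_consistent}(i)--(ii) and the theorem's hypothesis on $\xi^*$), yields compact-uniform convergence in $\Prob$-probability $n^{-1/2}\lambda^n_{\text{cv}, i}(\xi, v) \to \lambda_i(\xi, v \mid \xi^*)$.

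Next, since $(\xi^n_t)$ is $1$-Lipschitz in $t$ coordinatewise and the expected number of velocity flips on any compact time interval is uniformly bounded by the compact bound on the rescaled rates, relative compactness of $(U^n_t)_{t \ge 0}$ in Skorohod topology follows from standard PDMP tightness criteria. Any subsequential weak limit solves the martingale problem generated by
\[
    \mathcal{L}g(\xi, v) = v \cdot \nabla_{\xi} g(\xi, v) + \sum_{i=1}^d \lambda_i(\xi, v \mid \xi^*)\bigl[g(\xi, F_i v) - g(\xi, v)\bigr],
\]
and uniqueness of the solution (standard for Zig-Zag processes with continuous, locally bounded rates) identifies the limit. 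For the initial distribution, Assumption \ref{assum:mle_consistent}(iii) gives $\xi^n_0 \to N(\bs{0}, I(x_0)^{-1})$ in total variation in $\Prob$-probability, and independence of $V_0^n \sim \text{Uniform}(\mathcal{V})$ extends to the joint initial law. A direct calculation using $\mathbb{E}\nabla S_i(x_0; Y) = I(x_0) e_i$ verifies the skew-symmetry relation \eqref{eq:suff_rates} for $\lambda_i(\cdot \mid \xi^*)$ with potential $U(\xi) = \tfrac{1}{2}\xi^\top I(x_0)\xi$, so $N(\bs{0}, I(x_0)^{-1}) \otimes \text{Uniform}(\mathcal{V})$ is invariant for the limit and $(U_t)$ is stationary.

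The most delicate step will be the uniform-on-compacts convergence of the rescaled rates in $\Prob$-probability: the positive-part $(\cdot)_+$ is only Lipschitz (not smooth), and the U-statistic kernel depends on a data-driven parameter $\theta_n$ converging only in probability, with components of different natures (an MLE, an average of Hessians, and the reference point). Managing this requires coupling a triangular-array U-statistic law with the continuity estimates afforded by the $C^3$-smoothness and uniform Hessian bound in Assumption \ref{assum:smoothness}, and verifying continuity of $\lambda_i(\cdot \mid \xi^*)$ in its arguments so that the martingale-problem arguments can be carried out conditionally on the data and on $\xi^*$.
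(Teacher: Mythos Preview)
Your proposal is correct and follows essentially the same approach as the paper: Taylor-expand $E_i^j(\xi)$ about $\hat x_n$ to isolate a leading $n^{-1/2}$-order term (the paper's Lemma~\ref{lemma:zz_cv_1}), replace $(\hat x_n, I^n, \xi^*_n)$ by $(x_0, I(x_0), \xi^*)$ and invoke the strong law for $U$-statistics to obtain compact-uniform convergence of $n^{-1/2}\lambda^n_{\text{cv},i}$ to $\lambda_i(\cdot\mid\xi^*)$ (the paper's Lemma~\ref{lemma:zz_cv_2}), and conclude weak convergence from generator convergence. The only organizational difference is that the paper invokes Corollary~8.7(f), Chapter~4 of Ethier--Kurtz directly (using that $\mathcal{C}_c^\infty$ is a core for the limiting Zig-Zag generator), whereas you phrase the final step as tightness plus identification via the martingale problem; these are equivalent, though the Ethier--Kurtz corollary packages compact containment and identification together and spares you from arguing tightness of $(\xi^n_t)$ separately.
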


        \begin{proof}
            See Appendix.
        \end{proof}
        
        \begin{remark}
            Observe that for any value of $\xi^*$, 
            \[
                \lambda_i(\xi, v | \xi^*) - \lambda_i(\xi, -v | \xi^*) = v_i\left(\xi \cdot \mathbb{E} \left[\nabla S_i(x_0; Y)\right]\right).
            \]
            The limiting Zig-Zag process is invariant with respect to the limiting Gaussian distribution with mean vector $0$ and covariance matrix given by the inverse of $I(x_0)$.
        \end{remark}
	\begin{remark}
            As in the case of fluid limits, the limiting process is a random Zig-Zag process depending on the statistic $T_n$ used to generate reference points. When $T_n$ is such that $\|X^*_n - \hat{x}_n\| = O_{p}(n^{-\alpha})$ for $\alpha > 1/2$, then $\xi^* = 0$ almost surely. The limiting switching rate reduces to,
            \[
                \lambda_i(\xi, v) = m^{-1}\mathbb{E}_{Y_1, \dots, Y_m}\left(v_i \cdot \sum_{j=1}^m\xi \cdot \nabla S_i(x_0; Y_j)\right)_{+}.
            \]
	\end{remark} 
        \begin{remark}
            In general, the limiting Zig-Zag process is not canonical. By Jensen's inequality, for any fixed $m$,
            \[
                m^{-1}\mathbb{E}_{Y_1, \dots, Y_m}\left(v_i \cdot \sum_{j=1}^m\xi \cdot \nabla S_i(x_0; Y_j)\right)_{+} \ge (v_i \cdot (\xi \cdot \mathbb{E}\nabla S_i(x_0; Y))_{+}.
            \]
            However, for the exponential family of distributions represented in their natural parameters, the Hessian $S'(x; Y)$ is independent of $Y$. The limiting rate then becomes equivalent to the canonical rate. In general, the amount of excess quantifies the loss in efficiency due to sub-sampling. 
        \end{remark}
	\begin{remark}
            The difference in the switching rates in opposite directions for ZZ-CV is the same as ZZ-SS i.e. $O(n^{1/2})$. The limiting dynamics are then governed only by the magnitude of the sum of the switching rates. The assumption that $\|X^*_n - \hat{x}_n\| = O_{p}(n^{-1/2})$ in Theorem \ref{theo:zzcv_stat_fixed} cannot be dropped. In general, suppose $\|X^*_n - \hat{x}_n\| = O_{p}(n^{-\alpha})$ for some $\alpha \ge 0$, then the switching rates or the sum of the switching rates are of the magnitude $\max\{1/2, 1- \alpha\}$. In particular when $\alpha = 0$, the sum of the rates is $O(n)$ and the process is once again in the diffusive scale as in the case of ZZ-SS. We do not pursue this any further in this paper but see Figure \ref{fig:different_refs} for examples of Zig-Zag trajectories when $X^*_n \not\to \hat{x}_n$.
        \end{remark}    
        
        Suppose now the sub-sample size $m$ varies with $n$. The canonical Zig-Zag corresponds to the case $m(n) = n$. If $m(n)$ is large when $n$ is large, it is reasonable to expect that ZZ-CV behaves similarly to the canonical Zig-Zag. Although it can be argued, as in the case of ZZ-SS, that $m = 1$ is optimal in terms of the total computational cost of the algorithm. The following result is for completeness. It shows that if $m(n) \to \infty$ as $n \to \infty$, the limiting process does not depend on the reference points in ZZ-CV and is equivalent to the canonical Zig-Zag. Once again, the limiting process is invariant with respect to the limiting Gaussian distribution.

	\begin{theorem}
	\label{theo:zzcv_stat_varying}
	        In Theorem \ref{theo:zzcv_stat_fixed}, suppose that the sub-sample size $m = m(n)$ such that for all $n$, $1 \le m(n) \le n$ and $m(n) \to \infty$ as $n \to \infty$. Then the limiting process $U$ is a Zig-Zag process with switching rates
	        \[
	            \lambda_{i}(\xi, v) = (v_i \cdot (\xi \cdot \mathbb{E}\nabla S_i(x_0; Y))_{+},
	        \]
	        where $Y \sim P$ and $\nabla S_i(x; Y)$ is the $i$-th column of $S'(x, Y) = -\nabla^{\otimes 2} \log f(Y; x)$.
	\end{theorem}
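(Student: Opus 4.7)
The plan is to adapt the proof of Theorem~\ref{theo:zzcv_stat_fixed} almost verbatim, replacing the constant-$m$ averaging by an additional law of large numbers in $m(n)$. The time--space rescaling defining $U^n_t$ and the PDMP structure are unchanged, as is the stationary initial distribution; tightness and identification of the limit follow from the same arguments. The only new ingredient is to show that the rescaled switching rates $n^{-1/2}\lambda^n_{\text{cv}, i}(\xi, v)$ converge, uniformly on compact subsets of $\mathbb{R}^d \times \mathcal{V}$ and in $\Prob$-probability, to the canonical-type limit $(v_i \cdot \xi \cdot \mathbb{E}\nabla S_i(x_0;Y))_+$ rather than to the mixture in \eqref{eq:limrate}.

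To analyse the rate, I would carry out the same Taylor expansion as in Theorem~\ref{theo:zzcv_stat_fixed}. Combined with the optimality condition $\sum_{j=1}^n s^j(\hat x_n) = \bs{0}$ from Assumption~\ref{assum:mle_consistent}, this yields
\[
m(n)^{-1}\sum_{j\in S} E^j_i(\xi) = n^{-1/2}\bigl[\xi \cdot \bar A^{n,S}_i - \xi^*_n \cdot \bigl(\bar A^{n,S}_i - I^n_{\cdot i}\bigr)\bigr] + o_{\Prob}(n^{-1/2}),
\]
where $\bar A^{n,S}_i := m(n)^{-1}\sum_{j\in S} \nabla s^j_i(\hat x_n)$ and $I^n_{\cdot i}$ is the $i$-th column of $I^n$. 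Factoring out the $1/m$ in the definition of $\lambda^n_{\text{cv}, i}$ gives $n^{-1/2}\lambda^n_{\text{cv}, i}(\xi,v) = \mathbb{E}_S\bigl[(v_i\cdot n^{1/2}m(n)^{-1}\sum_{j\in S} E^j_i(\xi))_+\bigr]$, so the task reduces to showing that $\bar A^{n,S}_i \to \mathbb{E}\nabla S_i(x_0;Y)$ in a sufficiently strong sense.

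The varying-$m$ assumption supplies precisely this. Writing $\bar A^{n,S}_i - \mathbb{E}\nabla S_i(x_0;Y) = (\bar A^{n,S}_i - I^n_{\cdot i}) + (I^n_{\cdot i} - \mathbb{E}\nabla S_i(x_0;Y))$, Assumption~\ref{assum:mle_consistent}(ii) handles the second summand, and for the first the hypergeometric sampling variance bound $\mathbb{E}_S\|\bar A^{n,S}_i - I^n_{\cdot i}\|^2 \le (m(n))^{-1}\cdot n^{-1}\sum_{j=1}^n \|\nabla s^j_i(\hat x_n)\|^2$ combined with Assumption~\ref{assum:moment} gives $L^2$-convergence to zero. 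The $\xi^*_n$-term in the bracket then vanishes in probability because $\bar A^{n,S}_i - I^n_{\cdot i} \to 0$ while $\xi^*_n$ is tight. A dominated-convergence argument, justified by the moment assumption together with the compact-set uniformity afforded by Assumption~\ref{assum:smoothness}, completes the rate convergence; the rest of the proof proceeds exactly as in Theorem~\ref{theo:zzcv_stat_fixed}.

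The main obstacle is making the joint asymptotics $n\to\infty$ and $m(n)\to\infty$ rigorous: both the data sampling (indexed by $n$) and the subsampling (indexed by $m(n)$) vary simultaneously, and the LLN for $\bar A^{n,S}_i$ must be uniform in $\xi$ over compact sets so that it survives the dominated-convergence step and yields uniform rate convergence rather than merely pointwise convergence. The hypergeometric variance bound supplies the correct rate in $m(n)$, while Assumption~\ref{assum:mle_consistent}(ii) together with the uniform second-derivative bound in Assumption~\ref{assum:smoothness} controls the Taylor remainder uniformly in $\xi$; combined, these make the joint-limit argument manageable but require careful bookkeeping.
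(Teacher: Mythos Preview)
Your proposal is correct and follows essentially the same route as the paper: Taylor-expand the $E^j_i$'s to linearize in $\xi$, then use the SRSWOR/hypergeometric variance bound (exactly as in the proof of Proposition~\ref{propo:rates}) to show that subsampling with $m(n)\to\infty$ collapses the subsample average onto the full-sample mean, after which the law of large numbers and Assumption~\ref{assum:mle_consistent}(ii) identify the limiting rate. The only organizational difference is that the paper first replaces $\hat{x}_n,\xi^*_n$ by $x_0,\xi^*$ (via the analogues of Lemmas~\ref{lemma:zz_cv_1}--\ref{lemma:zz_cv_2}) and then applies the variance bound to the resulting conditionally i.i.d.\ summands $G^j_i$, whereas you apply the hypergeometric bound directly at $\hat{x}_n$ and invoke Assumption~\ref{assum:mle_consistent}(ii) afterward---a cosmetic reordering.
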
 

        \begin{proof}
            See Appendix.    
        \end{proof}

        Finally, we have the analogue of Corollary \ref{corr:complexity_ss} for canonical Zig-Zag and ZZ-CV.

        \begin{corollary}
        \label{corr:complexity_cv}
            Let $(U^n_t)_{t \ge 0}$ be as in Theorem \ref{theo:zzcv_stat_fixed} or \ref{theo:zzcv_stat_varying} and suppose the technical assumptions are satisfied. Then, for any $\epsilon > 0$, there are $\delta > 0$, $T < \infty$, and $N < \infty$, such that,
            \[
                \mathbb{P}\left(\mathbb{E}_{U^n_0 \sim \Pi^{(n)}_{\sharp \xi}}\left\|\mathcal{L}(\xi^n_{n^{-1/2}t}) - \Pi^{(n)}_{\sharp \xi}\right\|_{\text{KR}} < \epsilon \text{ for all } t \ge T, n \ge N\right) > 1- \delta.
            \]
        \end{corollary}

        \begin{proof}
           Same as the proof of Corollary \ref{corr:complexity_ss} but using Theorem \ref{theo:zzss_stat_fixed} or \ref{theo:zzcv_stat_varying}.  
        \end{proof}

    \subsection{Computational cost}
    \label{sec:cost}
        Corollary \ref{corr:complexity_ss} implies that the ZZ-SS algorithm requires $O(1)$ time interval to obtain and essentially independent sample. Given that switching rates are $O(n)$, this can be achieved using $O(n)$ proposed switches. The computational cost for each proposed switch is $O(1)$ since we only use a subsample of fixed size to estimate the gradient. Thus, we estimate the overall computational cost of the ZZ-SS sampler to be $O(n)$. As noted in \cite{bierkens2019}, this is the best we can expect for any standard Monte Carlo algorithm.

        On the other hand for canonical Zig-Zag and ZZ-CV, we have from Corollary \ref{corr:complexity_cv} that the process requires $O(n^{-1/2})$ time to obtain an essentially independent sample. With switching rates of $O(n^{1/2})$, this can be achieved in $O(1)$ proposed switches. At each proposed switch, canonical Zig-Zag incurs $O(n)$ computational cost while for fixed subsample sizes, ZZ-CV costs only $O(1)$. This leads us to the conclusion that overall cost of the canonical Zig-Zag is $O(n)$ (same as ZZ-SS) whereas $O(1)$ for ZZ-CV. That is, ZZ-CV can provide a factor $n$ improvement in efficiency compared to standard MCMC methods. This observation is empirically supported by the experiments in \cite{bierkens2019}.
        
        In practice, however, the actual overall cost of the algorithm will depend on the computational bounds $(M_i^n)_{i=1}^d$ used for Poisson thinning (see Section \ref{s:ZZBayes}). The computational bounds used in practice will not necessarily be of the same order as the switching rates. For example, when the gradient of the model log-density is globally bounded, the computational bounds scale as $O(n)$ for each of the three algorithms. The total cost in stationarity then scales as $O(n^{3/2})$ for canonical Zig-Zag, $O(n)$ for ZZ-SS, and $O(n^{1/2})$ for ZZ-CV. On the other hand, consider the Logistic regression model from \cite{bierkens2019}, also presented in Section \ref{sec:blr}. When the weight distribution $Q$ is (sub-)Gaussian, the computational bounds scale as $O(n\sqrt{\log n})$ for ZZ-SS and $O(n\log n)$ for canonical Zig-Zag and ZZ-CV \cite{bierkens2019supp}. The total complexity in stationarity then scales as $O(n^{3/2}\log n)$ for canonical Zig-Zag, $O(n\sqrt{\log n})$ for ZZ-SS, and $O(n^{1/2}\log n)$ for ZZ-CV. In general, the magnitude of the tightest computational bounds will depend, in addition to the sub-sampling scheme, on the true data-generating distribution and the chosen model.     

        Finally, as argued in Remark 4.10, the realization of any gain in efficiency due to ZZ-CV is conditional on the reference point mechanism satisfying $\|X^*_n - \hat{x}_n\| = O_{p}(n^{-1/2})$. This itself is an expensive procedure relying on a numerical optimization routine, the complexity of which will be $O(n)$. After such a reference point is found, there will be a one-off cost of calculating $\sum_{j=1}^n s^j(x_n^*)$. However, once this cost is paid, initializing the ZZ-CV algorithm in the stationary phase, i.e. $\|X^n_0 - X^*_n\| = O_{p}(n^{-1/2})$, will ensure the promised $O(n)$ improvement.

\section{More on fluid limits}
\label{sec:examples}
    In this section, we go back to the transient phase and consider the fluid limit in more detail. Recall the expression for asymptotic drift \eqref{eq:limiting_drift} that is,
    \[
        b_i(\omega, x) = \frac{-\mathbb{E}_{Y\sim P} [S_i(x; Y)]}{\lambda_i(x, -\bs{1}_d) + \lambda_{i}(x, \bs{1}_d)}, \quad x \notin H(\omega),
    \] 
    where $H(\omega) = \cup_{i=1}^d\{x: \lambda_i(x, -\bs{1}_d) + \lambda_{i}(x, \bs{1}_d) = 0\}$. Let $b_{\text{can}}$, $b_{\text{ss}}$ and $b_{\text{cv}}$ denote the asymptotic drifts for canonical Zig-Zag, ZZ-SS and ZZ-CV respectively obtained by putting \eqref{eq:can_drift}, \eqref{eq:ss_drift}, and \eqref{eq:cv_drift} in \eqref{eq:limiting_drift}. Note that since the data is almost surely non-degenerate, the drift for ZZ-SS is well-defined on the whole space i.e. $H_{\text{ss}} = \emptyset$ almost surely. For canonical Zig-Zag, $H_{\text{can}} = \{x: |\mathbb{E}_{Y}S_i(x; Y)| = 0, \text{ for some }\ i = 1, \dots, d\}$. For ZZ-CV on the other hand, the set $H_{\text{cv}}$ depends on the model and the observed value of $X^*$. For example's sake, $H_{\text{cv}} = H_{\text{can}}$ when the model is Gaussian.

    \subsection{Toy examples}
        We illustrate fluid limits in some simple examples. Suppose $P = F_{x_0}$ for some $x_0 \in \mathbb{R}^d$. Then, $\mathbb{E}_{Y}S(x_0, Y) = 0$. We first look at some toy models in one dimension.
        \begin{itemize}
            \item {\it Gaussian model}: Suppose $f(y; x) \propto \exp(-(y - x)^2/2) $ and $P = N(x_0, 1)$. Then $s(x; y) = (x - y)$ and $E^j(x) = x - y_j$ for ZZ-SS. But since $s(x; y)$ is linear in both $x$ and $y$, using control variates makes the variance in $E^j$ equal to $0$. To wit, 
            \[
                E^j(x) = (x - y_j) - (x^*_n - y_j) + n^{-1}\sum_{k=1}^n (x^*_n - y_k)  = x - \hat{x}_n.
            \]
            Hence, the ZZ-CV becomes equivalent to the canonical Zig-Zag in this case. We get,
            \begin{align*}
                b_{\text{ss}}(x) &= \frac{-(x - x_0)}{\mathbb{E}|x - Y|}; \quad Y \sim N(x_0, 1/m), \quad x \in \mathbb{R}. \\
                b_{\text{cv}}(x) &= b_{\text{can}}(x) = \frac{-(x - x_0)}{|x - x_0|},  \quad x \neq x_0.
            \end{align*}
            In general, for one-dimensional models with log-concave densities, it can be shown that ZZ-CV will always be equivalent to the canonical Zig-Zag.

            \item {\it Laplace model}: Suppose $f(y; x) \propto \exp(-|y - x|)$ and $P = \text{Laplace}(x_0, 1)$. The log-density is not continuously differentiable on a null set and hence, violates the smoothness assumption needed for fluid limits. However, defining $s(x; y)$ to be the weak derivative with respect to $x$ of $|x - y|$, we get,
            \[
                s(x; y) = \begin{cases}
                    \mathrm{sgn}(x - y); & y \neq x, \\
                    0; & y = x.
                \end{cases}
            \]
            Let $p(x) = P(Y \le x) = F_{x_0}(x)$. Then,
            \[
                b_{\text{ss}}(x) = \frac{-(2p(x) - 1)}{\mathbb{E}|2m^{-1}Y - 1|}; \quad Y \sim \text{Bin}(m, p(x)), \quad x \in \mathbb{R}.
            \] 
            Given the form of $s$ function, note that for any $x, x'$, $s(x; y) - s(x'; y)$ only depends on the direction of $x$ relative to $x'$. Then the denominator term in $b_{\text{cv}}$ only depends on the location of $X^*$ and the direction of $x$ relative to $X^*$. Let $m =1$. Let $x^*$ be the observed value of $X^*$. When $p(x^*) < 1/2$, i.e. $x^* < x_0$, the drift is given by,
            \[
                b_{\text{cv}}(x) = \begin{cases}
                    1 & x \le x^* \\
                    \dfrac{1 - 2p(x)}{4p(x^*)(p(x) - p(x^*)) - 2p(x^*) + 1} & x > x^*.
                \end{cases}
            \]
            On the other hand when $x^* > x_0$, $p(x^*) > 1/2$, and we have,
            \[
                b_{\text{cv}}(x) = \begin{cases}
                    \dfrac{1 - 2p(x)}{4(1 - p(x^*))(p(x^*) - p(x)) + 2p(x^*) - 1} & x < x^* \\
                    -1 & x \ge x^*.
                \end{cases}
            \]
            Hence, the fluid limit travels with optimal speed till $x^*$ after which it slows down until it reaches $x_0$ where it settles. Appropriate choice of control variates, however, can make the process more efficient. Suppose the sequence of reference points is chosen such that $X^* = x_0$ almost surely. Then, $p(Z) = 1/2$ almost surely and the asymptotic drift is equivalent to the canonical drift,
            \[
                b_{\text{cv}}(x) = b_{\text{can}}(x) = \frac{-(2p(x) - 1)}{|2p(x) - 1|},  \quad x \neq x_0.
            \]
            In addition if $x^*_n = \hat{x}_n$, the ZZ-CV is equivalent to the canonical Zig-Zag for corresponding $\Pi^{(n)}$.

            \item {\it Cauchy model}: Suppose $f(y; x) \propto (1 + (y - x)^2)^{-1}$ and $P = \text{Cauchy}(x_0, 1)$. Then, $s(x; y) = 2(x - y)/(1 + (x - y)^2)$. The explicit expression is hard to calculate in this case. In particular, suppose $X^* = x_0$ almost surely and let $m = 1$. Then,
            \begin{align*}
                b_{\text{ss}}(x) &= - \frac{2(x - x_0)}{(x - x_0)^2 + 4} \left(\mathbb{E}_{Y \sim F_{x_0}}\left[\left|\frac{2(x - Y)}{1 + (x - Y)^2} \right|\right]\right)^{-1}, \quad x \in \mathbb{R}. \\
                b_{\text{cv}}(x) &= - \frac{2(x - x_0)}{(x - x_0)^2 + 4} \left(\mathbb{E}_{Y \sim F_{x_0}}\left[\left|\frac{2(x - Y)}{1 + (x - Y)^2} - \frac{2(x_0 - Y)}{1 + (x_0 - Y)^2}\right|\right]\right)^{-1}, \quad x \neq x_0. \\
                b_{\text{can}}(x) &= \frac{-(x - x_0)}{|x - x_0|},  \quad x \neq x_0.
            \end{align*}
        \end{itemize}

        \begin{figure}[t]
        \centering
            \begin{subfigure}{0.325\linewidth}
                \includegraphics[scale = 0.25]{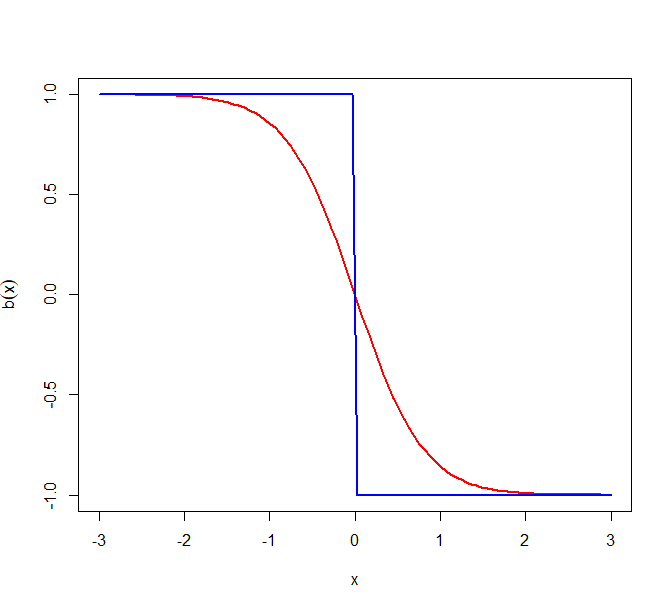}
            \subcaption{Normal}
            \end{subfigure}
            \begin{subfigure}{0.325\linewidth}
                \includegraphics[scale = 0.25]{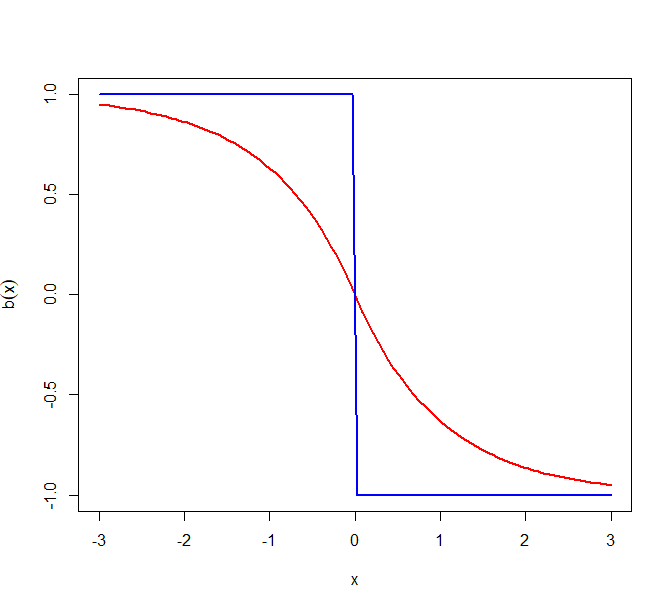}
            \subcaption{Laplace}
            \end{subfigure}
            \begin{subfigure}{0.325\linewidth}
                \includegraphics[scale = 0.25]{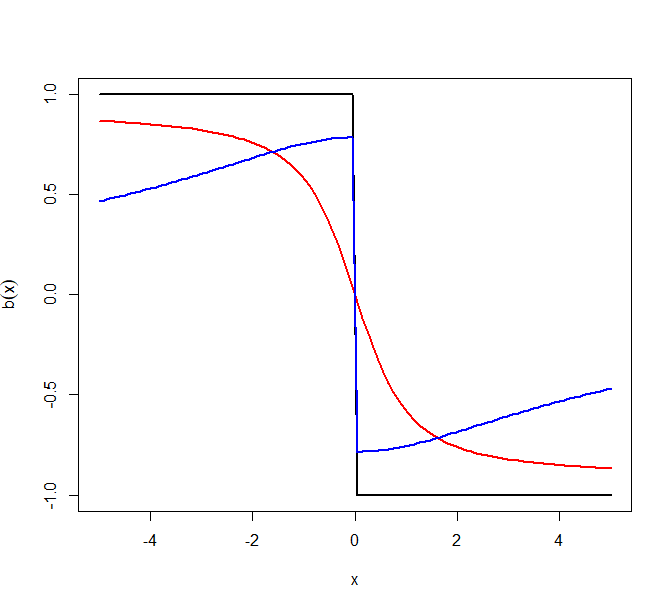}
            \subcaption{Cauchy}
            \end{subfigure}
        \caption{Asymptotic drift for different models and sub-sampling schemes in one dimension with $P = F_0$ and $X^* = 0$ and $m = 1$. The black curve denotes $b_{\text{can}}$, the red curve is $b_{\text{ss}}$ and the blue curve is $b_{\text{cv}}$. Note that in the first two plots, the black and the blue curves overlap.}
        \label{fig:drifts}
        \end{figure}

        Figure \ref{fig:drifts} plots the asymptotic drifts for the above models for canonical Zig-Zag, ZZ-SS, and ZZ-CV (when $X^* = x_0$ almost surely). The subsample size $m = 1$ and $x_0 = 0$. As the calculations above show, when the model is Normal or Laplace, ZZ-CV is equivalent to canonical Zig-Zag. However, we see that for the Cauchy model, ZZ-CV is sup-optimal near $x_0$. Indeed, $\lim_{x \to x_0}|b_{\text{cv}}(x)| = \pi/4 < 1$. The more interesting feature of these plots is the Cauchy model where, contrary to expectations, ZZ-CV performs worse than ZZ-SS. Later in Section \ref{sec:comparison}, we will see that this is a consequence of heavy tails and is generally true.

    \subsection{Bayesian logisitic regression}
    \label{sec:blr}
        Consider a set of observations $(w^j, y^j)_{j=1}^n$ with $w^j \in \mathbb{R}^d$ and $y^j \in \{0, 1\}$, $j = 1, \dots, n$. We assume a Bayesian logistic regression model on the observations such that given a parameter $x \in \mathbb{R}^d$ and covariate $w^j$, the binary variable $y^j$ has distribution,
        \[
            \Prob(y^j = 1| w^j, x) = \frac{1}{1 + \exp(-\sum_{i=1}^d x_iw^j_i)} = 1 - \Prob(y = 0| w^j, x), \quad j = 1, \dots, n.
        \]
        Additionally, since we are interested in large sample asymptotics, we suppose that the covariates $w^j \overset{\text{iid}}{\sim} Q, j = 1, \dots, n$, for some probability measure $Q$ on $\mathbb{R}^d$ with Lebesgue density $q$. Under a flat prior assumption, the posterior density $\pi^{(n)}$ for the parameter $x$ is,
        \[
            \pi^{(n)}(x) \propto \prod_{j=1}^n q(w^j)\Prob(y = y^j| w^j, x) = \prod_{j=1}^n q(w^j)\frac{\exp(y_j \cdot \sum_{i=1}^d x_iw^j_i)}{1 + \exp(\sum_{i=1}^d x_iw^j_i)}, \quad x \in \mathbb{R}^d.
        \]  
        
        Let $p(x; w)$ denote the probability $\Prob(Y = 1; w, x)$ and suppose the model is well-specified with true parameter $x_0$. Then, simple calculations yield that the canonical drift in the $i-$th coordinate is,
        \[
            b_{\text{can}, i}(x) = \frac{-\mathbb{E}_{W \sim Q}\left[W_ip(x; W) - W_ip(x_0; W)\right]}{|\mathbb{E}_{W \sim Q}\left[W_ip(x; W) - W_ip(x_0; W)\right]},
        \]
        for all $x$ for which the denominator is non-zero. Further suppose that $m = 1$ and that $X^* = x_0$ almost surely. Then, 
        \begin{equation*}
        \begin{gathered}
            b_{\text{ss}, i}(x) = \frac{-\mathbb{E}_{W \sim Q}\left[W_ip(x; W) - W_ip(x_0; W)\right]}{\mathbb{E}_{W \sim Q}\left[|W_i|\cdot(p(x_0, W) + p(x; W) - 2p(x_0, W)p(x; W)) \right]}, \quad x \in \mathbb{R}^d;\\
            b_{\text{cv}, i}(x) = \frac{-\mathbb{E}_{W \sim Q}\left[W_ip(x; W) - W_ip(x_0; W)\right]}{\mathbb{E}_{W \sim Q}\left[\left|W_ip(x; W) - W_ip(x_0; W)\right|\right]}, \quad x \neq x_0.
        \end{gathered}
        \end{equation*}

        As an illustrative example, we generated a synthetic data set of size $10000$ from the above model with $Q = \delta_1 \times N(0, 1)$ and $x_0 = (1, 2)$. Figure \ref{fig:blr} plots trajectories of the Zig-Zag process under different sub-sampling schemes targeting the corresponding Bayesian posterior. The reference point for ZZ-CV was chosen to be a numerical estimate of the MLE. The black curve denotes the actual Zig-Zag process trajectories and the superimposing red curve denotes the solution to the asymptotic ODE.

        \begin{figure}
        \centering
            \begin{subfigure}[t]{0.325\linewidth}
            \centering
                \begin{subfigure}[b]{\linewidth}
                \centering
                    \includegraphics[scale = 0.25]{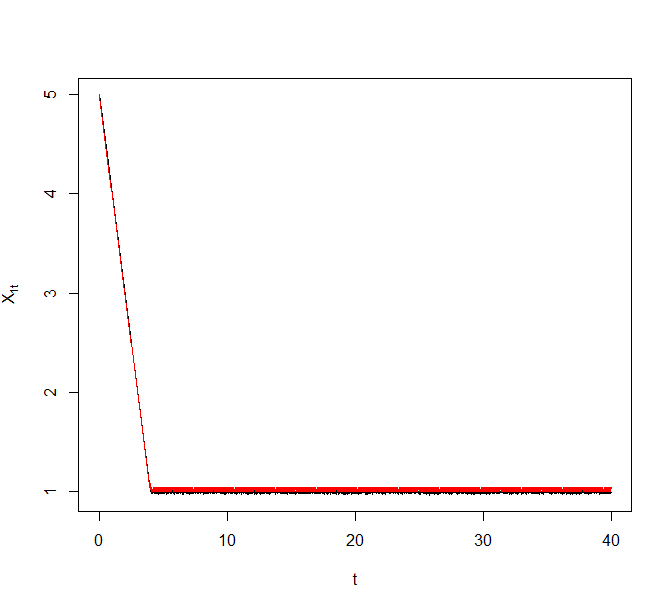}
                \end{subfigure}\\
                \begin{subfigure}[b]{\linewidth}
                \centering
                    \includegraphics[scale = 0.25]{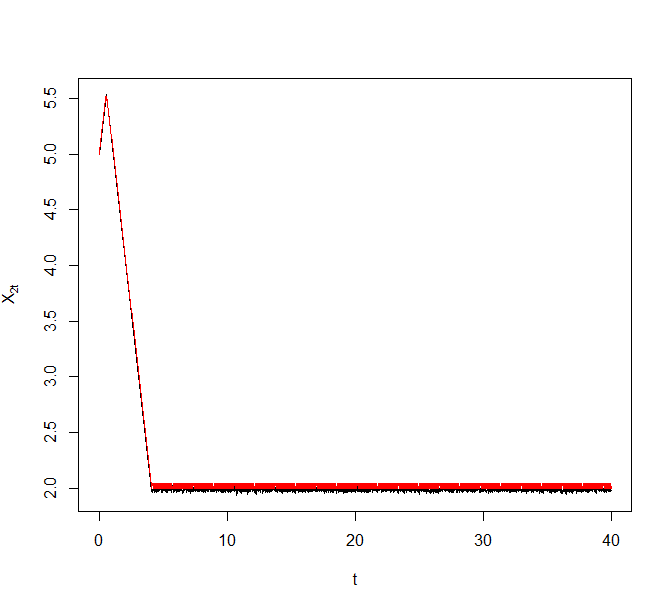}
                \end{subfigure}
            \subcaption{Canonical Zig-Zag}
            \end{subfigure}
            \begin{subfigure}[t]{0.325\linewidth}
            \centering
                \begin{subfigure}[b]{\linewidth}
                \centering
                    \includegraphics[scale = 0.25]{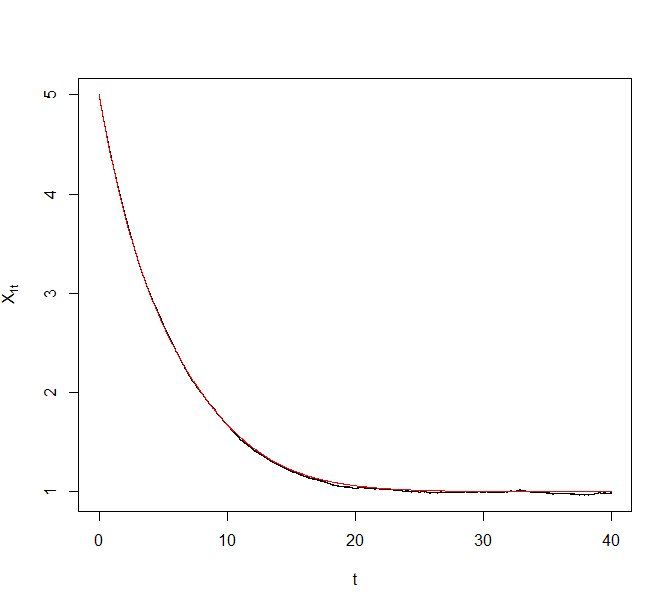}
                \end{subfigure}\\
                \begin{subfigure}[b]{\linewidth}
                \centering
                    \includegraphics[scale = 0.25]{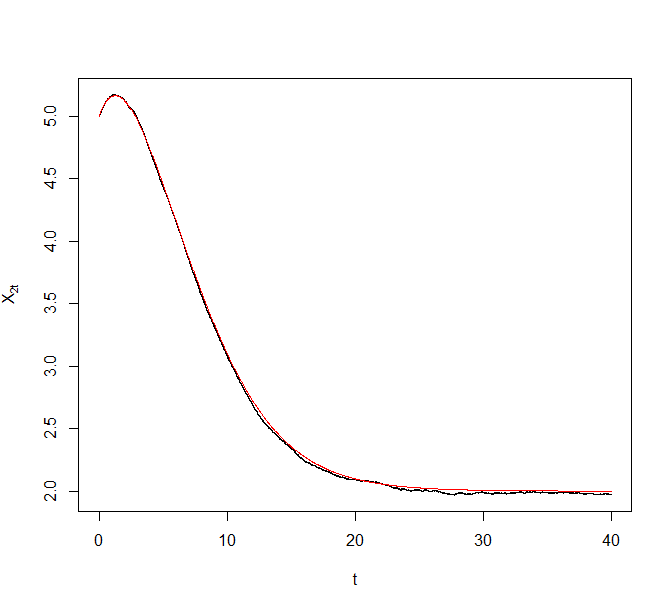}
                \end{subfigure}
            \subcaption{ZZ-SS}
            \end{subfigure}
            \begin{subfigure}[t]{0.325\linewidth}
            \centering
                \begin{subfigure}[b]{\linewidth}
                \centering
                    \includegraphics[scale = 0.25]{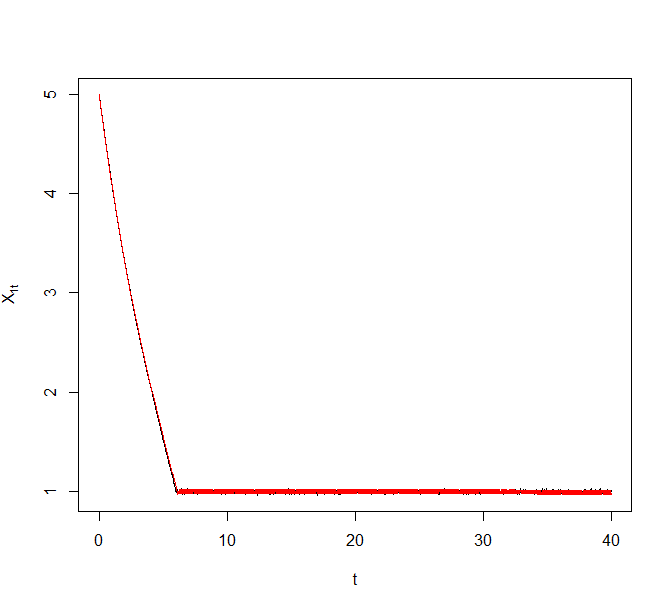}
                \end{subfigure}\\
                \begin{subfigure}[b]{\linewidth}
                \centering
                    \includegraphics[scale = 0.25]{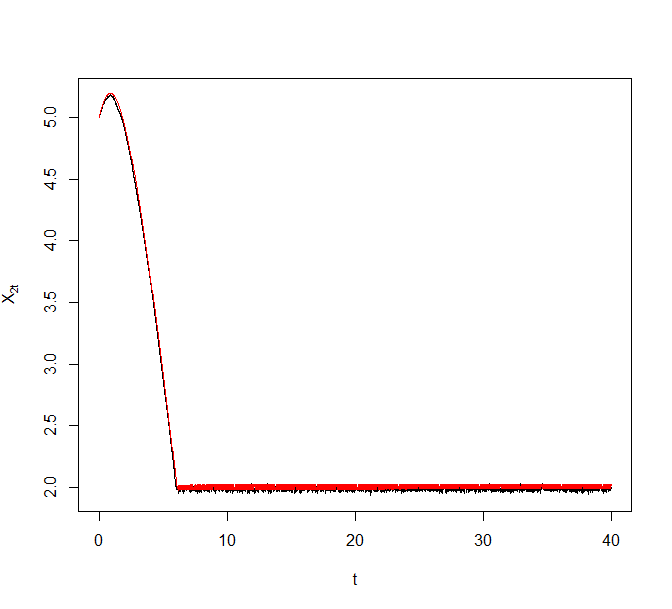}
                \end{subfigure}
            \subcaption{ZZ-CV}
            \end{subfigure}   
            \caption{Bayesian Logistic regression example with 10000 data points. The black curves are the Zig-Zag process trajectories and the red curves denote the theoretical asymptotic limit. The top and the bottom panels correspond to the first and second coordinates respectively.}
            \label{fig:blr}
        \end{figure}

    \subsection{Comparison of ZZ-SS and ZZ-CV}
    \label{sec:comparison}
        As noted before, comparing drifts for different sub-sampling schemes is equivalent to comparing the denominator in \eqref{eq:limiting_drift}. Looking at the expressions for the denominator in \eqref{eq:ss_drift} and \eqref{eq:cv_drift}, it is not obvious when one is faster than the other. Figure \ref{fig:drifts} already illustrates an example where, contrary to expectations, ZZ-CV performs worse than ZZ-SS. 

        Although the results so far do not assume well-specified models, in what follows, we suppose $P = F_{x_0}$ for some $x_0 \in \mathbb{R}^d$. Given the smoothness conditions, we have $\mathbb{E}_{Y}S(x_0, Y) = 0$. For the ease of exposition, we set $m =1$. The asymptotic drift for ZZ-CV reduces to 
        \begin{align*}
            b_{\text{cv}, i}(x) &= \frac{-\mathbb{E}_{Y}[S_i(x; Y)]}{\mathbb{E}_{Y}\left[\left|S_i(x, Y) - S_i(X^*, Y) + \mathbb{E}_{Y}S_i(X^*, Y)\right|\right]} \\
            &= \frac{\mathbb{E}_{Y}[|S_i(x; Y)|]}{\mathbb{E}_{Y}\left[\left|S_i(x, Y) - S_i(X^*, Y) + \mathbb{E}_{Y}S_i(X^*, Y)\right|\right]}\ b_{\text{ss}, i}(x)
        \end{align*}

        When the model is heavy-tailed in the sense that for all $y$, $\lim_{\|x\| \to \infty} \|s(x, y)\| = 0$, the dominated convergence theorem implies that the multiplicative factor in the above expression goes to $0$ as $\|x\|$ goes to infinity for each fixed value of $X^*$. 

        \begin{corollary}
            When $\lim_{\|x\| \to \infty} \|s(x, y)\| = 0$ for all $y$, there exists an $M$ such that for all $\|x\| > M$, $b_{\text{cv}, i}(x) < b_{\text{ss}, i}(x)$, $i =  1, \dots, d$. Moreover, $\lim_{\|x\| \to \infty} b_{\text{cv}, i}(x) = 0$ for all $i = 1, \dots, d$.
        \end{corollary}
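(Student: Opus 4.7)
The plan is to exploit the factorization
\[
b_{\text{cv}, i}(x) \;=\; R_i(x)\cdot b_{\text{ss}, i}(x), \qquad R_i(x) \;:=\; \frac{\mathbb{E}_{Y}[|S_i(x; Y)|]}{\mathbb{E}_{Y}\bigl[\bigl|S_i(x, Y) - S_i(X^*, Y) + \mathbb{E}_{Y'} S_i(X^*, Y')\bigr|\bigr]},
\]
stated immediately before the corollary, and to show that $R_i(x) \to 0$ as $\|x\|\to\infty$. Since drifts lie in $[-1,1]^d$ by \eqref{eq:drift_sec1}, this immediately yields $\lim_{\|x\|\to\infty} b_{\text{cv}, i}(x) = 0$; and because $b_{\text{cv}, i}$ and $b_{\text{ss}, i}$ share the same numerator (hence the same sign), $R_i(x) < 1$ for $\|x\|$ beyond some threshold $M$ translates to $|b_{\text{cv}, i}(x)| < |b_{\text{ss}, i}(x)|$ on that region, which is the intended strict ordering once drifts are read with their natural orientation towards the KL-minimizer.

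First I would treat the numerator of $R_i(x)$. The heavy-tail hypothesis $\lim_{\|x\|\to\infty}\|s(x,y)\| = 0$ gives pointwise decay of $S_i(x, Y)$ in $Y$. For each fixed $y$, Assumption~\ref{assum:smoothness} makes $s_i(\cdot, y)$ continuous, and combined with vanishing at infinity this yields the finite envelope $C(y) := \sup_{x \in \mathbb{R}^d} |s_i(x, y)|$. Assuming $C$ is $P$-integrable (a mild strengthening of Assumption~\ref{assum:moment} that is automatic in the models of interest where $s_i$ is uniformly bounded), the dominated convergence theorem delivers $\mathbb{E}_Y[|S_i(x;Y)|] \to 0$.

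Next I would treat the denominator. Conditioning on $X^*$, which is independent of $Y$ by Assumption~\ref{assum:reference}, the integrand converges pointwise in $Y$ to $|S_i(X^*, Y) - \mathbb{E}_{Y'} S_i(X^*, Y')|$. Applying DCT with envelope $C(Y) + |S_i(X^*, Y)| + |\mathbb{E}_{Y'} S_i(X^*, Y')|$ (integrable by Assumption~\ref{assum:moment} once $\mathbb{E}\|X^*\|^2 < \infty$ is used) and then integrating over $X^*$ gives
\[
\lim_{\|x\|\to\infty} \mathbb{E}_Y\bigl[\bigl|S_i(x, Y) - S_i(X^*, Y) + \mathbb{E}_{Y'}S_i(X^*, Y')\bigr|\bigr] \;=\; \mathbb{E}\bigl[\bigl|S_i(X^*, Y) - \mathbb{E}_{Y'}S_i(X^*, Y')\bigr|\bigr],
\]
which is strictly positive whenever $S_i(X^*, Y)$ is not almost-surely constant in $Y$ given $X^*$, that is, outside the exponential-family situation already flagged in the previous remark. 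Combining numerator and denominator, $R_i(x) \to 0$.

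The main obstacle is securing an integrable envelope $C(Y)$ uniform in $x$ for the DCT step on the numerator: the heavy-tail hypothesis supplies only pointwise-in-$y$ decay, so a priori $\sup_x |s_i(x,y)|$ may fail to be $P$-integrable without a further assumption. For the standard heavy-tailed examples (Cauchy-type scores and similar models where $s_i$ is uniformly bounded in $(x, y)$) this envelope is trivially integrable, and more generally one imposes a uniform moment strengthening of Assumption~\ref{assum:moment}. A minor secondary point is ruling out the degenerate denominator limit, which is exactly the exponential-family cancellation noted above and therefore non-generic.
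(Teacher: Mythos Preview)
Your proposal is correct and follows essentially the same route as the paper: the paper's entire argument is the sentence preceding the corollary, which invokes dominated convergence on the ratio $R_i(x)$ to conclude it vanishes at infinity. You have simply fleshed out that one-line justification, and in doing so you have been more careful than the paper itself about the integrable-envelope hypothesis needed for DCT on the numerator, which the paper silently assumes.
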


        Hence, the ZZ-SS will outperform ZZ-CV in tails in terms of the speed of convergence when the model is heavy-tailed. This happens because control variates in general require correlation between terms to offer variance reduction. For heavy-tailed models, as the $s(\cdot, y)$ function becomes flatter in the tails, the terms $s(x, Y)$ and $s(X^*, Y)$ become increasingly uncorrelated. Consequently, the ZZ-CV drift becomes extremely slow. A naive but sensible thing to do in such situations is to run ZZ-SS in the tails and ZZ-CV when the process is closer to the centre. More explicitly, consider the {\it mixed sub-sampling} scheme where $E^j_i$ is defined by,
        \begin{equation}
        \label{eq:mixed}
            E_i^j(x) = \begin{cases}
                s^j_i(x); & \|x\| > M \\
                s^j_i(x) - s^j_i(x^*_n) + n^{-1}\sum_{k=1}^n s^k_i(x^*_n); & \|x\| \le M.
            \end{cases}
        \end{equation}
        The resulting algorithm is still exact. This is so because the switching rates only need to satisfy a local condition to guarantee the invariance of a given target distribution. Since both ZZ-SS and ZZ-CV switching rates satisfy the condition individually, so does the rate for the mixed scheme. More importantly, the mixed sub-sampling scheme exhibits much faster convergence to ZZ-SS and ZZ-CV. As an illustration, consider the Cauchy example from the earlier section. The point of intersection of $b_{\text{ss}}$ and $b_{\text{cv}}$ in the last panel of Figure \ref{fig:drifts} was obtained numerically to be $\approx 1.605$. For comparison purposes, we apply the mixed scheme in \eqref{eq:mixed} with $M = 1.605$. Figure \ref{fig:cauchy_comparison} plots trajectories of the Zig-Zag process under different sub-sampling schemes for the Cauchy model. The solid black line corresponds to the mixed algorithm described above. The proposed method provides an improvement over both ZZ-SS and ZZ-CV. 
        \begin{figure}[t]
            \centering
            \includegraphics[scale = 0.5]{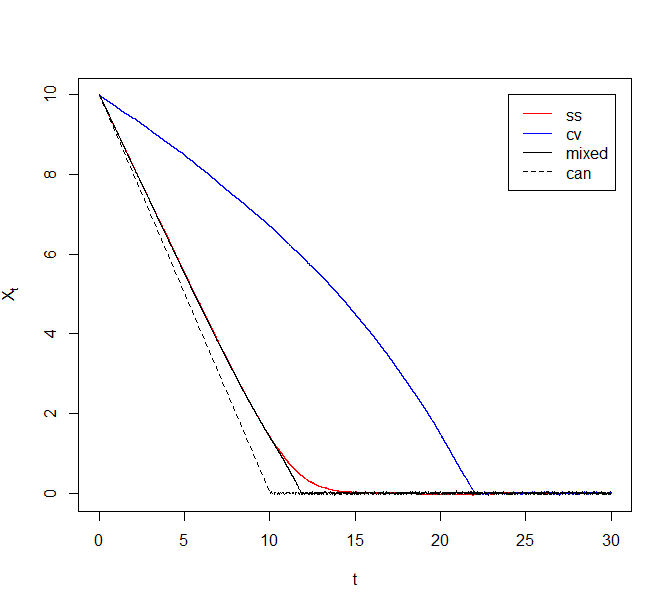}
            \caption{Trajectories of Zig-Zag process for different sub-sampling schemes targeting posterior from a Cauchy model.}
            \label{fig:cauchy_comparison}
        \end{figure}
        More sophisticated choices of control variates might achieve greater variance reduction \citep[see][and references therein]{baker2019control}. However, these might cost the exactness and, in some cases, even the Markov property in the resulting algorithm.             

        On the other hand, close to $x_0$, we argued in the previous section that the limiting drift for ZZ-SS is $0$. Indeed by dominated convergence, we have that $\lim_{x \to x_0} b_{\text{ss}, i}(x) = 0$ for all $i = 1, \dots, d$. Further, suppose the reference points are chosen such that $X^* = x_0$ almost surely. When $d = 1$,
        \begin{align*}
            |b_{\text{cv}}|(x) = \left|\frac{\mathbb{E}_{Y}[S(x; Y)]}{\mathbb{E}_{Y}\left[\left|S(x, Y) - S(x_0, Y)\right|\right]}\right|  = \left|\frac{\mathbb{E}_{Y}[S'(x(Y); Y)}{\mathbb{E}_{Y}\left[\left|S'(x(Y); Y)\right|\right]}\right|,
        \end{align*}
        where $x(Y)$ lies between $x$ and $x_0$. When the model is log-concave, $S'(x(Y), Y) > 0$ almost surely for all $x$. We get $|b_{\text{cv}}(x)| = 1$ for all $x \neq x_0$ and the ZZ-CV is equivalent to the canonical Zig-Zag. This is true for any value of $m$. In general, we get near stationarity,
        \[
            \lim_{x \to x_0} |b_{\text{cv}}|(x) = \left|\frac{\mathbb{E}_{Y}[S'(x_0; Y)}{\mathbb{E}_{Y}\left[\left|S'(x_0; Y)\right|\right]}\right| > 0.
        \]
        Hence, close to $x_0$, the ZZ-CV process performs better than ZZ-SS. But when the model is not log-concave, the limit in the above expression is less than $1$ and the ZZ-CV process is sub-optimal. Observe also that the right-hand side is equal to the absolute value of the drift defined in terms of the asymptotic switching rate \eqref{eq:limrate} for ZZ-CV when $X^* = 0$.

    \subsection{Cost to convergence}
        The fluid limit captures the process in its transient phase and can be used to inform on the time taken to converge to equilibrium. It is, in general, difficult to say exactly when the process has reached equilibrium. But following our analysis so far, say that the process starts behaving as in stationarity once it gets within $n^{-1/2}$ distance of $x_0$. Similar argument was used in \cite{christensen2005} to quantify the time to convergence for random walk Metropolis and the MALA algorithm in the high-dimensional limit. Started from a fixed point, $X^n_0$, the time taken to reach $n^{-1/2}-$neighbourhood of $x_0$ is $O(1)$ for the canonical Zig-Zag and $O(\log n)$ for ZZ-SS. The time taken for ZZ-CV depends on the model and limiting reference point. As seen previously in this section, it can be as quick as $O(1)$ or as slow as $O(\log n)$. In the case of globally bounded model log-density, for example, the total cost then becomes $O(n^2)$ for canonical Zig-Zag and $O(n\log n)$ for ZZ-SS and ZZ-CV.

\section{Discussion}
\label{sec:conclusion}
    In this paper, we rigorously study the effect of sub-sampling on the mixing properties of the Zig-Zag algorithm. Drawing motivation as well as theoretical guarantees from the Bayesian large-sample theory, we prove weak convergence results of different variants of the Zig-Zag process as dataset size $n \to \infty$. Our objective in undertaking this study has been to comment on the overall complexity of the Zig-Zag algorithm in settings of Bayesian inference with big data. We estimate that ZZ-CV has a total algorithmic complexity of $O(1)$ in stationarity while both canonical Zig-Zag and ZZ-SS have $O(n)$. This reveals that the Zig-Zag algorithm is no worse than other reversible MCMC algorithms such as the random walk Metropolis which also has a complexity of $O(n)$ in stationarity \citep{schmon2021optimal}. Moreover, the control variates can be utilized to further gain an order of magnitude in total computational cost. Our results also provide insights into the convergence time of the Zig-Zag samplers initiated out from stationarity. As expected, the canonical Zig-Zag converges to stationarity with optimal speed. The sub-sampling versions remain sub-optimal in general. However, while the corresponding drift for ZZ-SS goes to $0$ in stationarity, ZZ-CV remains positive achieving optimal speed for log-concave model densities. 

    Based on the above observations, we infer strong support for the use of the Zig-Zag algorithm for sampling from Bayesian posteriors for big data. However, we also caution that the superefficient performance of ZZ-CV heavily relies on good reference points. Inappropriately chosen reference points can lead to undesirable behaviours as discussed in Sections \ref{sec:stationary} and \ref{sec:examples}. To mitigate the effect of bad reference points in the transient phase, we propose a mixed scheme whereby vanilla sub-sampling is done in the initial phases of the algorithm and the control variates are introduced later when the sampler is close to stationarity. We empirically show that such a scheme can achieve faster convergence than ZZ-SS and ZZ-CV. That said, this also calls for investigation into more sophisticated control variates and can be a direction of future research.

    The asymptotic behaviour of both ZZ-SS and ZZ-CV depends on the choice of the batch size $m$. Theoretically, large batch size $m$ implies faster convergence and better mixing. In practice, larger $m$ also means additional costs. We show that it is optimal to choose batch size $m = 1$ in terms of total algorithmic complexity. Secondly, we assume that the subsamples are drawn randomly without replacement. As a result, we can invoke the law of large numbers for $U$-statistics to obtain our limit results. There is no added generality in the results in allowing sampling with replacement. The switching rates then resemble $V-$statistics of order $m$ which can be expressed explicitly as a combination of $U$-statistics and the strong law of large numbers still hold \citep{lee2019u}. In practice, however, it is better to do sampling without replacement as it leads to a smaller variance during the implementation.  
    
    In this paper, we have limited ourselves to analysis of the underlying PDMP to study the overall complexity of Zig-Zag algorithms. In practice, however, the total cost of the algorithm also depends on the computational bounds used for Poisson thinning. These computational bounds determine the number of proposed switches in the algorithm and directly affect the total complexity of the sampler. In general, sub-sampling will lead to increased computational bounds as exemplified in Section \ref{sec:cost}. We see the identification of tight and feasible computational bounds as an important direction of further research.   

    The dynamics of a Zig-Zag process only depend on the switching rates and the relationship between switching rates in opposite directions. To obtain our results in this paper, we only rely on the fact that switching rates are sufficiently smooth and they grow to infinity at a certain rate. Although, in this setting, it is a consequence of posterior contraction, similar results as in Sections \ref{sec:transient} and \ref{sec:stationary} can be obtained more generally for PDMPs with growing event rates. Such results can be used to compare different PDMP samplers such as the Bouncy particle sampler \cite{bouchard2018bouncy} or the Coordinate sampler \cite{wu2020coordinate}, and to provide further guidance to practitioners. Thus, we envision the extension of our results to other PDMP methods as important future work.

    We regard this work as contributing in the understanding of this important aspect of PDMP algorithms. While other work has made significant progress on studying dimension dependence of these methods \cite{holderrieth2021cores,deligiannidis2021randomized,andrieu2021hypocoercivity,bierkens2017limit,bierkens2019ergodicity,bierkens2021large,bierkens2022high} with no reference to sample size dependence, important challenges remain in understanding how algorithms perform as a function of both dimension and sample size.


\begin{appendix}
\section{Asymptotic switching rates and the drift}

\subsection{Proof of Propositions \ref{propo:rates} and \ref{propo:drifts}}
    First, consider the simple case when $E^j_i(x) = s^j_i(x)$. Then for each $i$, the terms $E^j_i, j = 1, \dots, n$ are independent and identically distributed. Let $m(n) = m$ be fixed for some $m$. For each $i$, observe that for all $n > m$, $\lambda^n_i(x, v)/n$ is a $U-$statistic of degree $m$ with kernel $k(y_1, \dots, y_m) = \left(m^{-1}\sum s_i(x, y_j)\right)_{+}$ \citep[see][]{arcones1993limit} indexed by $(x, v)$. By the law of large numbers for $U-$statistics, 
    \begin{equation}
    \label{eq:limitrate1}
        \frac{\lambda^n_{\text{ss}, i}(x, v)}{n} \xrightarrow{n \to \infty} \mathbb{E}_{Y_1, \dots, Y_m \overset{iid}{\sim} P}\left[\left(v_i \cdot m^{-1}\sum_{j=1}^m S_i(x; Y_j) \right)_{+}\right] \quad P^{\otimes \mathbb{N}}-\text{almost surely},
    \end{equation}
    for all $(x, v) \in E$ and $i = 1, \dots, d$. Furthermore, given the smoothness assumption (\ref{assum:smoothness}), the convergence is uniform on all compact sets. 

    Suppose $m(n) \to \infty$ as $n \to \infty$. Note that when $m(n) = n$, this corresponds to the canonical rates. Given $(y_j)_{i=1}^n$, $E_S(x) := (\sum_{j \in S}E^j_i(x))/m(n)$ is the value of the SRSWOR estimator of the population mean $\bar{E}(x) := n^{-1} \sum_{j = 1}^n E^j_i(x)$ when the sample $S \in \mathcal{S}_{(n, m(n))}$ is observed. Consider,
    \begin{align*}
        \left|\frac{\lambda^n_i(x, v)}{n} - \frac{(v_i\cdot\sum_{j=1}^n s^j_i(x))_{+}}{n}\right| &= \left|\frac{1}{|\mathcal{S}_{(n,m(n))}|}\sum_{S \in \mathcal{S}_{(n,m(n))}}\left(v_i \cdot E_S(x)\right)_{+} - \left(v_i \cdot \bar{E}(x)\right)_{+}\right| \\
        &\le \frac{1}{|\mathcal{S}_{(n,m(n))}|}\sum_{S \in \mathcal{S}_{(n,m(n))}} \left| \left(v_i \cdot E_S(x)\right)_{+} - \left(v_i \cdot \bar{E}(x)\right)_{+} \right|\\
        &\le \frac{1}{|\mathcal{S}_{(n,m(n))}|}\sum_{S \in \mathcal{S}_{(n,m(n))}} \left| E_S(x) - \bar{E}(x) \right|.
    \end{align*}
    The last term on the right in the above equation is the mean absolute deviation about the mean of the SRSWOR estimator $E_{S}$. This can be bounded above by the standard error of this estimator.

    Given a population $(Y_i)_{i=1}^n$, the variance of the SRSWOR estimator, $\bar{y}$, for estimating population mean, $\bar{Y}$, based on a sample of size $m$, is given by \citep{cochran1977sampling},
    \[
        V(\bar{y}) = \left(\frac{n - m}{n\cdot m}\right) \frac{1}{n-1} \sum_{i=1}^n (Y_i - \bar{Y})^2,
    \]
    Applied to the present context, this implies
    \begin{align*}
        \left|\frac{\lambda^n_i(x, v)}{n} - \frac{(v_i\cdot\sum_{j=1}^n s^j_i(x))_{+}}{n}\right| &\le \sqrt{\left(\frac{n - m(n)}{n\cdot m(n)}\right) \frac{n}{n-1}}\sqrt{\frac{1}{n}\sum_{j=1}^n (E^j_i - \bar{E})^2}.
    \end{align*}
    But also,
    \[
        \frac{1}{n}\sum_{j=1}^n (E^j_i(x) - \bar{E}(x))^2 \xrightarrow{n \to \infty} \text{Var}_{Y \sim P}(S_i(x; Y)) < \infty
    \]
    uniformly on compact sets almost surely by the law of large numbers. And thus, if $m(n) \to \infty$ as $n \to \infty$, the right-hand side goes to $0$ uniformly in $x$ almost surely. Consequently, we have,
    \begin{equation}
    \label{eq:limitrate2}
        \frac{\lambda^n_{\text{ss}, i}(x, v)}{n} \xrightarrow{n \to \infty} \left(v_i\cdot \mathbb{E}_{Y \sim P}S_i(x; Y)\right)_{+}.
    \end{equation}

    Now consider the ZZ-CV case i.e.
    \[
        E^j_i(x) = s^j_i(x) - s^j_i(x^*_n) + \sum_{k=1}^n s^k_i(x^*_n),
    \]
    where $x^*_n = T_n(\bs{y}^{(n)})$ is a realization of $X^*_n$. The sequence of $E^j_i$ is no longer independent and so the switching rate does not resemble a $U-$statistic. Define $G^j_i$ as,
    \[
        G^j_i(x) = s^j_i(x) - (s^j_i(X^*) - \mathbb{E}_{Y \sim P}S_i(X^*; Y)).
    \]
    Let $x^*$ be the realization of $X^*$ such that $x^*_n \to x^*$. Given $X^* = x^*$, $G^j_i$ are conditionally independent since $X^*$ is independent of $\bs{Y}^{(\mathbb{N})}$. Consider the difference,
    \begin{align}
    \label{eq:ei-gi}
        &\left|\frac{1}{|\mathcal{S}_{(n, m(n))}|}\sum_{S \in \mathcal{S}_{(n, m(n))}} \left(v_i\cdot\frac{\sum_{j \in S}E^j_i(x)}{m(n)}\right)_{+} - \frac{1}{|\mathcal{S}_{(n, m(n))}|}\sum_{S \in \mathcal{S}_{(n, m(n))}} \left(v_i\cdot\frac{\sum_{j \in S}G^j_i(x)}{m(n)}\right)_{+} \right| \notag\\
        &\quad \quad \quad \le \frac{1}{m(n)|\mathcal{S}_{(n, m(n))}|}\sum_{S \in \mathcal{S}_{(n, m(n))}} \sum_{j \in S}\left|E^j_i(x) - G^j_i(x)\right| = n^{-1}\sum_{j=1}^n |E^j_i(x) - G^j_i(x)|.
    \end{align}
    However,
    \begin{align*}
        |E^j_i(x) - G^j_i(x)| &= \left|-s^j_i(x_n^*) + s^j_i(x^*) + n^{-1}\sum_{k=1}^ns^k_i(x_n^*) - \mathbb{E}_YS_i(x^*)\right| \\
        &\le \left|s^j_i(x_n^*) - s^j_i(x^*)\right| + \left|n^{-1}\sum_{k=1}^ns^k_i(x_n^*) - \mathbb{E}_YS_i(x^*) \right|
    \end{align*}
    for all $x \in \mathbb{R}$. A second order Taylor's expansion gives,
    \[
      s_i^j(x^*_n) - s^j_i(z) = \nabla s^{j}_i(z)\cdot(x^*_n - z) + (x^*_n - z)^T\nabla^{\otimes 2} s^{j}_i(z^j)(x^*_n - z)
    \]
    where $z^j = z + \theta^j(x^*_n - z)$ for some $\theta^j \in (0, 1)$. Under the assumption that third derivatives are bounded (Assumption \ref{assum:smoothness}),
    \[
      |s_i^j(x^*_n) - s^j_i(x^*)| \le \|\nabla s^{j}_i(x^*)\|\cdot\|x^*_n - x^*\| + M'\|x^*_n - x^*\|^2
    \]
    Then,
    \begin{align*}
       n^{-1}\sum_{j=1}^n |E^j_i(x) - G^j_i(x)| &\le n^{-1}\|x^*_n - x^*\|\sum_{j=1}^n\|\nabla s^{j}_i(x^*)\| + M'\|x^*_n - x^*\|^2 \\
       &\quad \quad \quad \quad + \left|n^{-1}\sum_{k=1}^ns^k_i(x_n^*) - \mathbb{E}_YS_i(x^*) \right|
    \end{align*}
    Given that $\mathbb{E}[|S'(x; Y)|] < \infty$ for all $x$, and $x^*_n \to x^*$, each term on the right hand side goes to $0$. Since the above does not depend on $x$, this convergence is uniform on compact sets. This implies that \eqref{eq:ei-gi} goes to $0$ uniformly almost surely. Now, given $X^* = x^*$, $G^j_i$s are independent and identically distributed. And so, $\lambda^n_i(x, v)/n$ is asymptotically equivalent to,
    \[
        \frac{1}{|\mathcal{S}_{(n, m(n))}|}\sum_{S \in \mathcal{S}_{(n, m(n))}} \left(v_i\cdot(m(n))^{-1}\sum_{j \in S}G^j_i(x)\right)_{+}
    \]
    When $m$ is fixed, the above is a $U-$statistic indexed by $x$. Then by the law of large numbers again, 
    \begin{equation}
    \label{eq:limitrate3}
        \frac{\lambda^n_{\text{cv}, i}(x, v)}{n} \xrightarrow{n \to \infty} \mathbb{E}_{Y_1, \dots, Y_m \overset{iid}{\sim} P}\left[\left(v_i \cdot m^{-1}\sum_{j=1}^m S_i(x, Y_j) - S_i(X^*, Y_j) + \mathbb{E}_{Y \sim P}S_i(X^*; Y)\right)_{+}\right], 
    \end{equation}
    uniformly on compact sets $P^{\otimes \mathbb{N}}-$almost surely. When $m(n) \to \infty$, a similar argument as before gives,
    \begin{equation}
    \label{eq:limitrate4}
        \frac{\lambda^n_{\text{cv}, i}(x, v)}{n} \xrightarrow{n \to \infty} (v_i\cdot\mathbb{E}_{Y \sim P}(S_i(x; Y)))_{+}.
    \end{equation}
    This covers all the cases and, hence, proves Proposition \ref{propo:rates}. The limiting rate $\lambda_i$ is given by the right hand side of equations \eqref{eq:limitrate1}, \eqref{eq:limitrate2}, \eqref{eq:limitrate3}, and \eqref{eq:limitrate4}.

    For the proof of Proposition \ref{propo:drifts}, we first recall the definition of $b^n$ i.e.,
    \begin{equation}
        b^n_i(x) = \begin{cases}
            \dfrac{\lambda^n_i(x, -\bs{1}_d) - \lambda^n_i(x, \bs{1}_d)}{\lambda^n_i(x, -\bs{1}_d) + \lambda^n_i(x, \bs{1}_d)}, & x \notin H^n_i, \\
            0, & \text{otherwise}.
    \end{cases}
    \end{equation}
    Let $\omega$ be a point from the almost sure set in Proposition 3.1. For this $\omega$, it is then straightforward that the numerator and the denominator in $b^n_i$ converge individually uniformly on all compact subsets of $\mathbb{R}^d$ to $\lambda_i(x, -\bs{1}_d) - \lambda_i(x, \bs{1}_d)$ and $\lambda_i(x, -\bs{1}_d) + \lambda_i(x, \bs{1}_d)$ respectively. Let $K$ be a compact subset of $H^c$. Since $\lambda_i$ is continuous, there exists an $\epsilon > 0$ such that $\lambda_i(x, -\bs{1}_d) + \lambda_i(x, \bs{1}_d) >\epsilon$ for all $x \in K$. Thus, the fraction $b^n_i$ converges uniformly to $b_i$ on $K$ for all $i = 1, \dots, d$. Finally, observe that for any choice of switching rates,
    \[
        b_i(x) = \frac{-\mathbb{E}_{Y\sim P} [S_i(x; Y)]}{\lambda_i(x, -\bs{1}_d) + \lambda_{i}(x, \bs{1}_d)}, 
    \]
    is locally Lipschitz on $H^c$ and hence a unique solution to the ODE \eqref{eq:zz_ode} exists.

\section{Fluid limits in the transient phase}
\label{sec:proof_transient}
    \subsection{Generator of the Zig-Zag process}
        Let $(Z_t)_{t \ge 0} = (X_t, V_t)_{t \ge 0}$ be a $d-$dimensional Zig-Zag process evolving on $E = \mathbb{R}^d \times \{-1, 1\}^d$ with switching rates $(\lambda_i)_{i=1}^d$. Define an operator $\mathcal{L}$ with domain $\mathcal{D}(\mathcal{L})$ where
        \[
            \mathcal{D}(\mathcal{L}) = \{f \in \mathcal{C}(E): f(\cdot, v) \text{ is absolutely continuous for all } v \in \mathcal{V}\}
        \]
        by
        \begin{equation}
        \label{eq:zigzag_generator1}
            \mathcal{L}f(x,v) = v\cdot \weak f(x, v) + \sum_{i=1}^d\lambda_i(x, v)\{f(x, F_i(v)) - f(x, v)\}, \quad (x,v) \in E,
        \end{equation}
        where $\weak$ denotes the weak derivative operator in $\mathbb{R}^d$ such that for all $(x, v) \in E$,
        \[
            f(x + tv, v) - f(x, v)= \int_0^t v\cdot\weak f(x + vs, v)\ ds, \quad t \ge 0.
        \] 
        When $f$ is differentiable in its first argument, $\weak f = \nabla f$, where $\nabla f$ denotes the gradient of $f$. The pair $(\mathcal{L}, \mathcal{D}(\mathcal{L}))$ is the extended generator of the Markov semigroup of the Zig-Zag process $(Z_t)_{t \ge 0}$ (see \cite{davis1993markov}).

    \subsection{Proof of Theorem \ref{thm:transient}}
        Fix $1/2 < \delta < 1$. Define a sequence $\{\hat{W}^n\}_{n =1}^{\infty}$ of discrete time processes as follows:
        \[
            \hat{W}^n_k = \tilde{Z}^n_{k/n^{\delta}}; \quad k \in \{0\} \cup \mathbb{N}.
        \]
        Let for $t \ge 0$ and all $n$, $W^n_t = \hat{W}^n_{[tn^{\delta}]} = \tilde{Z}^n_{[tn^{\delta}]/n^{\delta}}$. Denote by $\tilde{X}^n_t$ the position component of the stopped process $\tilde{Z}^n_t$. For all $t \ge 0$, $|t - [tn^{\delta}]/n^{\delta}| \le n^{-\delta}$. Since the position process moves with unit speed, for all $T > 0$, 
        \[
            \sup_{0 \le s \le T}\|\tilde{X}^n_{[sn^{\delta}]/n^{\delta}} - \tilde{X}^n_s\| \le \sqrt{d}n^{-\delta}.
        \]
        Hence, it suffices to show the weak convergence of $\tilde{X}^n_{[tn^{\delta}]/n^{\delta}}$. 

        For an arbitrary test function $f \in \mathcal{C}_{c}^{\infty}(\mathbb{R}^d)$, define $f_{\Delta}$ to be the restriction of $f$ to $H^c \cup \Delta$ where we set $f(\Delta) = 0$. In fact, it is enough to consider functions in $\mathcal{C}_{c}^{\infty}(\mathbb{R}^d)$ for which the compact support $K \subset H^c$. These functions will then form a core for the generator of the limiting ODE. Also, define for all $n$, $f^n_{\Delta}: \mathbb{R}^d \times \mathcal{V} \to \mathbb{R}$ as, 
        \[
            f^n_{\Delta}(x, v) = \begin{cases}
                f(x), &  (x, v) \in (H^n)^c \times \mathcal{V}, \\
                0, & \text{otherwise}.
            \end{cases}
        \]
        It is clear that for all $(x, v) \in (H^n \cup H)^c \times \mathcal{V}$, $f^n_{\Delta}(x, v) = f_{\Delta}(x)$. Let $\mathcal{L}^n$ be the generator of the stopped process $\tilde{Z}^n_t$. For all $n$, $f^n_{\Delta}$ belongs to the domain of the generator $\mathcal{L}^n$ and by \eqref{eq:zigzag_generator1},
        \[
            \mathcal{L}^nf^n_{\Delta}(x, v) = \begin{cases}
                v \cdot \nabla f(x), & x \in (H^n)^c, v \in \mathcal{V}, \\
                0, & \text{otherwise}.
            \end{cases}
        \]
        Moreover, if $\mathcal{L}$ denotes the generator of $\tilde{X}_t$, then,
        \[
            \mathcal{L}f_{\Delta}(x) = \begin{cases}
                b(x)\cdot\nabla f(x), & x \in H^c, \\
                0, & \text{otherwise}.
            \end{cases}
        \]

        By construction for all $n$ and $T > 0$, $\Prob(\tilde{Z}^n_t \in ((H^n)^c \times \mathcal{V}) \cup \Delta, \ 0 \le t \le T) = 1$. For all $n$, define a sequence of sets $\bar{H}^n$ as,
        \[
            \bar{H}^n = \{x \in \mathbb{R}^d: d(x, \cup_{i=1}^dH^n_i) > n^{-\delta}\sqrt{d}\}.
        \]
        For all $n$, $\bar{H}^n \subset (H^n)^c$. As $n \to \infty$, $(H^n)^c \setminus \bar{H}^n \to \emptyset$. Let $E^n = (\bar{H}^n \times \mathcal{V}) \cup \Delta$. Then, 
        \[
            \lim_{n \to \infty}\Prob(\tilde{Z}^n_t \in E^n, \ 0 \le t \le T) = 1.
        \]
        Moreover, since the position component moves with the unit speed in each direction, for all $x \in \bar{H}^n$, $\tilde{X}^n_t \in (H^n)^c$ for all $s \le n^{-\delta}$. Now, let $\hat{G}^{n, \delta}$ be the discrete time generator of $\hat{W}^n$. Then, for all $x \in \bar{H}^n$, 
            \begin{align*}
            \hat{G}^{n, \delta}f^n_{\Delta}(x, v) &= \mathbb{E}\left[f^n_{\Delta}(\hat{W}^n_1) - f^n_{\Delta}(\hat{W}^n_0)\ | \hat{W}^n_0 = (x,v)\right] \\
            &= \mathbb{E}\left[f^n_{\Delta}(\tilde{Z}^n_{1/n^{\delta}}) - f^n_{\Delta}(\tilde{Z}^n_0)\ | Z^n_0 = (x,v)\right]\\
            &= \mathbb{E}_{(x,v)} \int_0^{1/n^{\delta}} \mathcal{L}^nf^n_{\Delta}(\tilde{Z}^n_s) \ ds. \\
            &= \mathbb{E}_{(x,v)} \int_0^{n^{-\delta}} \tilde{V}^n_s\cdot \nabla f(\tilde{X}^n_s) \ ds. 
        \end{align*}
        The infinitesimal generator $G^{n, \delta}$ of $W^n$ is then given by,
        \[
            G^{n, \delta}f^n_{\Delta}(x,v) = n^{\delta} \cdot \hat{G}^{n, \delta} f^n_{\Delta}(x, v) = \frac{1}{n^{-\delta}}\mathbb{E}_{(x,v)} \int_0^{n^{-\delta}} \tilde{V}^n_s\cdot \nabla f(\tilde{X}^n_s) \ ds. 
        \]  
        Also, we have $G^{n, \delta}f^n_{\Delta}(\Delta) = 0$. Because the position process moves with unit speed in each coordinate, the change in the gradient term $ \nabla f(\tilde{X}^n_s)$ during the short time interval $(0, n^{-\delta})$ is of the order $n^{-\delta}$. Hence, it is possible to replace $\tilde{X}^n_s$ by $\tilde{X}^n_0$. Define $\tilde{G}^{n, \delta}$ as,
        \begin{equation}
        \label{eq:gn_tilda}
            \tilde{G}^{n, \delta}f^n_{\Delta}(x, v) = \frac{1}{n^{-\delta}}\mathbb{E}_{(x,v)} \int_0^{n^{-\delta}} \tilde{V}^n_s\cdot \nabla f(x) \ ds, \quad x \in H^n, v \in \mathcal{V},
        \end{equation}
        and $0$ otherwise.

        \begin{lemma}
        \label{lemma:zz_trans_1}
            The following is true:
            \[
                \lim_{n \to \infty} \sup_{(x, v) \in E^n}\left|G^{n, \delta}f^n_{\Delta}(x, v) - \tilde{G}^{n, \delta}f^n_{\Delta}(x, v)\right| = 0.
            \]
        \end{lemma}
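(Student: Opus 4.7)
The plan is to exploit the fact that on $E^n$ the position component cannot travel far enough during a time window of length $n^{-\delta}$ to change the gradient of $f$ appreciably, and to combine this with the fact that the stopped process actually coincides with the unstopped one throughout this window, so no cemetery-state subtleties arise.

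First, I would handle the cemetery point: both $G^{n,\delta} f_\Delta^n$ and $\tilde G^{n,\delta} f_\Delta^n$ vanish at $\Delta$, so it suffices to prove the uniform bound over $(x,v) \in \bar H^n \times \mathcal{V}$. Next, the key geometric observation: for $x \in \bar H^n$ we have $d(x, H^n) > n^{-\delta}\sqrt d$, while the Zig-Zag position moves with component-wise unit speed, so $\|\tilde X_s^n - x\| \le s \sqrt d \le n^{-\delta}\sqrt d$ for $s \in [0, n^{-\delta}]$. Consequently $\tilde X_s^n \in (H^n)^c$ throughout this interval and the process is not stopped; equivalently $\tilde Z_s^n = Z_s^n$ and $f_\Delta^n(\tilde Z_s^n) = f(\tilde X_s^n)$ on $[0, n^{-\delta}]$.

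Then I would write the difference as
\[
G^{n,\delta} f_\Delta^n(x,v) - \tilde G^{n,\delta} f_\Delta^n(x,v)
= n^{\delta}\, \mathbb{E}_{(x,v)}\!\int_0^{n^{-\delta}}
\tilde V_s^n \cdot \bigl(\nabla f(\tilde X_s^n) - \nabla f(x)\bigr)\, ds,
\]
and use that $f \in \mathcal{C}_c^\infty(\mathbb{R}^d)$, so $\nabla f$ is globally Lipschitz with some constant $L_f$. Combining $\|\tilde V_s^n\| \le \sqrt d$ with $\|\tilde X_s^n - x\| \le s\sqrt d$ gives the pointwise bound $|\tilde V_s^n \cdot (\nabla f(\tilde X_s^n) - \nabla f(x))| \le d L_f \, s$. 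Integrating and multiplying by $n^{\delta}$ yields a bound of order $n^{-\delta}$ that does not depend on $(x,v)$, which gives the uniform limit.

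I do not anticipate a serious obstacle here; the lemma is essentially a Taylor/Lipschitz replacement argument. The only delicate point is the first step: one has to be sure that the truncation to $\bar H^n$ is enough to guarantee that the stopping time $\tau_n$ exceeds $n^{-\delta}$ almost surely when the process starts in $\bar H^n$, so that $f_\Delta^n$ can be replaced by $f$ inside the expectation without any boundary error. Once this is in place, the Lipschitz estimate for $\nabla f$ combined with the $O(n^{-\delta})$ displacement yields a uniform $O(n^{-\delta})$ bound on the difference, which completes the proof.
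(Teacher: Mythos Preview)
Your proposal is correct and essentially identical to the paper's own proof: both express the difference as $n^{\delta}\,\mathbb{E}_{(x,v)}\int_0^{n^{-\delta}} \tilde V^n_s\cdot(\nabla f(\tilde X^n_s)-\nabla f(x))\,ds$, use that $\nabla f$ is globally Lipschitz (since $f\in\mathcal{C}_c^\infty$), bound $\|\tilde V^n_s\|\le\sqrt d$ and $\|\tilde X^n_s-x\|\le s\sqrt d$, and conclude a uniform $O(n^{-\delta})$ bound. Your treatment of the cemetery state and of why the stopping time cannot occur before $n^{-\delta}$ when starting from $\bar H^n$ is a bit more explicit than the paper's, but the argument is the same.
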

        \begin{proof}
            Since $f \in C^{\infty}_{c}(\mathbb{R}^d)$, we have for some $C < \infty$ and all $x \in \bar{H}^n\cap H$,
            \begin{align*}
                \left|G^{n, \delta}f^n_{\Delta}(x, v) - \tilde{G}^{n, \delta}f^n_{\Delta}(x, v)\right| &= \frac{1}{n^{-\delta}}\left|\mathbb{E}_{(x,v)} \int_0^{n^{-\delta}} \tilde{V}^n_s\cdot \nabla f(\tilde{X}^n_s) - \tilde{V}^n_s\cdot \nabla f(X^n_0) \ ds\right| \\
                &\le \frac{1}{n^{-\delta}}\mathbb{E}_{(x,v)} \int_0^{n^{-\delta}} \left| \tilde{V}^n_s\cdot( \nabla f(\tilde{X}^n_s) -  \nabla f(X^n_0)) \right| \ ds \\
                &\le \frac{\sqrt{d}}{n^{-\delta}}\mathbb{E}_{(x,v)} \int_0^{n^{-\delta}} \left\| \nabla f(\tilde{X}^n_s) -  \nabla f(X^n_0) \right\| \ ds \\
                &\le \frac{\sqrt{d}}{n^{-\delta}}\mathbb{E}_{(x,v)} \int_0^{n^{-\delta}} C\left\|\tilde{X}^n_s - X^n_0 \right\| \ ds \\
                &\le C dn^{-\delta}.
            \end{align*}
             The last inequality follows because $X^n$ moves with the unit speed in each coordinate. The right-hand side goes to $0$ since $\delta > 0$.
        \end{proof}
        Thus, we can replace $G^{n, \delta}$ by $\tilde{G}^{n, \delta}$. Recall that $V^n_t$ jumps at random times according to the time-varying rates $(\lambda^n_{i}(X^n_t, \cdot))_{i=1}^d$. If $(\lambda^n_i)_{i=1}^d$ do not change much during a short time interval of length $n^{-\delta}$, we can replace $V^n_t$ by another process $\bar{V}^n_t$ (say) which evolves according to fixed rates $(\lambda^n_i(x, \cdot))_{i=1}^d$. 

        \begin{lemma}
        \label{lemma:zz_trans_2}
           Suppose Assumptions \ref{assum:smoothness} and \ref{assum:moment} hold. Let $b^n$ be as defined in \eqref{eq:drift_sec1}. Then,
            \[
                \lim_{n \to \infty} \sup_{(x, v) \in E^n}\left|\tilde{G}^{n, \delta}f^n_{\Delta}(x, v) - b^n(x) \cdot  f(x)\right| = 0, \quad \text{in }\mathbb{P}-\text{almost surely}.
            \]
        \end{lemma}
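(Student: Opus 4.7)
Proof plan.

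The plan is to replace the velocity process $\tilde V^n_s$ appearing in the definition \eqref{eq:gn_tilda} of $\tilde G^{n,\delta}f^n_\Delta(x,v)$ by an auxiliary velocity process whose switching rates are frozen at the starting point $x$, and then to exploit the fact that this frozen two-state chain mixes on the $O(n^{-1})$ time scale, which is strictly shorter than the averaging window $n^{-\delta}$ since $\delta<1$. A key observation that makes the analysis tractable is that, by \eqref{eq:subsampling_rates}, $\lambda^n_i(x,v)$ depends on $v$ only through $v_i$; consequently the frozen velocity process factorises into $d$ independent two-state continuous-time Markov chains and everything can be done coordinate by coordinate. Since $f$ enters only through $\nabla f(x)$, and since test functions $f\in C_c^\infty(\mathbb R^d)$ with $\mathrm{supp}(f)\subset H^c$ form a core for the limiting generator $\mathcal L$, it is enough to establish the convergence for such $f$; this restriction will be essential to obtain the uniform lower bound needed on the total switching rate in the final step.

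The first step is to couple $\tilde V^n$ with an auxiliary process $\bar V^n$ on a common probability space via a thinning construction driven by a shared Poisson point measure, where the $i$-th component $\bar V^n_{i,\cdot}$ is a two-state chain on $\{-1,+1\}$ with flip rates $\lambda^n_i(x,\mathbf 1_d)$ and $\lambda^n_i(x,-\mathbf 1_d)$, started at $v_i$. Under this coupling, the expected number of disagreement instants over $[0,n^{-\delta}]$ is bounded by $\mathbb E_{(x,v)}\int_0^{n^{-\delta}}\sum_i |\lambda^n_i(\tilde X^n_s,\tilde V^n_s)-\lambda^n_i(x,\tilde V^n_s)|\,ds$. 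The deterministic estimate $\|\tilde X^n_s-x\|\le\sqrt d\,n^{-\delta}$ together with the local Lipschitz property of $\lambda^n_i$ with constant $O(n)$ (from Assumption \ref{assum:smoothness} and Proposition \ref{propo:rates}) makes this quantity $O(n^{1-2\delta})$ uniformly on any compact set, which vanishes because $\delta>1/2$. Combined with the uniform bound $|\tilde V^n_{i,s}\nabla_i f(x)-\bar V^n_{i,s}\nabla_i f(x)|\le 2\|\nabla f\|_\infty$, this shows that $\tilde G^{n,\delta}f^n_\Delta(x,v)$ agrees with $\sum_i\nabla_i f(x)\cdot n^\delta\int_0^{n^{-\delta}}\mathbb E[\bar V^n_{i,s}]\,ds$ up to a uniformly vanishing error.

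The second step is the elementary two-state chain identity
$$\mathbb E[\bar V^n_{i,s}]=b^n_i(x)+(v_i-b^n_i(x))e^{-a^n_i(x)s},\qquad a^n_i(x):=\lambda^n_i(x,\mathbf 1_d)+\lambda^n_i(x,-\mathbf 1_d),$$
which yields
$$n^\delta\int_0^{n^{-\delta}}\mathbb E[\bar V^n_{i,s}]\,ds=b^n_i(x)+(v_i-b^n_i(x))\,\frac{n^\delta}{a^n_i(x)}\bigl(1-e^{-a^n_i(x)n^{-\delta}}\bigr).$$
On the compact support of $\nabla f$, which lies in $H^c$, Proposition \ref{propo:rates} together with continuity and strict positivity of $\lambda_i(\cdot,\mathbf 1_d)+\lambda_i(\cdot,-\mathbf 1_d)$ yields $a^n_i(x)\ge cn$ uniformly for some $c>0$ and all $n$ large, so the residual term is $O(n^{\delta-1})\to 0$ uniformly in $x$. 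The main obstacle will be setting up the Poisson-thinning coupling carefully enough that the $L^1$ discrepancy between $\int_0^{n^{-\delta}}\tilde V^n_{i,s}\,ds$ and $\int_0^{n^{-\delta}}\bar V^n_{i,s}\,ds$ can be controlled by the rate-difference integral uniformly in the starting point $(x,v)$; once that coupling is in place, the rest reduces to the elementary chain computation above and the uniform-on-compacts convergence of switching rates already supplied by Proposition \ref{propo:rates}.
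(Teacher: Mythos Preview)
Your proposal is correct and follows essentially the same route as the paper: couple the true velocity process to a frozen-rate auxiliary process $\bar V^n$ via Poisson thinning, control the coupling error through the local Lipschitz bound on $\lambda^n_i$ (yielding an $O(n^{1-2\delta})$ term that vanishes since $\delta>1/2$), and then use the explicit two-state chain formula to show the time-averaged mean of $\bar V^n_i$ equals $b^n_i(x)$ up to an $O(n^{\delta-1})$ error on $\mathrm{supp}(f)\cap\bar H^n$. The paper carries out exactly this program, with a somewhat more laborious explicit construction of the coupling and a separate case analysis when one of the two rates $\lambda^n_i(x,\pm\mathbf 1_d)$ vanishes; your single formula $\mathbb E[\bar V^n_{i,s}]=b^n_i(x)+(v_i-b^n_i(x))e^{-a^n_i(x)s}$ already covers those degenerate cases as long as $a^n_i(x)>0$, so your write-up is if anything slightly cleaner there.
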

        \begin{proof}
            For any given $(x, v) \in \bar{H}^n \times \mathcal{V}$, we will reconstruct the parent Zig-Zag process $Z^n_t$ along with the $\bar{V}^n_t$ process on a common probability space uptil time $n^{-\delta}$ and with initial conditions $Z^n_0 = (x, v)$ and $\bar{V}^n_0 = v$. To that end, Let $H(x, r)$ denote the hyperpercube in $\mathbb{R}^d$ of length $2r$ centered at $x$, i,e,
            \[
                H(x, r) = \{y \in \mathbb{R}^d: |y _i - x_i| \le r, \ i = 1, \dots, d\}.
            \]
            For all $n$, $H(x, n^{-\delta}) \subset H^n$ for all $x \in \bar{H}^n$. Fix $x \in \bar{H}^n$. Given $x$, let $\lambda^n(x, v) = \sum_{i=1}^d \lambda^n_i(x, v)$ be the total switching rate and define $\nu^n$ by,
            \[
                \nu^n = \sup\{\lambda^n(y, v): y \in H(x, n^{-\delta})\} < \infty.
            \]
            Since $\lambda^n_i(x, \bs{1}_d) + \lambda^n_i(x, -\bs{1}_d) > 0,\ i = 1, \dots, d$ for all $x \in (H^n)^c$, it follows that $\nu^n > 0$. Define, for each $i = 1, \dots, d$,
            \begin{align*}
                p_{0, i}^{n, x}(y, v) := \frac{\min\{\lambda^n_i(y, v), \lambda^n_i(x, v)\}}{2\nu^n},\\
                p_{1, i}^{n, x}(y, v) := \frac{(\lambda^n_i(y, v) - \lambda^n_i(x, v))_{+}}{2\nu^n},\\
                p_{2, i}^{n, x}(y, v) := \frac{(\lambda^n_i(y, v) - \lambda^n_i(x, v))_{-}}{2\nu^n},    
            \end{align*}
            for $y \in H(x, n^{-\delta})$ and $v \in \mathcal{V}$. We can chose an auxiliary measurable space $(S, \mathcal{S})$, a probability measure $\mathbb{P}^{n, x}$ on $(S, \mathcal{S})$, and a measurable function $F:\mathbb{R}^d \times\mathcal{V}\times\mathcal{V}\times S \to \mathbb{R}^d\times\mathcal{V}\times\mathcal{V}$ satisfying the following. For any $y \in H(x, n^{-\delta})$ and $v\in \mathcal{V}$,
            \begin{align*}
                \mathbb{P}^{n, x}(u \in S: F(y, v, v, u) = (y, F_i(v), F_i(v))) &= p_{0, i}^{n, x}(y, v), \\
                \mathbb{P}^{n, x}(u \in S: F(y, v, v, u) = (y, F_i(v), v)) &= p_{1, i}^{n, x}(y, v), \\
                \mathbb{P}^{n, x}(u \in S: F(y, v, v, u) = (y, v, F_i(v))) &= p_{2, i}^{n, x}(y, v), \\
                \mathbb{P}^{n, x}(u \in S: F(y, v, v, u) = (y, v, v)) &= 1 - \sum_{i=1}^d(p^{n,x}_{0, i}(y, v) + p^{n,x}_{1, i}(y, v) + p^{n,x}_{2, i}(y, v)) \\
                &= 1 - \frac{1}{2\nu^n}\sum_{i=1}^d\max\{\lambda^n_i(y, v), \lambda^n_i(x, v)\}.
            \end{align*}
            for all $i = 1, \dots, d$. Also, for any $v' \in \mathcal{V}$, $v' \neq v$,
            \begin{gather*}
                \mathbb{P}^{n, x}(u \in S: F(y, v, v', u) = (y, F_i(v), F_j(v))) = \frac{\lambda^n_i(y, v)}{2\nu^n}\cdot \frac{\lambda^n_j(x, v)}{2\nu^n}, \\
                \mathbb{P}^{n, x}(u \in S: F(y, v, v', u) = (y, F_i(v), v')) = \frac{\lambda^n_i(y, v)}{2\nu^n}\cdot \left(1 - \frac{\lambda^n(x, v')}{2\nu^n}\right), \\
                \mathbb{P}^{n, x}(u \in S: F(y, v, v', u) = (y, v, F_j(v))) = \left(1 - \frac{\lambda^n(y, v)}{2\nu^n}\right)\cdot \frac{\lambda^n_j(x, -v)}{2\nu^n}, \\
                \mathbb{P}^{n, x}(u \in S: F(y, v, v', u) = (y, v, v')) = \left(1 - \frac{\lambda^n(y, v)}{2\nu^n}\right)\cdot \left(1 - \frac{\lambda^n(x, v')}{2\nu^n}\right).
            \end{gather*}
            for all $i, j \in 1, \dots, d$. Finally, for any $y \notin H(x, n^{-\delta})$ and $v, v' \in \mathcal{V}$,
            \[
                \mathbb{P}^{n, x}(u \in S: F(y, v, v', u) = (x + vn^{-\delta}, v, v')) = 1
            \]
            Let $N_t$ be a time-homogeneous Poisson process on $(S, \mathcal{S})$ with intensity $2\nu^n$. Let ${U_k}$ be a sequence of identical and independent random variables on $(S, \mathcal{S})$ with law $\Prob^{n, x}$. Let $T_1, T_2, \dots$ be the sequence of arrival times for $N_t$. Fix $v \in \mathcal{V}$ and given $(x,v)$, set $(X^{n}_0, V^{n}_0) = (x, v)$ and $\bar{V}^n_0 = v$. Recursively define for $k \ge 0$,
            \begin{gather*}
                (X^n_t, V^n_t, \bar{V}^n_t) = (X^n_{T_k} + V^n_{T_{k}}(t - T_{k}), V^n_{T_k}, \bar{V}^n_{T_k}), \quad t \in [T_k, T_{k+1}),\\
                (X^n_{T_k}, V^n_{T_{k}}, \bar{V}^n_{T_k}) = F(X^n_{T_{k-1}} + V^n_{T_{k-1}}(T_{k} - T_{k-1}), V^n_{T_{k-1}}, \bar{V}^n_{T_{k-1}}, U_k).
            \end{gather*}
            By construction, the process $(X^n_t, V^n_t, \bar{V}^n_t)_{t < n^{-\delta}}$ is a PDMP with event rate $2\nu^n$ and transition kernel determined by the law of the random variable $F$. At each event time $T_k$, if $V^n_{T_k} = \bar{V}^n_{T_k}$, there are three possibilities: both $V^n$ and $\bar{V}^n$ jump together to the same location, or only one of them jumps, or neither of them jumps. On the other hand if $V^n_{T_k} \neq \bar{V}^n_{T_k}$, then they both jump independently of each other with certain probabilities. The probabilites are chosen such that marginally, $(X^n_t, V^n_t)_{t < n^{-\delta}}$ is a Zig-Zag process on $\mathbb{R}^d \times \mathcal{V}$ with initial state $(X^n_0, V^n_0) = (x, v)$ and rates $(\lambda^n_i)_{i=1}^d$. Moreover, the process $(\bar{V}^n_t)_{t \ge 0}$ is marginally a continuous time Markov process on the $\mathcal{V}$ space with $\bar{V}^n_0 = v$ and generator given by,
            \[
                G^n_xf(v) = \sum_{i=1}^d \lambda^n_i(x, v)(f(F_i(v)) - f(v)).
            \]

            Now, let $\tilde{Z}^n_t$ be the stopped process stopped at the random time that $X^n_t$ enters $H^n$. But for $x \in \bar{H}^n$, by construction $X^n_t \notin H^n$ for all $t \le n^{-\delta}$. And so $(\tilde{X}^n_t, \tilde{V}^n_t) = (X^n_t, V^n_t)$ for all $t \le n^{-\delta}$. Consider, for $(x, v) \in \bar{H}^n \times \mathcal{V}$,
            \begin{align*}
                \left|\tilde{G}^{n, \delta}f^n_{\Delta}(x, v) - b^n(x)\cdot\nabla f(x)\right| &= \left|\frac{1}{n^{-\delta}}\mathbb{E}_{(x,v)} \int_0^{n^{-\delta}} \tilde{V}^n_s\cdot \nabla f(x) \ ds - b^n(x)\cdot\nabla f(x)\right| \\
                &= \left|\frac{1}{n^{-\delta}}\mathbb{E}_{(x,v)} \int_0^{n^{-\delta}} V^n_s\cdot \nabla f(x) \ ds - b^n(x)\cdot\nabla f(x)\right| \numberthis \label{eq:Gndelta - bn}.
            \end{align*}
            For each $n$, let $\tau^n_c$ be the first time less than or equal to $n^{-\delta}$ such that $V^n_s$ and $\bar{V}^n_s$ are not equal, i.e.,
            \[
                \tau^n_c := \inf\{0 \le s < n^{-\delta}: V^n_s \neq \bar{V}^n_s\}.
            \]
            If no such time exists, set $\tau^n_c = n^{-\delta}$. We will call this time as the {\it decoupling time}. We will show that the probability of decoupling before time $n^{-\delta}$ goes to $0$ as $n \to \infty$. For any $x, y \in \mathbb{R}^d$, and all $i = 1, \dots, d$,
            \begin{align*}
              \left|\lambda^n_i(y, v) - \lambda^n_i(x, v)\right| &\le \frac{n}{m|\mathcal{S}_{(n, m)}|}\sum_{S \in \mathcal{S}_{(n, m)}}\sum_{j \in S}\left| E^j_i(y) - E^j_i(x)\right| \\
              &= \sum_{j = 1}^n \left| E^j_i(y) - E^j_i(x)\right|\\
              &= \sum_{j = 1}^n \left| s^j_i(y) - s^j_i(x)\right|.
            \end{align*}
            The last equality is true irrespective of whether the process is ZZ-SS or ZZ-CV. A second order Taylor's expansion gives,
            \[
              s_i^j(y) - s^j_i(x) = \nabla s^{j}_i(x)\cdot(y - x) + (y - x)^T\nabla^{\otimes 2} s^{j}_i(x^j)(y - x)
            \]
            where $x^j = x + \theta^j(y - x)$ for some $\theta \in (0, 1)$. Under the assumption that third derivatives are bounded,
            \[
              |s_i^j(y) - s^j_i(x)| \le \|\nabla s^{j}_i(x)\|\cdot\|y - x\| + M'\|y - x\|^2
            \]
            Consequently,
            \[
               \left|\lambda^n_i(y, v) - \lambda^n_i(x, v)\right| \le \|y - x\|\sum_{j=1}^n\|\nabla s^{j}_i(x)\| + M'n\|y - x\|^2
            \]
            for all $x, y \in \mathbb{R}^d$ and $i = 1, \dots, d$. Let $p^n(s)$ be the probability that $V^n$ and $\bar{V}^n$ decouple at time $s$ given that they were coupled till time $s$. Then, for $s \le n^{-\delta}$
            \begin{align*}
                p^n(s) &= p^{n,x}_{1}(X^n_s, v) + p^{n,x}_{2}(X^n_s, v) \\
                &= \frac{1}{2\nu^n}\sum_{i=1}^d\left|\lambda^n_i(X^n_s, v) - \lambda^n_i(x, v)\right| \\
                &\le \frac{1}{2\nu^n}\sum_{i=1}^d\left(\|X^n_s - x\|\sum_{j=1}^n\|\nabla s^{j}_i(x)\| + M'n\|X^n_s - x\|^2\right)\\
                &\le \frac{1}{2\nu^n}\sum_{i=1}^d\left(\sqrt{d}n^{-\delta}\sum_{j=1}^n\|\nabla s^{j}_i(x)\| + M'dn^{1-2\delta}\right)
            \end{align*}
            This gives,
            \begin{align*}
                \Prob(\tau_c < n^{-\delta}) = \mathbb{E}\left[\sum_{k=1}^{N^{n}(n^{-\delta})} \Prob(\tau_c = T_k)\right]
                &= \mathbb{E}\left[\sum_{k=1}^{N^{n}(n^{-\delta})} p(T_k)\Prob(\tau_c \ge T_k)\right] \\
                &\hspace{-80pt}\le \mathbb{E}\left[\sum_{k=1}^{N^{n}(n^{-\delta})} p(T_k)\right] \\
                &\hspace{-80pt}\le \frac{1}{2\nu^n}\sum_{i=1}^d\left(\sqrt{d}n^{-\delta}\sum_{j=1}^n\|\nabla s^{j}_i(x)\| + M'dn^{1-2\delta}\right)\mathbb{E}\left(N^{n}(n^{-\delta})\right) \\
                &\hspace{-80pt}\le \sqrt{d}n^{-2\delta}\sum_{i=1}^d\sum_{j=1}^n\|\nabla s^{j}_i(x)\| + M'd^2n^{1-3\delta}. \numberthis \label{eq:probtaun}
            \end{align*}
            The last equality comes from the fact $\mathbb{E}(N^n(n^{-\delta})) = 2\nu^nn^{-\delta}$. Consider,
            \begin{align*}
                &\left\|\frac{1}{n^{-\delta}}\mathbb{E} \int_0^{n^{-\delta}} V^n_s\cdot\nabla f(x)\ ds - \frac{1}{n^{-\delta}}\mathbb{E} \int_0^{n^{-\delta}} \bar{V}^n_s\cdot\nabla f(x)\ ds\right\| \\
                &\quad \quad \quad \quad = \frac{1}{n^{-\delta}}\left\|\mathbb{E}\int_0^{n^{-\delta}} (V^n_s - \bar{V}^n_s)\cdot\nabla f(x)\ ds\right\| \\
                &\quad \quad \quad \quad \le \frac{\|f(x)\|}{n^{-\delta}}\mathbb{E}\int_{\tau^n_c}^{n^{-\delta}}\left\| V^n_s - \bar{V}^n_s\right\|\ ds \\
                &\quad \quad \quad \quad \le 2\sqrt{d}\|f(x)\|\cdot \mathbb{E}\left(1 - \frac{\tau^n_c}{n^{-\delta}}\right) \\
                &\quad \quad \quad \quad \le 2\sqrt{d}\|f(x)\|\cdot \Prob(\tau^n_c < n^{-\delta}).  \numberthis \label{eq:vnminusvbarn}  
            \end{align*} 
            Combining \eqref{eq:probtaun} and \eqref{eq:vnminusvbarn}, we get for all $(x, v) \in E^n$,
            \begin{align*}
                &\left\|\frac{1}{n^{-\delta}}\mathbb{E}_{(x, v)} \int_0^{n^{-\delta}} V^n_s\cdot \nabla f(x)\ ds - \frac{1}{n^{-\delta}}\mathbb{E}_{(x, v)} \int_0^{n^{-\delta}} \bar{V}^n_s\cdot \nabla f(x)\ ds\right\| \\
                &\quad \quad \quad \quad \le \|f(x)\|\left(2dn^{-2\delta}\sum_{i=1}^d\sum_{j=1}^n\|\nabla s^{j}_i(x)\| + 2M'd^{5/2}n^{1-3\delta}\right). \numberthis \label{eq:v-barv}
            \end{align*}      
       
            For fixed $x$, recall that $\bar{V}^n$ is a continuous time Markov process on $\{-1, 1\}^d$ which flips the sign of its $i$-th component at rate $\lambda^n_i(x, v)$. Moreover, given the form of $\lambda^n_i$, this rate only depends on the $i$-th coordinate of $v$. Hence, for each $i$, $(\bar{V}_s)_i$ evolves as a continuous time Markov process on $\{-1, 1\}$ with rate $\lambda^n_i$, independently of other coordinates. Suppose $\lambda^n_i(x, v') > 0$ for $v' \in \mathcal{V}$ and all $i = 1, \dots, d$. Then for all $i$, $(\bar{V}^n_s)_i$ process is ergodic with corresponding stationary distribution $\pi^{n, x}_i$ given by,
            \[
              \pi^{n, x}_i(1) = 1 - \pi^{n, x}_i(-1) = \frac{\lambda^n_i(x, -\bs{1}_d)}{\lambda^n_i(x, \bs{1}_d) + \lambda^n_i(x, -\bs{1}_d)}.
            \]
            Additionally, the $\bar{V}^n$ process is ergodic with stationary distribution $\pi^{n,x}$ given by,
            \[
                \pi^{n,x}(v') = \prod_{i = 1}^d \frac{\lambda_i(x, F_i(v'))}{\lambda_i(x, v') + \lambda_i(x, F_i(v'))} = \prod_{i = 1}^d \pi^{n, x}_{i}(v'_i), \quad v' \in \mathcal{V}.
            \]
            For all $i = 1, \dots, d$, $\mathbb{E}_{\pi^{n, x}_i}(V) = b^n_i(x)$. Moreover, observe that, for all $v' \in \mathcal{V}$,
            \[
              b^n_i(x) = \frac{\lambda^n_i(x, -\bs{1}_d) - \lambda^n_i(x, \bs{1}_d)}{\lambda^n_i(x, -\bs{1}_d) + \lambda^n_i(x, \bs{1}_d)} = \frac{v'_i\lambda^n_i(x, F_i(v')) - v'_i\lambda^n_i(x, v')}{\lambda^n_i(x, F_i(v')) + \lambda^n_i(x, v')}.
            \]

            For all $i$, define $h^n_i(x) := 1/(\lambda^n_i(x, -\bs{1}_d) + \lambda^n_i(x, \bs{1}_d))$. Let $P^{i}_t$ be the semigroup associated with the process in the $i$-th coordinate. Then, for any $f$,
            \[
                P^i_tf(v') = \mathbb{E}_{\pi^{n,x}_i}f(V) + \pi^{n,x}_i(-v')\cdot e^{- t/h^n_i(x)}\cdot (f(v') - f(-v')), \quad v' \in \{-1, 1\}.
            \]
            Thus,
            \begin{align*}
                \frac{1}{n^{-\delta}}\mathbb{E}_{(x, v)} \int_0^{n^{-\delta}} (\bar{V}^{n}_s)_i\ ds &= \frac{1}{n^{-\delta}}\int_0^{n^{-\delta}} P_sv_i\ ds \\
                &= b^n_i(x) + \frac{2v_i \pi^{n,x}_i(-v_i)}{n^{-\delta}} \int_0^{n^{-\delta}} e^{- s/h^n_i(x)}\ ds \\
                &= b^n_i(x) + 2v_i \pi^{n,x}(-v_i) \cdot\frac{h^n_i(x)}{n^{-\delta}}(1 -  e^{- n^{-\delta}/h^n_i(x)}).
            \end{align*}
            The above implies, for all $i = 1, \dots, d$,
            \begin{equation}
            \label{eq:barv-bn1}
                \left|\frac{1}{n^{-\delta}}\mathbb{E}_{(x, v)} \int_0^{n^{-\delta}} (\bar{V}^{n}_s)_i\ ds - b^n_i(x)\right| \le n^{\delta}h^n_i(x)(1 -  e^{- 1/(h^n_i(x)n^{\delta})}) \le n^{\delta}h^n_i(x).
            \end{equation}

            For arbitrary $i$, suppose $\lambda^n_i(x, v') = 0$ for some $v' \in \mathcal{V}$. Then, for all $\theta \in \mathcal{V}$ such that $\theta_i = v'_i$, $\lambda^n_i(x, \theta) = 0$. Note that $b^n_i(x) = v'_i$ in this case. If $v'_i = v_i$, the $i$-th coordinate of the $\bar{V}^n$ process never jumps out of $v_i$. And we have,
            \begin{equation}
            \label{eq:barv-bn2}
                \left|\frac{1}{n^{-\delta}}\mathbb{E}_{(x, v)} \int_0^{n^{-\delta}} (\bar{V}^{n}_s)_i\ ds - b^n_i(x)\right| = \left|v_i - v_i\right| = 0.
            \end{equation}
            If $v'_i = -v_i$, the $i$-th coordinate jumps from $v_i$ to $-v_i$ and stays there forever. Let $T$ be the time it first jumps to $-v_i$. Then $T \sim \text{Exp}(\lambda^n_i(x, v))$. Thus,
            \begin{align}
            \label{eq:barv-bn3}
                \left|\frac{1}{n^{-\delta}}\mathbb{E}_{(x, v)} \int_0^{n^{-\delta}} (\bar{V}^{n}_s)_i\ ds - b^n_i(x)\right| &= \left|\frac{1}{n^{-\delta}}\mathbb{E}_{(x, v)} \int_0^{n^{-\delta}} (\bar{V}^{n}_s)_i\ ds - (-v_i)\right|\notag \\
                &= \left|\frac{1}{n^{-\delta}}\mathbb{E}_{(x, v)} \int_0^{n^{-\delta}} ((\bar{V}^{n}_s)_i + v_i)\ ds\right|\notag \\
                &= \left|\frac{1}{n^{-\delta}}\mathbb{E}_{(x, v)} \left(2v_iT1(T \le n^{-\delta}) + 2v_in^{-\delta}1(T > n^{-\delta})\right)\right|\notag \\
                &= \frac{n^{\delta}}{\lambda^n_i(x, v)}(1 -  e^{- \lambda^n_i(x, v)/n^{\delta}}) \le n^{\delta}h^n_i(x).
            \end{align}
            From \eqref{eq:barv-bn1}, \eqref{eq:barv-bn2}, and \eqref{eq:barv-bn3}, we get for all $i = 1, \dots, d$, 
            \begin{equation}
                \left|\frac{1}{n^{-\delta}}\mathbb{E}_{(x, v)} \int_0^{n^{-\delta}} (\bar{V}^{n}_s)_i\ ds - b^n_i(x)\right| \le n^{\delta}h^n_i(x).
            \end{equation} 
            Consequently,
            \begin{align*}
              &\left\|\frac{1}{n^{-\delta}}\mathbb{E}_{(x, v)} \int_0^{n^{-\delta}} \bar{V}^n_s\cdot\nabla f(x)\ ds - b^n(x)\cdot\nabla f(x)\right\| \\
              &\quad \quad \quad \quad \le \sum_{i=1}^d \left|\frac{1}{n^{-\delta}}\mathbb{E}_{(x, v)} \int_0^{n^{-\delta}} (\bar{V}^{n}_s)_i\cdot \nabla_if(x)\ ds - b^n_i(x)\cdot \nabla_if(x)\right|\\
              &\quad \quad \quad \quad \le \sum_{i=1}^d n^{\delta}|\nabla_if(x)|h^n_i(x) \numberthis \label{eq:barv-bn}
            \end{align*}
            Combining \eqref{eq:Gndelta - bn}, \eqref{eq:v-barv}, and \eqref{eq:barv-bn}, we get,
            \begin{align*}
                &\left|\tilde{G}^{n, \delta}f^n_{\Delta}(x, v) - b^n(x)\cdot\nabla f(x)\right| \\
                &\quad \quad \quad \quad \le \|f(x)\|\left(2dn^{-2\delta}\sum_{i=1}^d\sum_{j=1}^n\|\nabla s^{j}_i(x)\| + 2M'd^{5/2}n^{1-3\delta} + \sum_{i=1}^d n^{\delta}h^n_i(x)\right).
            \end{align*}
            But by our assumptions on $f$, there exists a compact set $K \subset H^c$ and a constant $C_1 > 0$ such that,
            \begin{align*}
                &\sup_{x \in \bar{H}^n}\left|\tilde{G}^{n, \delta}f^n_{\Delta}(x, v) - b^n(x)\cdot\nabla f(x)\right| \\
                &\hspace{50pt} \le C_1 \sup_{x \in K \cap \bar{H}^n}\left(2dn^{-2\delta}\sum_{i=1}^d\sum_{j=1}^n\|\nabla s^{j}_i(x)\| + 2M'd^{5/2}n^{1-3\delta} + \sum_{i=1}^d n^{\delta}h^n_i(x)\right).
            \end{align*}
            By Assumption \ref{assum:moment}, $\mathbb{E}_{Y \sim P}[|\nabla S_i(x)|] < \infty$ for all $x$. Given the smoothness (Assumption \ref{assum:smoothness}), $n^{-1}\sum_{i=1}^n \|\nabla s^j_i(x)\|$ converges uniformly on compact sets by the law of large numbers. Moreover, in the proof of Proposition 3.1, we show that $nh^n_i(x)$ converges to a finite quantity uniformly on compact sets. Consequently, the right-hand side goes to $0$ since $1/2 < \delta < 1$. 
        \end{proof}
        \vskip 15pt

        \noindent {\it Proof of Theorem \ref{thm:transient}.} For arbitrary test function $f \in \mathcal{C}_{c}^{\infty}(\mathbb{R}^d)$ with support $K \subset H^c$,
        \begin{align*}
            \sup_{(x, v) \in E^n}\left|G^{n, \delta}f^n_{\Delta}(x, v) - \mathcal{L}f_{\Delta}(x)\right| &= \sup_{x \in K \cap \bar{H}^n}\left|G^{n, \delta}f^n_{\Delta}(x, v) - b(x)\cdot \nabla f(x)\right|.
        \end{align*}Lemma \ref{lemma:zz_trans_1} and Lemma \ref{lemma:zz_trans_2} together imply that 
        \[
            \lim_{n \to \infty}\sup_{x \in K \cap \bar{H}^n}|G^{n, \delta}f^n_{\Delta}(x, v) - b^n(x)\cdot \nabla f(x)| = 0,
        \]
        $\Prob-$almost surely. Further from Proposition \ref{propo:drifts}, we know that $b^n$ converges to $b$ uniformly on compact sets with $\Prob-$probability $1$. Since the sets $E^n$ have limiting probability $1$ and $f^n$ and $f$ agree on $E^n$, the result then follows by Corollary 8.7(f), Chapter 4 of \cite{ethier2009}. \qed

\section{Proofs for the stationary phase}

\subsection{Proof of Theorem \ref{thm:canonical}}
\label{sec:proof_canonical}

    Let $\epsilon$ be arbitrary. Let $n_1$ be such that $n^{-1/2}_1 < \epsilon$. Let $n \ge \max\{n_0, n_1\}$. Since $v\cdot \sum_{j=1}^n s^j(\hat{x}_n + vs) > 0$ for all $s$, this implies that the process visits $\hat{x}_n$ exactly once between each velocity flip. 
    
    Fix $V^n_{0} = v^* \in \{-1, 1\}$ arbitrarily. Iteratively define random times $(T_k^{\pm})_{k \in \mathbb{N}}$ and $(S_k^{\pm})_{k \in \mathbb{N}}$ as follows:
    \begin{align*}
      T_0^{+} &= 0 \\
      T_k^- &= \inf\{t > T_{k-1}^+: (X^n_{t}, V^n_{t}) = (\hat{x}_{n}, -v^*)\}, & k = 1, 2, 3, \cdots, \\
      T_k^+ &= \inf\{t > T_{k}^-: (X^n_{t}, V^n_{t}) = (\hat{x}_{n}, v^*)\}, & k = 1, 2, 3, \cdots, \\
      S_k^+ &= \inf\{t > T_{k-1}^+: V^n_t = -v^*\}, & k = 1, 2, 3, \cdots, \\
      S_k^- &= \inf\{t > T_{k}^-: V^n_t = v^*\}, & k = 1, 2, 3, \cdots.
    \end{align*}
    Now for $k = 1, 2, \dots,$ define the random variables
    \begin{align*}
      Z_k^+ &:= S_k^+ - T^+_{k-1}, \\
      Z_k^- &:= S_k^- - T_k^-, \\
      Z_k &:= T_{k}^{+} - T_{k-1}^{+} = 2(Z_k^+ + Z_k^-).
    \end{align*}
    Let $N(t) := \sup\{k: T^{+}_k \le t\}$ where $N(t) = 0$ if the set is empty. Since $Z^n_t$ is a canonical Zig-Zag process, by the Markov property, it follows that $Z_k^{\pm} \overset{iid}{\sim} Z^{\pm}$ where the distribution of $Z^{\pm}$ is same as the distribution of the time to switch velocity starting from $(\hat{x}_n, \pm v^*)$ i.e.
    \begin{equation*}
      \Prob(Z^{\pm} > t) = \exp\left(-\int_0^t \lambda^n(\hat{x}_n \pm v^*s, \pm v^*) ds\right), \quad t \ge 0.
    \end{equation*}
    Consequently, $N(t)$ is a renewal process with the inter-arrival times between subsequent events distributed as $2(Z^{+} + Z^{-})$. By construction $T^{\pm}_k$ and $S^{\pm}_k$ alternate i.e. for all $k$, $T^{+}_{k-1} < S^{+}_k < T^{-}_k < S^{-}_k < T^{+}_k$. Then, for all $t \ge 0$,
    \begin{align}
    \label{eq:sup_after_tau}
      \sup_{s \le t}|X^n_{s} - \hat{x}_{n}| &\le \max\left\{|X^n_{S_k^{+}} - X^n_{T_{k-1}^{+}}|, |X^n_{S_k^{-}} - X^n_{T_{k}^{-}}|; \ k \le N(t) + 1\right\} \notag \\
      &= \max\left\{Z^{+}_k, Z_k^{-}; \ k \le N(t) + 1\right\}
    \end{align}

    \begin{equation}
    \label{eq:sup_all}
      \Prob\left(\sup_{s \le t}|X^n_{s} - \hat{x}_{n}| > \epsilon/2 \right) \le \Prob\left(\max\left\{Z^{+}_k, Z_k^{-}; \ k \le N(t) + 1\right\} > \epsilon/2 \right).
    \end{equation}

    \begin{lemma}
    For any fixed $t$,
    \begin{align*}
      \Prob\left(\max\left\{Z^{+}_k, Z_k^{-}; \ k \le N(t) + 1\right\} > \epsilon/2 \right) &\le \mathbb{E}(N(t) + 1)(P(Z^{+} > \epsilon/2) + P(Z^{-} > \epsilon/2)) \numberthis \label{eq:t_plus_max}.
    \end{align*}
    \end{lemma}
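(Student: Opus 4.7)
The statement is a Wald-type inequality for the i.i.d.\ sequence $((Z_k^+, Z_k^-))_{k \ge 1}$ evaluated up to the random index $N(t)+1$. The plan is to reduce the probability on the left to an expectation of a sum of Bernoulli indicators via a union bound, and then apply Wald's identity once $N(t)+1$ is recognised as a stopping time for the natural filtration of the underlying i.i.d.\ sequence.

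The first step is the elementary pointwise bound
\begin{equation*}
  \mathbf{1}\!\left(\max_{k \le M} \max\{Z_k^+, Z_k^-\} > \epsilon/2\right) \le \sum_{k=1}^M \bigl(\mathbf{1}(Z_k^+ > \epsilon/2) + \mathbf{1}(Z_k^- > \epsilon/2)\bigr),
\end{equation*}
which is valid for any nonnegative integer-valued random variable $M$ and uses only the fact that if the maximum exceeds $\epsilon/2$ then at least one indicator on the right is $1$, combined with $\mathbf{1}(a > c \text{ or } b > c) \le \mathbf{1}(a > c) + \mathbf{1}(b > c)$. Taking $M = N(t)+1$ and expectations reduces the lemma to establishing, for each sign separately,
\begin{equation*}
  \mathbb{E}\!\left[\sum_{k=1}^{N(t)+1}\mathbf{1}(Z_k^{\pm} > \epsilon/2)\right] = \mathbb{E}(N(t)+1)\cdot \Prob(Z^{\pm} > \epsilon/2).
\end{equation*}

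For this I would set $\mathcal{F}_k = \sigma(Z_1^+, Z_1^-, \dots, Z_k^+, Z_k^-)$ and observe that $T_k^+ = 2\sum_{i=1}^k (Z_i^+ + Z_i^-)$, so the event $\{N(t)+1 \le k\} = \{T_k^+ > t\}$ is $\mathcal{F}_k$-measurable. Hence $N(t)+1$ is an $(\mathcal{F}_k)$-stopping time. Because $Z^{\pm}$ are strictly positive (their survival functions are given by the explicit exponential-integral formula recorded just above the lemma, and the canonical rates are nonnegative and continuous with $\lambda^n(\hat{x}_n \pm v^*s, \pm v^*) > 0$ for $s > 0$ by the hypothesis of Theorem \ref{thm:canonical}), the inter-arrival distribution has positive mean, and the elementary renewal theorem gives $\mathbb{E}N(t) < \infty$. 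Wald's identity applied to the i.i.d.\ Bernoulli sequences $(\mathbf{1}(Z_k^{\pm} > \epsilon/2))_k$ adapted to $(\mathcal{F}_k)$ then delivers the displayed equality, and adding the two signs concludes the proof.

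I do not anticipate a serious obstacle here; the only point requiring any care is confirming that $N(t)+1$ is genuinely an $(\mathcal{F}_k)$-stopping time (rather than merely a last-passage index) and that its expectation is finite. Both follow directly from the explicit additive formula for $T_k^+$ in terms of $(Z_i^{\pm})_{i \le k}$ and from the positivity of the mean inter-arrival time.
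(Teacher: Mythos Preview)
Your proposal is correct and follows essentially the same route as the paper: a union bound converts the max into a sum of indicators, $N(t)+1$ is identified as a stopping time for the filtration generated by $(Z_k^+, Z_k^-)_k$ via the additive formula for $T_k^+$, and Wald's identity yields the factorisation. Your additional remark that $\mathbb{E}(N(t)+1) < \infty$ via the positive mean inter-arrival time is a detail the paper omits in the lemma itself but establishes later, so you are slightly more careful here.
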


    \begin{proof}
        Observe that $N(t) + 1 = \inf\{k: T^{+}_k > t\}$ is a stopping time with respect to the filtration generated by $\{ Z^{\pm}_k; k \in \mathbb{N}\}$. Then,
        \begin{align*}
          &\Prob\left(\max\left\{Z^{+}_k, Z_k^{-}; \ k \le N(t) + 1\right\} > \epsilon/2 \right)\\ 
          &\quad \quad = \Prob\left(\cup_{k \le N(t) + 1}\{Z^{+}_k > \epsilon/2\} \cup \{Z^{-}_k > \epsilon/2\}\right)\\
          &\quad \quad = \mathbb{E}\left[1\left(\cup_{k \le N(t) + 1}\{Z^{+}_k > \epsilon/2\} \cup \{Z^{-}_k > \epsilon/2\}\right)\right]\\
          &\quad \quad \le \mathbb{E}\left[\sum_{k\le N(t) + 1}1\left(\{Z^{+}_k > \epsilon/2\}\right)  + \sum_{k\le N(t) + 1}1\left( \{Z^{-}_k > \epsilon/2\}\right)\right] \\
          &\quad \quad = \mathbb{E}(N(t) + 1)(P(Z^{+} > \epsilon/2) + P(Z^{-} > \epsilon/2)),
        \end{align*}
        where the last equality follows by the Wald's identity since $N(t) + 1$ is a stopping time.
    \end{proof}

    Recall that the canonical switching rate $\lambda^n(x, v) = \left(v \cdot U'_n(x)\right)_{+} = \left(v \cdot \sum_{j=1}^n s^j(x)\right)_{+}$. Then, for all $s>0$, $\lambda(\hat{x}_n + vs, v) > 0$. Consequently, for all $t < \epsilon$,
    \begin{align*}
        \Prob(Z^{\pm} > t) &= \exp\left(-\int_0^{t} \lambda^n(\hat{x}_n \pm v^*s, \pm v^*)\ ds \right)\\
        &=\exp\left(-\int_0^{t} \pm v^*U'_n(\hat{x}_n \pm v^*s)\ ds \right)\\
        &=\exp\left(-\left(U_n(\hat{x}_n \pm v^*t) - U_n(\hat{x}_n)\right)\right)\numberthis\label{eq:prob_zpm}.
    \end{align*}
    A Taylor's expansion about $\hat{x}_n$ gives,
    \begin{equation*}
        U_n(\hat{x}_n \pm v^*t) - U_n(\hat{x}_n) = \frac{1}{2} U''_n(\hat{x}_n)t^2 \pm \frac{v^*}{6}U'''_n(x'_t)t^3
    \end{equation*}
    for some $x'_t$ between $\hat{x}_n$ and $\hat{x}_n \pm v^*t$. Let $t = n^{-1/2} < \epsilon$. Recall that $J(\hat{x}_n) = n^{-1}U''(\hat{x}_n)$. Then by Assumption \ref{assum:mle_consistent}, $U''_{n}(\hat{x}_n)n^{-1} \to \mathbb{E}S'(x_0) > 0$ in $\Prob-$probability as $n \to \infty$. Moreover, the boundedness assumption on $s''$ implies that the second term goes to $0$ as $n$ goes to infinity. Consequently, $\Prob(Z^{\pm} > n^{-1/2}) \to \exp(-\mathbb{E}S'(x_0))$ in $\Prob-$probability as $n \to \infty$.

    Define $Z := 2(Z^{+} + Z^{-})$. Then this implies, since $Z^{+}$ and $Z^{-}$ are independent,
    \begin{equation*}
    \lim_{n \to \infty}\Prob(Z > n^{-1/2}) \ge \exp(-\mathbb{E}S'(x_0)) > 0
    \end{equation*} 
    in $\Prob-$probability. Recall that $N(t)$ is a renewal process with inter-arrival times $Z_k \overset{iid}{\sim} Z$. Using the truncation method from the proof of the elementary renewal theorem, it can be shown that for any $t > 0$,
    \[
    \mathbb{E}(N(t) + 1) \le \frac{tn^{1/2} + 1}{\Prob(Z > n^{-1/2})}.
    \]
    Hence, it follows that,
    \begin{equation}
    \label{eq:lim_expected_nt}
        \lim_{n \to \infty} n^{-1/2}\mathbb{E}(N(t) + 1) \le \frac{t}{\exp(-\mathbb{E}S'(x_0))} < \infty.
    \end{equation}

    Consider $\Prob(Z^{\pm} > t)$ again. Let $\pi^{(n)}$ be the Lebesgue density associated with the Bayesian posterior $\Pi^{(n)}$. Recall that $\pi^{(n)}(x) \propto \exp(-U_n(x))$. Then, by \eqref{eq:prob_zpm},
    \begin{equation*}
    \Prob(Z^{\pm} > t) = \exp\left(-\left(U_n(\hat{x}_n \pm v^*t) - U_n(\hat{x}_n)\right)\right) = \frac{\pi^{(n)}(\hat{x}_n \pm v^*t)}{\pi^{(n)}(\hat{x}_n)}
    \end{equation*}
    Denote by $A^{+}_n$ and $A^{-}_{n}$, the intervals $[\hat{x}_n + \epsilon/4, \hat{x}_n + \epsilon/2]$ and $[\hat{x}_n - \epsilon/2, \hat{x}_n - \epsilon/4]$ respectively. Also, denote by $A_n$, the interval $[\hat{x}_n - n^{-1/2}/2, \hat{x}_n + n^{-1/2}/2]$. Since the density is monotonically non-increasing on either side of $\hat{x}_n$, it follows that,
    \[
    \Pi^{(n)}\left(A^{\pm}_n\right) \ge \pi^{(n)}(\hat{x}_n \pm \epsilon/2)\epsilon/4; \quad \Pi^{(n)}(A_n) \le \pi^{(n)}n^{-1/2}.
    \]
    These yield,
    \begin{align*}
    \Prob(Z^{+} > \epsilon/2) + \Prob(Z^{-} > \epsilon/2) &= \frac{\pi^n(\hat{x}_n + \epsilon/2) + \pi^{(n)}(\hat{x}_n - \epsilon/2)}{\pi^n(\hat{x}_n)} \\ 
    &\le \frac{4n^{-1/2}}{\epsilon}\left(\frac{\Pi^{(n)}(A^{+}_n) + \Pi^{(n)}(A^{-}_n)}{\Pi^{(n)}(A_n)}\right).
    \end{align*}
    The Bernstein-von Mises theorem (Assumption \ref{assum:mle_consistent}) then implies that $\Pi^{(n)}(A^{\pm}_n) \to 0$ and $\Pi^{(n)}(A_n) \to 1$ in $\Prob-$probability as $n \to \infty$. Combining this observation with \eqref{eq:sup_all}, \eqref{eq:t_plus_max}, and \eqref{eq:lim_expected_nt}, we get
    \begin{equation}
    \label{eq:lim_xn_minus_xtilde}
    \lim_{n \to \infty}\Prob\left(\sup_{s \le t}|X^n_s - \hat{x}_n| > \epsilon/2 \right) = 0.
    \end{equation}

\subsection{Proof of Theorem \ref{theo:zzss_stat_fixed}}
\label{sec:proo_stat_zzss}
    Recall that the  ZZ-SS switching rates in terms of the reparameterized coordinate for sub-sample of size $m$ are given by,
    \[
        \lambda^n_i(\xi, v) = \frac{n}{m|\mathcal{S}_{(n, m)}|}\sum_{S \in \mathcal{S}_{(m, n)}} \left(v_i \cdot \sum_{i \in S} s_i(\hat{x}_n + n^{-1/2}\xi)\right)_{+}, \quad (\xi, v) \in \mathbb{R}^d \times \mathcal{V}, i =1, \dots d.
    \]
    Define once again, in terms of the reparameterized coordinate, $b^n:\mathbb{R}^d \to [0, 1]^d$ by,
    \[
        b^n_i(\xi) = \frac{\lambda^n_i(\xi, -\bs{1}_d) - \lambda^n(\xi, \bs{1}_d)}{\lambda^n_i(\xi, -\bs{1}_d) + \lambda^n_i(\xi, \bs{1}_d)}, \quad i =1, \dots, d.
    \]
    Also define for each $i = 1, \dots, d$, $h^n_i: \mathbb{R}^d \to (0, \infty)$ by 
    \[
        h^n_i(\xi) = \frac{1}{\lambda^n_i(\xi, -\bs{1}_d) + \lambda^n_i(\xi, -\bs{1}_d)}.
    \]
    Then, for each $i$, $h^n_i(\xi)$ is continuous. Moreover, $h^n_i(\xi)$ is finite since the sum of the switching rates is non-zero. Indeed,
    \[
        \lambda^n_i(\xi, -\bs{1}_d) + \lambda^n(\xi, \bs{1}_d) = \frac{n}{m|\mathcal{S}_{(n, m)}|}\sum_{S \in \mathcal{S}_{(m, n)}} \left|\sum_{j \in S} s^j_i(\hat{x}_n + n^{-1/2}\xi)\right|.
    \]
    The above is $0$ at a point if and only if for all $S$, $\sum_{j \in S} s^j_i(\hat{x}_n + n^{-1/2}\xi) = 0$. This happens with $0$ probability since the data is non-degenerate.

    \begin{lemma}
    \label{lemma:zz_ss_1}
        For all $R > 0$, and each $i = 1, \dots, d$,
        \[
            \lim_{n \to \infty}\sup_{\|\xi\| \le R}\left| n^{1/2}b^n_i(\xi) - \frac{-\xi \cdot \mathbb{E}[\nabla S_{i}(x_0; Y)]}{\mathbb{E}\left[\left|m^{-1}\sum_{j=1}^m S_i(x_0; Y_j)\right|\right]}\right| = 0,
        \]
        in $\Prob-$probability, where, $Y, Y_1, \dots, Y_m \overset{\text{iid}}{\sim}P$. 
    \end{lemma}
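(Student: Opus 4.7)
The plan is to write
\[
n^{1/2} b^n_i(\xi) \;=\; \frac{N_n(\xi)}{D_n(\xi)}, \quad N_n(\xi) := n^{-1/2}\bigl(\lambda^n_i(\xi,-\bs{1}_d) - \lambda^n_i(\xi,\bs{1}_d)\bigr), \quad D_n(\xi) := n^{-1}\bigl(\lambda^n_i(\xi,-\bs{1}_d) + \lambda^n_i(\xi,\bs{1}_d)\bigr),
\]
and handle numerator and denominator separately, uniformly in $\|\xi\|\le R$.

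For the numerator, the invariance identity \eqref{eq:suff_rates} gives $\lambda^n_i(\xi,\bs{1}_d) - \lambda^n_i(\xi,-\bs{1}_d) = \nabla_i U_n(\hat{x}_n + n^{-1/2}\xi)$, so I would first-order Taylor expand around $\hat{x}_n$. The defining equation for the MLE in Assumption~\ref{assum:mle_consistent} kills the zeroth-order term; the linear term is $n^{-1/2}\xi\cdot \nabla(\nabla_i U_n)(\hat{x}_n) = n^{1/2}\xi\cdot (I^n e_i)$, and the quadratic remainder is bounded \emph{deterministically} by $\tfrac{M'\sqrt{d}}{2}\|\xi\|^2$ thanks to the Hessian bound in Assumption~\ref{assum:smoothness}. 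Dividing by $n^{1/2}$ makes the remainder $O(R^2/n^{1/2})$ uniformly on $\|\xi\|\le R$, while Assumption~\ref{assum:mle_consistent}(ii) gives $I^n \to I(x_0) = \mathbb{E}[\nabla S(x_0;Y)]$ in $\Prob$-probability. Hence $N_n(\xi) \to -\xi\cdot \mathbb{E}[\nabla S_i(x_0;Y)]$ uniformly on $\|\xi\|\le R$, in $\Prob$-probability.

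For the denominator, $D_n(\xi)$ is a degree-$m$ $U$-statistic with a kernel that depends on $\xi$ only through $\hat{x}_n + n^{-1/2}\xi$. I would peel off the $\xi$-dependence in one step, using $\bigl||a|-|b|\bigr|\le |a-b|$ and the Taylor estimate $|s^j_i(\hat{x}_n+n^{-1/2}\xi) - s^j_i(x_0)| \le \|\nabla s^j_i(x_0)\|\cdot\|\hat{x}_n + n^{-1/2}\xi - x_0\| + \tfrac{M'\sqrt{d}}{2}\|\hat{x}_n+n^{-1/2}\xi-x_0\|^2$, to obtain
\[
    \sup_{\|\xi\|\le R}\left|D_n(\xi) - \frac{1}{|\mathcal{S}_{(n,m)}|}\sum_{S}\left|m^{-1}\sum_{j\in S}s^j_i(x_0)\right|\right| \;\le\; \bigl(\|\hat{x}_n-x_0\| + n^{-1/2}R\bigr)\cdot n^{-1}\sum_{j=1}^n \|\nabla s^j_i(x_0)\| + O\bigl(\|\hat{x}_n-x_0\|^2 + n^{-1}\bigr).
\]
The right-hand side tends to zero in $\Prob$-probability by Assumption~\ref{assum:mle_consistent}(i) combined with the strong LLN applied to the $\Prob$-integrable random variable $\|\nabla S_i(x_0;Y)\|$ (Assumption~\ref{assum:moment}). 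The reference $U$-statistic at $x_0$ then converges, by the strong LLN for $U$-statistics used already in the proof of Proposition~\ref{propo:rates}, to $\mathbb{E}\bigl[|m^{-1}\sum_{j=1}^m S_i(x_0;Y_j)|\bigr]$. The target limit is strictly positive, since otherwise $S_i(x_0;Y)=0$ almost surely, contradicting $I(x_0)>0$ from Assumption~\ref{assum:mle_consistent}(ii).

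Combining via the continuous-mapping theorem on $(N_n,D_n)\mapsto N_n/D_n$ finishes the proof: $D_n$ is bounded away from zero with probability tending to one, so the ratio converges uniformly on $\|\xi\|\le R$ to the claimed limit. The main obstacle is the combination of uniformity in $\xi$ with the randomness of $\hat{x}_n$: in the numerator the deterministic quadratic Taylor bound trades the $\xi$-dependence for an $O(R^2/n^{1/2})$ term, and in the denominator the Lipschitz-in-absolute-value reduction factors the $\xi$-dependence into a single data-dependent scalar $\|\hat{x}_n - x_0\| + n^{-1/2}R$, cleanly decoupling the uniform question from the LLN at the fixed point $x_0$.
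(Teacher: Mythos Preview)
Your proposal is correct and follows essentially the same approach as the paper: write $n^{1/2}b^n_i$ as a ratio, handle numerator and denominator separately via Taylor expansion, and invoke the LLN for $U$-statistics on the denominator at the fixed point $x_0$. The only minor difference is that for the numerator you invoke Assumption~\ref{assum:mle_consistent}(ii) directly to get $I^n\to I(x_0)$, whereas the paper re-derives this by a second Taylor expansion of $\nabla s^j_i(\hat{x}_n)$ about $x_0$ followed by the ordinary LLN; your route is a clean shortcut and the paper's is a self-contained re-derivation, but the arguments are otherwise the same.
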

    \begin{proof}
        Let $i$ be arbitrary, decompose $b^n_i$ as, 
        \begin{align*}
            n^{1/2}b^n_i(\xi) &= n^{1/2}\frac{\lambda^n_i(\xi, -\bs{1}_d) - \lambda^n(\xi, \bs{1}_d)}{\lambda^n(\xi, -\bs{1}_d) + \lambda^n(\xi, \bs{1}_d)} \\
            &= \left(\frac{\lambda^n_i(\xi, -\bs{1}_d) - \lambda^n_i(\xi, \bs{1}_d)}{n^{1/2}}\right)\left(\frac{n}{\lambda^n_i(\xi, -\bs{1}_d) + \lambda^n_i(\xi, \bs{1}_d)}\right).
        \end{align*}
        A second order Taylor's expansion gives, for all $i, j$
        \[
          s_i^j(\hat{x}_n + n^{-1/2}\xi) = s_i^j(\hat{x}_n) + n^{-1/2}\xi\cdot\nabla s_i^j(\hat{x}_n) + \frac{n^{-1}}{2}\xi^T \nabla^{\otimes 2}s_i^j(\hat{x}_n + \theta_{i, j}n^{-1/2}\xi)\xi,
        \]
        where $\theta_{ij} \in (0, 1)$. 
        Consider the first term. Then,
        \begin{align*}
            \frac{\lambda^n_i(\xi, -\bs{1}_d) - \lambda^n_i(\xi, \bs{1}_d)}{n^{1/2}} &= \frac{-1}{n^{1/2}}\left(\frac{n}{m|\mathcal{S}_{(n, m)}|}\sum_{S \in \mathcal{S}_{(m, n)}} \sum_{j \in S} s^j_i(\hat{x}_n + n^{-1/2}\xi)\right) \\
                &=\frac{-1}{n^{1/2}}\sum_{j=1}^n s^j_i(\hat{x}_n + n^{-1/2}\xi) \\
                &= \frac{-1}{n^{1/2}}\sum_{j=1}^n \left(s_i^j(\hat{x}_n) + n^{-1/2}\xi\cdot\nabla s_i^j(\hat{x}_n) + \frac{n^{-1}}{2}\xi^T \nabla^{\otimes 2}s_i^j(\hat{x}_n + \theta_{i, j}n^{-1/2}\xi)\xi\right) \\
                &= \frac{-1}{n}\xi \cdot\sum_{j=1}^n \nabla s^j_i(\hat{x}_n) - \frac{1}{n^{-3/2}}\xi^T \left(\sum_{j=1}^n \nabla^{\otimes 2}s_i^j(\hat{x}_n + \theta_{i, j}n^{-1/2}\xi)\right)\xi.
        \end{align*}
        A further Taylor's expansion about $x_0$ gives,
        \[
            \nabla s^j_i(\hat{x}_n) = \nabla s^j_i(x_0) + \nabla^{\otimes 2}s^j_i(x_0 + \theta'_{i, j}(\hat{x}_{n} - x_0))(\hat{x}_n - x_0),
        \]
        for some $\theta'_{i,j} \in (0, 1)$. Then,
        \begin{align*}
                \frac{\lambda^n_i(\xi, -\bs{1}_d) - \lambda^n_i(\xi, \bs{1}_d)}{n^{1/2}} &= \frac{-1}{n}\xi \cdot \sum_{j=1}^n \left(\nabla s^j_i(x_0) + \nabla^{\otimes 2}s^j_i(x_0 + \theta'_{i, j}(\hat{x}_{n} - x_0)) (\hat{x}_n - x_0)\right) \\
                &\quad \quad - \frac{1}{n^{-3/2}}\xi^T \left(\sum_{j=1}^n \nabla^{\otimes 2}s_i^j(\hat{x}_n + \theta_{i, j}n^{-1/2}\xi)\right)\xi \\
                &= -\xi\cdot \left(\frac{1}{n}\sum_{j=1}^n \nabla s^j_i(x_0)\right) \\
                &\quad \quad - \xi^T \left(\frac{1}{n}\sum_{j=1}^n \nabla^{\otimes 2}s^j_i(x_0 + \theta'_{i, j}(\hat{x}_{n} - x_0)) \right)(\hat{x}_n - x_0) \\
                &\quad \quad -n^{-1/2}\xi^T\left(\frac{1}{n}\sum_{j=1}^n \nabla^{\otimes 2}s_i^j(\hat{x}_n + \theta_{i, j}n^{-1/2}\xi)\right)\xi.
        \end{align*}  
        Due to smoothness assumption \ref{assum:smoothness}, the second and the third terms above converge to $0$ uniformly in $\xi$ for $\xi$ in a compact set. Furthermore, the law of large numbers gives
        \begin{equation}
        \label{eq:bn1}
            \frac{\lambda^n_i(\xi, -\bs{1}_d) - \lambda^n_i(\xi, \bs{1}_d)}{n^{1/2}} \to -\xi \cdot \mathbb{E}[\nabla S_i(x_0; Y)] \quad \text{in }\Prob-\text{probability},
        \end{equation}
        uniformly on compact sets. Next, observe that,
        \begin{equation*}
            \frac{1}{nh^n_i(\xi)} = \frac{\lambda^n_i(\xi, -\bs{1}_d) + \lambda^n_i(\xi, \bs{1}_d)}{n} \\
            = \frac{1}{m|\mathcal{S}_{(n, m)}|}\sum_{S \in \mathcal{S}_{(m, n)}}\left| \sum_{j \in S} s^j_i(\hat{x}_n + n^{-1/2}\xi)\right|.
        \end{equation*}
        Consider,
        \begin{align*}
            &\left|\frac{1}{nh^n(\xi)} - \frac{1}{m|\mathcal{S}_{(n, m)}|}\sum_{S \in \mathcal{S}_{(m, n)}}\left|\sum_{j \in S} s^j_i(x_0)\right|\right| \\
            &\quad \quad \le \frac{1}{n}\sum_{j=1}^n\left|s^j_i(\hat{x}_n + n^{-1/2}\xi) - s^j_i(x_0)\right| \\
            &\quad \quad = \frac{1}{n}\sum_{j=1}^n\left|\nabla s^j_i(x_0)\cdot x'_n + (x'_n)^T\nabla^{\otimes 2} s^j_i(x_0 + \theta_{ij}x'_n)x'_n\right| \quad \quad (x'_n = \hat{x}_n + n^{-1/2}\xi - x_0)\\
            &\quad \quad \le \|x'_n\| \cdot \left(\frac{1}{n}\sum_{j=1}^n\left\|\nabla s^j_i(x_0)\right\|\right) + M'\|x'_n\|^2.
        \end{align*}
        By the consistency of $\hat{x}_n$ (Assumption \ref{assum:mle_consistent}), $x'_n$ goes to $0$ in $\mathbb{P}-$probability. As a result, the right-hand side goes to $0$ uniformly in $\xi$. Then, by the law of large numbers for $U-$statistics \citep{arcones1993limit}, for all $i = 1, \dots, d$,
        \[
            \frac{1}{nh^n_i(\xi)} \to \mathbb{E}\left[\left|m^{-1}\sum_{j=1}^m S_i(x_0; Y_j) \right|\right], \quad \text{in }\Prob-\text{probability},
        \] 
        uniformly on compact sets, where, $Y_1, \dots, Y_m \overset{\text{iid}}{\sim}P$. Since $h^n_i$ is continuous and positive, it is bounded away from $0$ on compact sets. And so, this implies,
        \begin{equation}
        \label{eq:bn2}
            \frac{n}{\lambda^n_i(\xi, -\bs{1}_d) + \lambda^n_i(\xi, \bs{1}_d)} = nh^n_i(\xi) \to \frac{1}{\mathbb{E}\left[\left|m^{-1}\sum_{j=1}^m S_i(x_0; Y_j) \right|\right]},\quad \text{in }\Prob-\text{probability},
        \end{equation}
        uniformly on compact sets. Combining \eqref{eq:bn1} and \eqref{eq:bn2} gives the convergence of $n^{-1/2}b_i^n(\xi)$. Since $i$ was arbitrary, the result follows. 
    \end{proof}
    We will prove Theorem \ref{theo:zzss_stat_fixed} following the approach of \cite{bierkens2017limit}. For each $i = 1, \dots, d$, define,
    \[
        f^n_i(\xi, v) = \xi_i + \sqrt{n}\cdot v_ih^n_i(\xi), \quad (\xi, v) \in E. 
    \]

    \begin{lemma}
    \label{lemma:zz_ss_2}
        For each $n$ and $i$, the function $f^n_i$ belongs to the domain of the extended generator $\mathcal{L}^n$ of the Zig-Zag process.
    \end{lemma}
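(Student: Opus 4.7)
The plan is to verify the two defining conditions of $\mathcal{D}(\mathcal{L}^n)$ stated immediately above \eqref{eq:zigzag_generator}: continuity of $f^n_i$ on $E$, and absolute continuity of the map $\xi \mapsto f^n_i(\xi,v)$ for each fixed $v \in \mathcal{V}$. Since $\xi \mapsto \xi_i$ is trivially smooth, the whole task reduces to proving that $h^n_i$ enjoys the same regularity.

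First I would control the denominator
\[
G^n_i(\xi) := \lambda^n_i(\xi,-\bs{1}_d)+\lambda^n_i(\xi,\bs{1}_d)=\frac{n}{m\,|\mathcal{S}_{(n,m)}|}\sum_{S\in\mathcal{S}_{(n,m)}}\Bigl|\sum_{j\in S}s^j_i(\hat{x}_n+n^{-1/2}\xi)\Bigr|.
\]
Under Assumption \ref{assum:smoothness}, each $s^j_i$ is $\mathcal{C}^2$ in its argument, so every inner sum is $\mathcal{C}^2$ in $\xi$ and in particular locally Lipschitz. Taking absolute values preserves local Lipschitz regularity, and a positive finite combination of locally Lipschitz functions is again locally Lipschitz. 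Hence $G^n_i$ is continuous and locally Lipschitz on $\mathbb{R}^d$. As observed in the paragraph preceding the lemma, the non-degeneracy of the data ensures that $G^n_i(\xi)>0$ for every $\xi$ with $\mathbb{P}$-probability one: the identity $G^n_i(\xi)=0$ forces $\sum_{j\in S}s^j_i(\hat{x}_n+n^{-1/2}\xi)=0$ for \emph{every} subsample $S$, from which a swap-and-subtract argument among subsamples yields $s^j_i(\hat{x}_n+n^{-1/2}\xi)=0$ for every $j=1,\dots,n$ simultaneously, an event of probability zero under the absolute continuity of $P$. On any compact set, continuity of $G^n_i$ then gives a strictly positive lower bound, so $h^n_i=1/G^n_i$ inherits local Lipschitz regularity and continuity from $G^n_i$.

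Combining these facts, $f^n_i$ is continuous on $E$ and $f^n_i(\cdot,v)$ is locally Lipschitz, hence absolutely continuous, on $\mathbb{R}^d$ for every $v\in\mathcal{V}$, placing $f^n_i$ in $\mathcal{D}(\mathcal{L}^n)$. The only delicate point is the passage from pointwise positivity of $G^n_i$ (which holds for each fixed $\xi$ outside a null set) to simultaneous positivity on all of $\mathbb{R}^d$; this is the main obstacle in the argument and is handled by the geometric observation above that the zero set of $G^n_i$ is the common zero locus of the $n$ independent $\mathcal{C}^2$ functions $s^j_i(\hat{x}_n+n^{-1/2}\cdot)$, which is $\mathbb{P}$-almost surely empty. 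The accompanying integrability conditions needed for the extended generator formulation from \cite{davis1993markov} are immediate, since the jump contributions $f^n_i(\xi,F_j(v))-f^n_i(\xi,v)$ vanish for $j\neq i$ and equal $-2\sqrt{n}\,v_i h^n_i(\xi)$ for $j=i$, which is locally bounded, and the switching rates $\lambda^n_j$ are continuous.
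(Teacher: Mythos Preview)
Your approach is essentially the same as the paper's: verify regularity of $h^n_i$ (via positivity and smoothness of its reciprocal), then address the Davis integrability condition. You are in fact more careful than the paper on one point---you explicitly upgrade the pointwise statement ``$G^n_i(\xi)>0$ a.s.\ for each $\xi$'' to ``a.s., $G^n_i(\xi)>0$ for all $\xi$'' via the swap-and-subtract argument, which the paper leaves implicit.

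However, your final sentence on the Davis integrability condition is where the paper invests its effort, and your treatment is too compressed to count as a proof. Saying that the jump size $-2\sqrt{n}\,v_i h^n_i(\xi)$ is locally bounded and the rates $\lambda^n_j$ are continuous does not by itself yield
\[
\mathbb{E}\int_{E\times[0,t)}|\mathfrak{B}f^n_i(z,s)|\,\mu(dz,ds)
=\mathbb{E}\sum_{T_k\le t}2\sqrt{n}\,h^n_i(\xi^n_{T_k})<\infty,
\]
which is what Davis's Theorem~5.5 actually requires. The missing observation is that the Zig-Zag position moves at bounded speed, so on $[0,t]$ the path $\xi^n_s$ is confined to a deterministic compact set $K_t$ depending only on the starting point; on $K_t$ your locally Lipschitz $h^n_i$ is bounded and the total switching rate is bounded, giving a finite expected number of jumps and hence a finite expected sum. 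The paper makes precisely this argument explicit. Once you add this compactness step, your proof is complete and matches the paper's.
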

    \begin{proof}
        Fix $i$ and $n$ arbitrarily. We first check that the function $t \mapsto f^n_i(\xi + n^{1/2}vt, v)$ is absolutely continuous on $[0, \infty)$ for all $(\xi, v) \in \mathbb{R}^d \times \mathcal{V}$. Given the form of $f^n_i$, it suffices to check that $t \mapsto h^n_i(\xi + n^{1/2}vt, v)$ is absolutely continuous on $[0, \infty)$ for all $(\xi, v)$. But that is true given the smoothness assumptions on $s$ and the fact that the reciprocal of $h^n_i$ is non-zero everywhere. 

        Let $T_1, T_2, \dots$ be the switching times for the Zig-Zag $Z^n = (\xi^n_t, v^n_t)_{t \ge 0}$. The jump measure of the process is given by,
        \[
            \mu = \sum_{i=1}^{\infty} \delta_{(Z^n_{T_i}, T_i)}.
        \]
        Define $\mathfrak{B}f^n$ by,
        \[
            \mathfrak{B}f^n_i(\xi, v, s) = f^n_i(\xi, v) - f^n_i(\xi^n_{s^{-}}, v^n_{s^{-}}).
        \]
        Then for any $t \ge 0$,
        \begin{align*}
            \int_{E \times [0, t)} |\mathfrak{B}f^n_i(z, s)|\mu(dz, ds) &= \sum_{T_k \le t} f^n_i(Z_{T_k}) - f^n_i(Z_{T_k^-}) \\
            &= \sum_{T_k \le t} |f^n_i(\xi^n_{T_k}, v^n_{T_k}) - f^n_i(\xi^n_{T_k}, v^n_{T_{k-1}})| \\
            &= \sum_{T_k \le t} \sqrt{n}h^n_i(\xi^n_{T_k})\left|(v^n_{T_k})_i - (v^n_{T_{k-1}})_i\right| \\
            &\le 2\sqrt{n} \sum_{T_k \le t} h^n_i(\xi^n_{T_k})
        \end{align*}
        And thus,
        \[
            \mathbb{E} \int_{E \times [0, t)} |\mathfrak{B}f^n_i(z, s)|\mu(dz, ds) \le 2\sqrt{n}\mathbb{E}\sum_{T_k \le t} h^n_i(\xi^n_{T_k}) < \infty
        \]
        because $h^n_i < \infty$ everywhere and the switching rates are continuous, hence bounded on $[0, t]$. By \cite[][Theorem 5.5]{davis1984piecewise}, $f^n_i$ belongs to the domain of the generator.
    \end{proof}
    Observe that for all $i$,
    \[
      f^n_i(\xi, F_j(v)) - f^n_i(\xi, v) = \begin{cases}
        0; & j \neq i,\\
        -2n^{1/2}v_ih^n_i(\xi); & j = i.
      \end{cases}
    \]
    Let $\weak$ denote the weak derivative operator with respect to the $\xi$ coordinate. For $i = 1, \dots, d$, let $e_i \in \mathbb{R}^d$ denote the $i$-th basis vector, i.e. $(e_i)_i = 1$ and $(e_i)_j = 0$ for $j \neq i$. Then, also for all $i$, $\weak f^n_i(\xi, v) = e_i + n^{1/2}v_i\cdot \weak h^n_i(\xi)$.
    \begin{align*}
        \mathcal{L}^nf^n_i(\xi, v) &= n^{1/2}v\cdot  \weak f^n_i(\xi, v) + \sum_{j=1}^d\lambda^n_j(\xi, v)\{f^n_i(\xi, F_j(v)) - f^n_i(\xi, v)\}\\
        &= n^{1/2}v\cdot (e_i + n^{1/2}v_i\cdot \weak h^n_i(\xi)) + \lambda^n_i(\xi, v)\{- 2n^{1/2}v_ih^n_i(\xi)\}\\
        &= n^{1/2}(v_i - 2v_i\lambda^n_i(\xi, v)h^n_i(\xi)) + nv_i(v \cdot \weak h^n_i(\xi)) \\
        &= n^{1/2}v_i\left(\frac{\lambda^n_i(\xi, F_i(v)) - \lambda^n_i(\xi, v)}{\lambda^n_i(\xi, F_i(v)) + \lambda^n_i(\xi, v)}\right) + nv_i(v \cdot \weak h^n_i(\xi))  \\
        &= n^{1/2}b^n_i(\xi) + nv_i(v \cdot \weak h^n_i(\xi)).
    \end{align*}
    Define, 
    \[
        Y^{n, i}_t := f^n_i(\xi^n_t, v^n_t) \quad \text{and} \quad j^{n, i}_t := \mathcal{L}^nf^n_i(\xi^n_t, v^n_t) = n^{1/2}b^n_i(\xi^n_t) + n(v^n_t)_i(v^n_t \cdot \weak h^n_i(\xi^n_t)).
    \]
    It follows that, for all $i = 1, \dots, d$,
    \[
        M^{n, i}_t = Y^{n, i}_t - \int_0^t j^{n, i}_s \ ds,
    \]
    is a local martingale with respect to the filtration $\mathcal{F}^n_t$ generated by $(\xi^n_t, v^n_t)_{t \ge 0}$. Consequently, the vector process $M^n_t = (M^{n, 1}_t, \dots, M^{n, d}_t)$ is a local martingale. For all $i, j$, define,
    \[
        g^n_{ij}(\xi, v) := f^n_i(\xi, v)f^n_j(\xi, v).
    \] 
    Then $g^n_{ij}$ belongs to the domain of $\mathcal{L}^n$. We get that,
    \[
      N^{n, ij}_t := Y^{n, i}_tY^{n, j}_t - \int_0^t \mathcal{L}^n g^n_{ij}(\xi^n_s, v^n_s) ds,
    \]
    is a local martingale with respect to the filtration $\mathcal{F}^n_t$. We now decompose the outer product of the local martingale $M^n_t$ into a local martingale term and a remainder. To this end, for all $i$, define $J^{n,i}_t = \int_0^t j^{n,i}_s\ ds$. Using integration by parts,
    \begin{align*}
      M^{n,i}_tM^{n,j}_t &= (Y^{n,i}_t - J^{n,i}_t)(Y^{n,j}_t - J^{n,j}_t) \\
      &= Y^{n,i}_tY^{n,j}_t - J^{n,i}_tY^{n,j}_t - Y^{n,i}_tJ^{n,j}_t + J^{n,i}_tJ^{n,j}_t\\
      &= Y^{n,i}_tY^{n,j}_t - \int_0^t J^{n,i}_s\ dY^{n,j}_s - \int_0^t Y^{n,j}_s j^{n,i}_s\ ds - \int_0^t J^{n,j}_s\ dY^{n,i}_s - \int_0^t Y^{n,i}_s j^{n,j}_s\ ds \\
      &\quad \quad + \int_0^t J^{n,i}_s j^{n,j}_s\ ds + \int_0^t J^{n,j}_s j^{n,i}_s\ ds \\
      &=  Y^{n,i}_tY^{n,j}_t - \int_0^t Y^{n,j}_s j^{n,i}_s\ ds - \int_0^t Y^{n,i}_s j^{n,j}_s\ ds - \int_0^t J^{n,i}_s\ dM^{n,j}_s - \int_0^t J^{n,j}_s\ dM^{n,i}_s \\
      &= N^{n, ij}_t  - \int_0^t J^{n,i}_s\ dM^{n,j}_s - \int_0^t J^{n,j}_s\ dM^{n,i}_s \\
      &\quad \quad + \int_0^t \mathcal{L}^n g^n_{ij}(\xi^n_s, v^n_s) ds  - \int_0^t Y^{n,j}_s j^{n,i}_s\ ds - \int_0^t Y^{n,i}_s j^{n,j}_s\ ds. 
    \end{align*}
    It follows that, for all $i, j$,
    \[
      H^{n, ij}_t := M^{n,i}_tM^{n,j}_t - \int_0^t \mathcal{L}^n g^n_{ij}(\xi^n_s, v^n_s) ds + \int_0^t Y^{n,j}_s j^{n,i}_s\ ds + \int_0^t Y^{n,i}_s j^{n,j}_s\ ds,
    \]
    is a local martingale with respect to the filtration $\mathcal{F}^n_t$. 

    The weak convergence result will be proven using Theorem 4.1, Chapter 7 of \cite{ethier2009}. To this end, define, for all $n$, $B^n_t := \xi^n_t - M^n_t$. Write $B^{n, i}$ for the $i$-th component. Then,
    \[
        B^{n,i}_t = -\sqrt{n}(v^n_t)_ih^n_i(\xi^n_t) + \int_0^t j^{n,i}_s\ ds, \quad i = 1, \dots, d.
    \]
    By properties of the Zig-Zag process, $B^n_t$ has right-continuous paths. Moreover, for all $i$, $n$, and $t \ge 0$,
    \[
        |B^{n, i}_t - B^{n, i}_{t^-}| = |-\sqrt{n}(v^n_t)_ih^n_i(\xi^n_t) + \sqrt{n}(v^n_{t^-})_ih^n_i(\xi^n_t)| \le 2\sqrt{n}h^n_i(\xi_t).
    \]
    Then, for any compact set $K \subset \mathbb{R}^d$, it follows from the proof of Lemma \ref{lemma:zz_ss_1} that,
    \[
        \sup_{\xi \in K} \|B^n_t - B^n_{t^-}\|^2 \le \sum_{i=1}^d \sup_{\xi \in K}4n(h^n_i(\xi))^2 \to 0,  
    \]
    as $n \to \infty$. Also define, for all $n$, a symmetric $d \times d$ matrix valued process $A^n_t = ((A^{n, ij}_t))$ as,
    \begin{align*}
        A^{n, ij}_t &= M^{n,i}_tM^{n,j}_t - H^{n, ij}_t \quad \quad (i,j = 1, \dots, d) \\
          &= \int_0^t \mathcal{L}^n g^n_{ij}(\xi^n_s, v^n_s) ds - \int_0^t Y^{n,j}_s j^{n,i}_s\ ds - \int_0^t Y^{n,i}_s j^{n,j}_s\ ds.
    \end{align*}
    Recall that, $j^{n, i}_t = \mathcal{L}^nf^n_i(\xi^n_t, v^n_t)$. Moreover,
    \begin{align*}
      \mathcal{L}^ng^n_{ij}(\xi, v) &= n^{1/2}v\cdot  \weak g^n_{ij}(\xi, v) + \sum_{k=1}^d\lambda^n_k(\xi, v)\{g^n_{ij}(\xi, F_k(v)) - g^n_{ij}(\xi, v)\}
    \end{align*}
    Calculations yield, for $j \neq i$, $A^{n, ij}_t = 0$ for all $t$ and $n$. Additionally, for $i = 1, \dots, d$,
    \[
        A^{n, ii}_t = \int_0^t 4n\lambda^n_i(\xi^n_s, v^n_s)(h^n_i(\xi^n_s))^2 \ ds = \int_0^t \left(2nh^n_i(\xi^n_s) - 2n(v^n_s)_ih^n_i(\xi^n_s)b^n_i(\xi^n_s)\right) \ ds.
    \]
    The middle term further implies that for all $t > s \ge 0$, $A^n_t - A^n_s$ is non-negative definite.

    Also define, $b: \mathbb{R}^d \to \mathbb{R}^d$ by,
    \[
        b_i(\xi) = \frac{-\xi \cdot \mathbb{E}[\nabla S_{i}(x_0; Y)]}{\mathbb{E}\left[\left|m^{-1}\sum_{j=1}^m S_i(x_0; Y_j)\right|\right]}, \quad  \xi \in \mathbb{R}^d, i = 1,\dots, d,
    \]
    where $Y, Y_1, \dots, Y_m \overset{\text{iid}}{\sim} P$. Let $A$ be a $d \times d$ diagonal matrix with $i$-th diagonal entry, $A_{ii}$, given by
    \[
        A_{ii} = \frac{2}{\mathbb{E}\left[\left|m^{-1}\sum_{j=1}^m S_i(x_0; Y_j)\right|\right]}.
    \]

    Define stopping time $\tau^n_r := \inf\{t \ge 0: \|\xi^n_t\| \ge r \text{ or } \|\xi^n(t-)\| \ge r\}$. Let $T, R \ge 0$ and $t \le T \wedge \tau^n_R$. For all $i \neq j$, we have trivially $A^{n, ij}_t - tA_{ij} = 0$. For any fixed $i$, consider,
    \begin{align*}
        \left|A^{n, ii}_t - \int_0^t A_{ii}\ ds \right| &= \left| \int_0^t \left(2nh^n_i(\xi^n_s) - 2n(v^n_s)_ih^n_i(\xi^n_s)b^n_i(\xi^n_s)\right) \ ds - tA_{ii}\right| \\
        &\le \left| \int_0^t 2nh^n_i(\xi^n_s)\ ds - tA_{ii}\right| + \left| \int_0^t 2n(v^n_s)_ih^n_i(\xi^n_s)b^n_i(\xi^n_s)\ ds\right|\\
        &\le \int_0^t \left| 2nh^n_i(\xi^n_s) -  A_{ii}\right|\ ds + \int_0^t \left| 2nh^n_i(\xi^n_s)b^n_i(\xi^n_s)\right|\ ds.
    \end{align*}
    This gives,
    \begin{equation*}
        \sup_{t \le T \wedge \tau^n_R}\left|A^{n,ii}_t - \int_0^t A_{ii}\ ds \right| \le T\sup_{\|\xi\| \le R}\left| 2nh^n_i(\xi) - A_{ii}\right| + T\sup_{\|\xi\| \le R}\left| 2nh^n_i(\xi)b^n_i(\xi)\right|.
    \end{equation*}
    From the proof of Lemma \ref{lemma:zz_ss_1}, it follows that each term on the right-hand side converges to $0$ in $\mathbb{P}^{\otimes \mathbb{N}}-$probability. Thus, for all $i, j$,
    \begin{equation}
    \label{eq:limit_An}
        \sup_{t \le T \wedge \tau^n_R}\left|A^{n,ij}_t - \int_0^t A_{ij}\ ds \right| \xrightarrow{n \to \infty} 0,\quad \text{in }\Prob-\text{probability}.
    \end{equation}
    Secondly, consider,
    \begin{align*}
        &\left|B^{n,i}_t - \int_0^t b_i(\xi^n_s)\ ds \right| \\
        &\quad \quad = \left| -\sqrt{n}(v^n_t)_ih^n_i(\xi^n_t) + \int_0^t n^{1/2}b^n_i(\xi^n_s) + n(v^n_s)_i(v^n_s \cdot \weak h^n_i(\xi^n_s))\ ds - \int_0^t b_i(\xi^n_s)\ ds\right| \\
        &\quad \quad \le \left| \int_0^t n^{1/2}b^n_i(\xi^n_s) -  b_i(\xi^n_s)\ ds\right| + \left| \int_0^t n(v^n_s)_i(v^n_s \cdot \weak h^n_i(\xi^n_s))\ ds - \sqrt{n}(v^n_t)_ih^n_i(\xi^n_t)\right|\\
        &\quad \quad \le \int_0^t\left|n^{1/2}b^n_i(\xi^n_s) -  b_i(\xi^n_s)\right|\ ds + \int_0^tn\sqrt{d}\left\| \weak h^n_i(\xi^n_s)\right\|\ ds + \left|n^{1/2}h^n_i(\xi^n_t)\right|
    \end{align*}
    This gives,
    \begin{align*}
        \sup_{t \le T \wedge \tau^n_R}\left|B^{n,i}_t - \int_0^t b_i(\xi^n_s)\ ds \right| &\le T\sup_{\|\xi\| \le R}\left| n^{1/2}b^n_i(\xi) - b_i(\xi)\right| + T\sqrt{d}\sup_{\|\xi\| \le R}\left\|n \weak h^n_i(\xi)\right\| \\
        &\quad \quad \quad + T\sup_{\|\xi\| \le R}\left|n^{1/2}h^n_i(\xi)\right|.
    \end{align*}
    Once again from the proof of Lemma \ref{lemma:zz_ss_2}, the first and the third terms on the right-hand side go to $0$ in probability. Let $\weak_k$ denote the $k$-th partial weak derivative operator. For $S \in \mathcal{S}_{(n,m)}$, let $E_S(\xi)$ denote the sum $\sum_{j \in S}s_i^j(\hat{x}_n + n^{-1/2}\xi)$. Let $k$ be arbitrary. Then,
    \begin{align*}
      \weak_kh^n_i(\xi) &= \weak_k \left(\frac{n}{m|\mathcal{S}_{(n, m)}|}\sum_{S \in \mathcal{S}_{(m, n)}} \left|E_S(\xi)\right|\right)^{-1} \\
      &= -(h_i^n(\xi))^2 \cdot \left(\frac{n}{m|\mathcal{S}_{(n, m)}|}\sum_{S \in \mathcal{S}_{(m, n)}} \weak_k \left(\left|E_S(\xi)\right|\right)\right)\\
      &= -(h_i^n(\xi))^2 \cdot \left(\frac{n}{m|\mathcal{S}_{(n, m)}|}\sum_{S \in \mathcal{S}_{(m, n)}} \mathrm{sgn}(E_S)\cdot \weak_k(E_S)\right)\\
      &= -(h_i^n(\xi))^2 \cdot \left(\frac{n}{m|\mathcal{S}_{(n, m)}|}\sum_{S \in \mathcal{S}_{(m, n)}} \mathrm{sgn}(E_S)\cdot n^{-1/2}\sum_{j \in S}s'^j_{ki}(\hat{x}_n + n^{-1/2}\xi)\right),
    \end{align*}
    where $s'^j(x) = -\nabla^{\otimes 2} \log f(y_j; x)$. The above gives,
    \begin{align*}
      n|\weak_kh^n_i(\xi)| &\le n^{1/2}|h_i^n(\xi)|^2 \left(\sum_{j = 1}^n \left|s'^j_{ki}(\hat{x}_n + n^{-1/2}\xi)\right|\right) \\
      &\le n^{1/2}|h_i^n(\xi)|^2 \left(\sum_{j = 1}^n |s'^j_{ki}(x_0)| + |\nabla s'^j_{ki}(x_0 + \theta^jx'_n)\cdot x'_n|\right)\\
      &\le n^{3/2}|h_i^n(\xi)|^2 \left(\frac{1}{n}\sum_{j = 1}^n |s'^j_{ki}(x_0)| + M'\|x'_n\|\right),
    \end{align*}
    where $x'_n = (\hat{x}_n + n^{-1/2}\xi + x_0)$. By assumption, $x'_n$ converges to $0$ in probability uniformly in $\xi$. By the law of large numbers and the moment assumption (Assumption \ref{assum:moment}), the term inside the parenthesis converges to a finite quantity. Also, $n^{3/2}|h^n_i|^2$ goes to $0$ uniformly from the proof of Lemma \ref{lemma:zz_ss_1}. Since the above is true for arbitrary $k$, it follows that $n\|\weak h^n_i(\xi)\|$ goes to $0$ uniformly in $\xi$ in probability. Thus, for all $i$,
    \begin{equation}
    \label{eq:limit_Bn}
        \sup_{t \le T \wedge \tau^n_R}\left|B^{n,i}_t - \int_0^t b_{i}(\xi^n_s)\ ds \right| \xrightarrow{n \to \infty} 0,\quad \text{in }\Prob-\text{probability}.
    \end{equation}

    For any subsequence $(n_k)$, let $(n_{k_l})$ be a further sub-sequence such that the convergence in \eqref{eq:limit_An} and \eqref{eq:limit_Bn} is almost sure. Then by Theorem 4.1, Chapter 7 of \cite{ethier2009}, along subsequence $(n_{k_l})$,  $(\xi^{n}_t)$ converges weakly in Skorohod topology to the solution of the martingale problem of the operator $(\mathcal{A}, \mathcal{D}(\mathcal{A}))$, where $\mathcal{D}(\mathcal{A}) = C^2_c(\mathbb{R}^d)$ and for $h \in \mathcal{D}(\mathcal{A})$,
    \[
      \mathcal{A}h(\xi) = b(\xi)\cdot\nabla h(\xi) + \frac{1}{2}\text{Tr}(\nabla^{\otimes 2}h(\xi)A).
    \] 
    Since the martingale problem above is well-posed, the result follows.

\subsection{Proof of Theorems \ref{theo:zzcv_stat_fixed} and \ref{theo:zzcv_stat_varying}}
\label{sec:proof_stat_zzcv}
    From Section 4.2, the $i$-th ZZ-CV switching rate in the $\xi$ parameter,
    \begin{equation}
        \lambda^n_{\text{cv}, i}(\xi, v) = \frac{n}{m|\mathcal{S}_{(n,m)}|}\sum_{S \in \mathcal{S}_{(n,m)}}\left(v_i \cdot \sum_{j \in S}E_i^j(\xi)\right)_{+}, \label{eq:rescaledrate}
    \end{equation}
    where,
    \[
        E_i^j(\xi) = E_i^j(x(\xi)) = s^j_i(\hat{x}_n + n^{-1/2}\xi) - s^j_i(\hat{x}_n + n^{-1/2}\xi_n^*) + n^{-1}\sum_{k=1}^n s^k_i(\hat{x}_n + n^{-1/2}\xi_n^*).
    \]
    Let $s'^{j}(x) = -\nabla^{\otimes2}\log f(y_j; x)$ for all $j$. Then $\nabla s^j_i$ is the $i$-th column of $s'^{j}$. 
    Let $\mathcal{L}^n$ be the infinitesimal generator for the rescaled process $U^n$. Then,
    \[
        \mathcal{L}^nf(\xi, v) = v\cdot\nabla_{\xi}f(\xi, v) + \sum_{i=1}^d n^{-1/2}\lambda^n_{\text{cv}, i}(\xi, v)\{f(x, F_i(v)) - f(x, v)\},
    \]
    for arbitrary test function $f$. Given $\xi^*$, the infinitesimal generator $\mathcal{L}$ of the Zig-Zag process with switching rates \eqref{eq:limrate} is given by,
    \[
        \mathcal{L}f(\xi, v) = v\cdot\nabla_{\xi}f(\xi, v) + \sum_{i=1}^d\lambda_{\text{cv}, i}(\xi, v | \xi^*)\{f(x, F_i(v)) - f(x, v)\}.
    \]
    We will establish uniform convergence of the sequence of generators $\mathcal{L}^n$ to $\mathcal{L}$. From the expressions for both $\mathcal{L}^n$ and $\mathcal{L}$, it is enough to show the uniform convergence of switching rates $n^{-1/2}\lambda^n_{\text{cv}, i}$ to $\lambda_{\text{cv}, i}$ for all $i$. Towards that end, define, for $i = 1, \dots, d$,
    \begin{equation}
    \label{eq:An_seq}
        F_i^j(\xi) = \xi\cdot \nabla s^j_i(\hat{x}_n) - \xi_n^*\cdot\left(\nabla s^j_i(\hat{x}_n) - J^n_{\cdot i}(\hat{x}_n)\right), \quad j = 1, \dots, n, 
    \end{equation}
    and correspondingly,
    \[
        \tilde{\lambda}^n_{\text{cv}, i}(\xi, v) = \frac{1}{m|\mathcal{S}_{(n,m)}|}\sum_{S \in \mathcal{S}_{(n,m)}}\left(v_i \cdot \sum_{j \in S}F_i^j(\xi)\right)_{+}.
    \]
    \begin{lemma}
    \label{lemma:zz_cv_1}
        Suppose Assumptions \ref{assum:smoothness} - \ref{assum:mle_consistent} all hold. Let $K \subset \mathbb{R}^d$ be compact. Then, for all $i = 1, \dots, d$, 
        \[
           \sup_{\xi \in K} \left|n^{-1/2}\lambda^{n}_{\text{cv}, i}(\xi, v) - \tilde{\lambda}^n_{\text{cv}, i}(\xi, v) \right| \to 0, \quad \text{in }\Prob-\text{probability}.
        \]
    \end{lemma}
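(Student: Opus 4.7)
The plan is to reduce the difference between $n^{-1/2}\lambda^n_{\text{cv},i}$ and $\tilde\lambda^n_{\text{cv},i}$ to a quadratic Taylor remainder that is controllable uniformly on compact $K$. For the generic increment write
\begin{equation*}
s^j_i(\hat x_n + n^{-1/2}\eta) - s^j_i(\hat x_n) = n^{-1/2}\eta \cdot \nabla s^j_i(\hat x_n) + R^j_n(\eta),
\end{equation*}
where the second-order remainder satisfies $|R^j_n(\eta)| \le \tfrac{1}{2}M' n^{-1}\|\eta\|^2$ uniformly in $j$ by the Hessian bound in Assumption \ref{assum:smoothness}. Applying this with $\eta \in \{\xi, \xi_n^*\}$ to each of the three pieces of $E^j_i(\xi)$, and invoking the MLE identity $\sum_{j=1}^n s^j(\hat x_n) = \bs{0}$ from Assumption \ref{assum:mle_consistent} to kill the zeroth-order terms in $n^{-1}\sum_k s^k_i(\hat x_n + n^{-1/2}\xi_n^*)$, one identifies the surviving first-order contribution as $n^{-1/2}F^j_i(\xi)$. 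That is,
\begin{equation*}
n^{1/2}E^j_i(\xi) = F^j_i(\xi) + n^{1/2}\bigl[R^j_n(\xi) - R^j_n(\xi_n^*) + n^{-1}\sum_{k=1}^n R^k_n(\xi_n^*)\bigr],
\end{equation*}
with the bracketed remainder bounded uniformly in $j$ by $M' n^{-1}(\|\xi\|^2 + \|\xi_n^*\|^2)$.

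Next, the $1$-Lipschitz property $|a_+ - b_+| \le |a - b|$ applied inside \eqref{eq:rescaledrate}, together with the combinatorial identity
\begin{equation*}
\frac{1}{m|\mathcal{S}_{(n,m)}|}\sum_{S \in \mathcal{S}_{(n,m)}}\sum_{j \in S} a_j = \frac{1}{n}\sum_{j=1}^n a_j
\end{equation*}
(each index $j$ belongs to exactly $\binom{n-1}{m-1}$ of the $\binom{n}{m}$ samples), yields
\begin{equation*}
|n^{-1/2}\lambda^n_{\text{cv},i}(\xi,v) - \tilde\lambda^n_{\text{cv},i}(\xi,v)| \le \frac{1}{n}\sum_{j=1}^n |n^{1/2}E^j_i(\xi) - F^j_i(\xi)| \le M' n^{-1/2}(\|\xi\|^2 + \|\xi_n^*\|^2).
\end{equation*}
Taking the supremum over $\xi \in K$ gives the bound $M' n^{-1/2}(C_K + \|\xi_n^*\|^2)$, which vanishes in $\Prob$-probability because $\|\xi_n^*\|$ is tight (by the hypothesis $\xi_n^* \to \xi^*$ in $\Prob$-probability with $\mathbb{E}\|\xi^*\|^2 < \infty$).

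The main subtlety is the interplay between the random reference point $\xi_n^*$ and the uniformity over $\xi \in K$: since the remainder bound separates additively into a term in $\xi$ and a term in $\xi_n^*$, the supremum over compact $K$ only touches the first summand, and mere tightness of $\|\xi_n^*\|$ suffices for the second. Notably, no $U$-statistic law of large numbers is needed here; the sequence $F^j_i$ has been engineered precisely to match $n^{1/2}E^j_i$ through a deterministic first-order Taylor expansion, so the argument is purely analytic once the MLE identity has been used to eliminate the leading constant term.
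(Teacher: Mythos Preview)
Your proposal is correct and follows essentially the same route as the paper's own proof: both use the 1-Lipschitz property of $(\cdot)_+$, a second-order Taylor expansion of $s^j_i$ about $\hat x_n$, and the combinatorial simplification of the double sum to reduce the difference to a remainder of size $O(n^{-1/2})(\|\xi\|^2+\|\xi_n^*\|^2)$. If anything, your write-up is slightly cleaner in making the role of the MLE identity $\sum_j s^j(\hat x_n)=\bs{0}$ and the counting identity $\frac{1}{m|\mathcal S_{(n,m)}|}\sum_S\sum_{j\in S}a_j=\frac{1}{n}\sum_j a_j$ explicit, whereas the paper leaves these steps implicit.
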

    \begin{proof}
        Let $i$ and $n$ be arbitrary. Observe that,
        \begin{align*}
            &\left| n^{-1/2}\lambda^n_{\text{cv}, i}(\xi, v) - \tilde{\lambda}^n_{\text{cv}, i}(\xi, v)\right| \\
            &\quad \quad = \left| n^{-1/2}\left(\frac{n}{m|\mathcal{S}_{(n,m)}|}\sum_{S \in \mathcal{S}_{(n,m)}}\left(v_i \cdot \sum_{j \in S}E_i^j(\xi)\right)_{+}\right) - \frac{1}{m|\mathcal{S}_{(n,m)}|}\sum_{S \in \mathcal{S}_{(n,m)}}\left(v_i \cdot \sum_{j \in S}F_i^j(\xi)\right)_{+}\right| \\
            &\quad \quad = \frac{n}{m|\mathcal{S}_{(n,m)}|}\left| \sum_{S \in \mathcal{S}_{(n,m)}}\left(v_i \cdot \sum_{j \in S}n^{-1/2}E_i^j(\xi)\right)_{+} - \sum_{S \in \mathcal{S}_{(n,m)}}\left(v_i \cdot \sum_{j \in S}n^{-1}F_i^j(\xi)\right)_{+}\right| \\
            &\quad \quad \le \frac{n}{m|\mathcal{S}_{(n,m)}|}\sum_{S \in \mathcal{S}_{(n,m)}}\left| \left(v_i \cdot \sum_{j \in S}n^{-1/2}E_i^j(\xi)\right)_{+} - \left(v_i \cdot \sum_{j \in S}n^{-1}F_i^j(\xi)\right)_{+}\right| \\
            &\quad \quad \le \frac{n}{m|\mathcal{S}_{(n,m)}|}\sum_{S \in \mathcal{S}_{(n,m)}}\left| v_i \cdot \sum_{j \in S}n^{-1/2}E_i^j(\xi) - v \cdot \sum_{j \in S}n^{-1}F_i^j(\xi)\right| \\
            &\quad \quad \le \frac{n}{m|\mathcal{S}_{(n,m)}|}\sum_{S \in \mathcal{S}_{(n,m)}}\sum_{j \in S}\left|n^{-1/2}E_i^j(\xi) - n^{-1}F_i^j(\xi)\right|.
        \end{align*}
        Using a second-order Taylor approximation about $\hat{x}_n$, we get, for all $i, j$,
        \begin{gather*}
            s_i^j(\hat{x}_n + n^{-1/2}\xi) = s_i^j(\hat{x}_n) + n^{-1/2}\xi\cdot\nabla s_i^j(\hat{x}_n) + \frac{n^{-1}}{2}\xi^T \nabla^{\otimes 2}s_i^j(\hat{x}_n + \theta_{i, j}n^{-1/2}\xi)\xi,\\
             s_i^j(\hat{x}_n + n^{-1/2}\xi^*_n) = s_i^j(\hat{x}_n) + n^{-1/2}\xi_n^*\cdot\nabla s_i^j(\hat{x}_n) + \frac{n^{-1}}{2}(\xi_n^*)^T \nabla^{\otimes 2}s_i^j(\hat{x}_n + \theta^*_{i, j}n^{-1/2}\xi^*_n)\xi_n^*
        \end{gather*}
        where $\theta_{i, j}, \theta^*_{i, j} \in (0, 1)$. Using the above, $E^j_i(\xi)$ can be re-written as,
        \[
            E^j_i(\xi) = n^{-1/2}F^j_i(\xi) + R^j_i(\xi),
        \]
        where $R^j_i(\xi)$ is a remainder term given by,
        \begin{align*}
            R^j_i(\xi) &= \frac{n^{-1}}{2}\xi^T \nabla^{\otimes 2}s_i^j(\hat{x}_n + \theta_{i, j}n^{-1/2}\xi)\xi \\
            &\quad \quad - \frac{n^{-1}}{2}\left((\xi_n^*)^T\left(\nabla^{\otimes 2}s_i^j(\hat{x}_n + \theta^*_{i, j}n^{-1/2}\xi^*_n) - n^{-1}\sum_{k=1}^n \nabla^{\otimes 2}s_i^k(\hat{x}_n + \theta^*_{i, k}n^{-1/2}\xi^*_n)\right)\xi_n^*\right).
        \end{align*}

        For all $j$, by Assumption \ref{assum:smoothness},
        \begin{equation*}
            \left|n^{-1/2}E^j_i(\xi) - n^{-1}F^j_{i}(\xi)\right| = \left|n^{-1/2}R_i^j(\xi)\right| \le \frac{n^{-3/2}}{2}\|\xi\|^2 M' + n^{-3/2}\|\xi_n^*\|^2M'.
        \end{equation*}
        Thus,
        \begin{align*}
            \left| n^{-1/2}\lambda^n_{\text{cv}, i}(\xi, v) - \tilde{\lambda}^n_{\text{cv}, i}(\xi, v)\right| &\le \frac{n}{m|\mathcal{S}_{(n,m)}|}\sum_{S \in \mathcal{S}_{(n,m)}}\sum_{j \in S}\left|n^{-1/2}E_i^n(\xi) - n^{-1}F_i^n(\xi)\right|\\
            &\le \frac{n}{m|\mathcal{S}_{(n,m)}|}\sum_{S \in \mathcal{S}_{(n,m)}}\sum_{j \in S} \left(\frac{n^{-3/2}}{2}\|\xi\|^2 M' + n^{-3/2}\|\xi_n^*\|^2M'\right) \\
            &= \frac{n^{-1/2}}{2}\|\xi\|^2 M' + n^{-1/2}\|\xi_n^*\|^2M'.
        \end{align*}
        Then, since $\xi \in K$ compact and $\xi^*_n \to \xi^*$ in probability with $\mathbb{E}\|\xi^*\|^2 < \infty$, the result follows for all $i = 1, \dots, d$.
    \end{proof}

    \begin{lemma}
    \label{lemma:zz_cv_2}
        Suppose Assumptions \ref{assum:smoothness} - \ref{assum:mle_consistent} all hold. Let $K \subset \mathbb{R}$ be compact. Then, for all $i = 1, \dots, d$,
        \[
            \sup_{\xi \in K} \left|n^{-1/2}\lambda^{n}_{\text{cv}, i}(\xi, v) - \lambda_{\text{cv}, i}(\xi, v | \xi^*) \right| \to 0, \quad \text{in }\Prob-\text{probability}.
        \]
    \end{lemma}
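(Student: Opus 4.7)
The plan is to bridge $n^{-1/2}\lambda^n_{\text{cv},i}(\xi,v)$ and $\lambda_{\text{cv},i}(\xi,v\mid\xi^*)$ through the intermediate quantity $\tilde\lambda^n_{\text{cv},i}(\xi,v)$ built from $F_i^j$. Lemma \ref{lemma:zz_cv_1} already handles the first bridge, so the task reduces to showing that $\tilde\lambda^n_{\text{cv},i}(\xi,v)\to\lambda_{\text{cv},i}(\xi,v\mid\xi^*)$ uniformly in $\xi\in K$, in $\Prob$-probability.

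First, I would introduce the ``oracle'' version of $F_i^j$,
\[
G_i^j(\xi):=\xi\cdot\nabla s^j_i(x_0)-\xi^*\cdot\bigl(\nabla s^j_i(x_0)-\E[\nabla S_i(x_0;Y)]\bigr),
\]
and the corresponding sub-sampling average $\bar\lambda^n_i(\xi,v)$ obtained by substituting $G_i^j$ for $F_i^j$ in the definition of $\tilde\lambda^n_{\text{cv},i}$. Using $|(a)_+-(b)_+|\le|a-b|$ and the combinatorial identity that each $j\in\{1,\dots,n\}$ belongs to a fraction $m/n$ of the elements of $\mathcal{S}_{(n,m)}$, one obtains
\[
|\tilde\lambda^n_{\text{cv},i}(\xi,v)-\bar\lambda^n_i(\xi,v)|\le\frac{1}{n}\sum_{j=1}^n|F_i^j(\xi)-G_i^j(\xi)|.
\]
The right-hand side can be bounded by a sum of three contributions proportional to (i) $\|\hat x_n-x_0\|$ with coefficient $\|\xi\|M'$, using Assumption \ref{assum:smoothness} for the Hessian bound on $\nabla s^j_i$; (ii) $\|\xi^*_n-\xi^*\|$ with a coefficient controlled by $n^{-1}\sum_j\|\nabla s^j_i(\hat x_n)\|$ and $\|J^n_{\cdot i}(\hat x_n)\|$, both of which converge by the law of large numbers; and (iii) $\|\xi^*\|\bigl(M'\|\hat x_n-x_0\|+\|J^n_{\cdot i}(\hat x_n)-\E[\nabla S_i(x_0;Y)]\|\bigr)$. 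Assumption \ref{assum:mle_consistent} combined with the hypothesis $\xi^*_n\to\xi^*$ forces all three contributions to vanish in probability, uniformly for $\xi$ in a compact $K$.

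Next, note that $\xi^*$ is independent of $\bs{Y}^{(\mathbb{N})}$ by hypothesis, so conditional on $\xi^*$ the quantity $\bar\lambda^n_i(\xi,v)$ is $1/m$ times a $U$-statistic of degree $m$ with iid inputs $Y_1,\dots,Y_n\sim P$ and kernel
\[
h_{\xi,v,\xi^*}(y_1,\dots,y_m):=\Bigl(v_i\cdot\sum_{j=1}^m\bigl[\xi\cdot\nabla S_i(x_0;y_j)-\xi^*\cdot(\nabla S_i(x_0;y_j)-\E[\nabla S_i(x_0;Y)])\bigr]\Bigr)_+.
\]
Assumption \ref{assum:moment} gives $\E|h_{\xi,v,\xi^*}|<\infty$ pointwise, so by the strong law of large numbers for $U$-statistics (exactly as used in the proof of Proposition \ref{propo:rates}), $\bar\lambda^n_i(\xi,v)\to\lambda_{\text{cv},i}(\xi,v\mid\xi^*)$ pointwise in $\xi$, $\Prob$-a.s.\ conditional on $\xi^*$, and hence in $\Prob$-probability.

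The main obstacle is upgrading this pointwise convergence to uniform convergence on the compact set $K$. Since $h_{\xi,v,\xi^*}$ is Lipschitz in $\xi$ with stochastic Lipschitz constant bounded by $\sum_{j=1}^m\|\nabla S_i(x_0;Y_j)\|$, which has finite expectation under Assumption \ref{assum:moment}, the pre-limit $U$-process $\xi\mapsto\bar\lambda^n_i(\xi,v)$ is equicontinuous in $\xi$ with a controlled modulus, and its limit $\lambda_{\text{cv},i}(\cdot,v\mid\xi^*)$ is itself Lipschitz. A standard finite $\varepsilon$-net argument on $K$, combined with pointwise application of the $U$-statistic SLLN at each net point and the Lipschitz bound to control oscillations between net points, then upgrades pointwise convergence to uniform convergence in probability. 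Chaining this with Lemma \ref{lemma:zz_cv_1} and the approximation of $\tilde\lambda^n_{\text{cv},i}$ by $\bar\lambda^n_i$ completes the proof for each $i=1,\dots,d$.
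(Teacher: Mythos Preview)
Your proposal is correct and follows essentially the same route as the paper: reduce via Lemma \ref{lemma:zz_cv_1} to $\tilde\lambda^n_{\text{cv},i}$, introduce the same oracle quantity $G_i^j$, control $|F_i^j-G_i^j|$ by Taylor expansion and the assumed convergences, and then invoke the $U$-statistic SLLN conditional on $\xi^*$. The only notable difference is in the uniformity step: the paper simply observes that $\xi$ and $v$ enter the kernel as multiplicative factors and asserts uniform convergence on compact sets, whereas you give a more explicit Lipschitz/$\varepsilon$-net argument; both are valid, and yours is the more carefully justified version of the same idea.
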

    \begin{proof}
        Due to Lemma \ref{lemma:zz_cv_1}, it is sufficient to show the result for $\tilde{\lambda}^{n}_{\text{cv}, i}(\xi, v)$ instead of $n^{-1/2}\lambda_{\text{cv}, i}(\xi, v)$. For an arbitrary $n$, a first order Taylor's approximation about $x_0$ gives,
        \[
            \nabla s^j_i(\hat{x}_n) = \nabla s^j_i(x_0) + \nabla^{\otimes 2}s^j_i(x_0 + \theta_{i, j}(\hat{x}_{n} - x_0)) \cdot (\hat{x}_n - x_0),
        \]
        for some $\theta_{i,j} \in (0, 1)$. Consequently, we write $F^j_i$ as,
        \begin{align*}
            F^j_i(\xi) &= \xi\cdot \nabla s^j_i(\hat{x}_n) - \xi_n^*\cdot\left(\nabla s^j_i(\hat{x}_n) - J^n_{\cdot i}(\hat{x}_n)\right) \\
            &= (\xi - \xi^*_n) \cdot \left(\nabla s^j_i(x_0) + \nabla^{\otimes 2}s^j_i(x_0 + \theta_{i, j}(\hat{x}_{n} - x_0)) \cdot (\hat{x}_n - x_0)\right) - \xi^*_n\cdot J_{\cdot i}^n(\hat{x}_n)
        \end{align*}
        Define,
        \begin{equation}
        \label{eq:gi}
            G^j_i(\xi) = \xi \cdot \nabla s^j_i(x_0) - \xi^*\cdot\left(\nabla s^j_i(x_0) - \mathbb{E}\nabla S_i(x_0)\right),
        \end{equation}
        where we use short $\mathbb{E}\nabla S_i(x_0)$ to denote $\mathbb{E}\nabla S_i(x_0; Y_1)$. Then,
        \begin{align*}
            \left|F^j_i(\xi) - G^j_i(\xi)\right| &= \left|(\xi - \xi_n^*)\cdot \nabla^{\otimes 2}s^j_i(x_0 + \theta_{i, j}(\hat{x}_{n} - x_0)) \cdot (\hat{x}_n - x_0) \right.\\
            &\quad \quad \left. - (\xi_n^* - \xi^*)\cdot \nabla s^j_i(x_0) + \xi^*_n \cdot J^n_{\cdot i}(\hat{x}_n) - \xi^*\cdot\mathbb{E}\nabla S_i(x_0) \right| \\
            &\le M'\|\xi - \xi_n^*\|\|\hat{x}_n - x_0\| + \|\xi^*_n - \xi^*\|\|\nabla s^j_i(x_0)\| + \|\xi^*_n \cdot J^n_{\cdot i}(\hat{x}_n) - \xi^*\cdot\mathbb{E}\nabla S_i(x_0)\|,
        \end{align*}
        for all $i, j$. Consider,
        \begin{align*}
            &\left|\tilde{\lambda}^n_{\text{cv}, i}(\xi, v) - \frac{1}{m|\mathcal{S}_{(n,m)}|}\sum_{S \in \mathcal{S}_{(n,m)}}\left(v_i \cdot \sum_{j \in S}G^j_i(\xi)\right)_{+} \right| \\
            &\quad \quad \le \frac{1}{m|\mathcal{S}_{(n,m)}|}\sum_{S \in \mathcal{S}_{(n,m)}}\sum_{j \in S}\left|F_i^n(\xi) - G_i(\xi)\right| \\
            &\quad \quad \le M'\|\xi - \xi_n^*\|\|\hat{x}_n - x_0\| + \|\xi^*_n \cdot J^n_{\cdot i}(\hat{x}_n) - \xi^*\cdot\mathbb{E}\nabla S_i(x_0)\| \\
            &\quad \quad\quad \quad + \|\xi^*_n - \xi^*\|\frac{1}{m|\mathcal{S}_{(n,m)}|}\sum_{S \in \mathcal{S}_{(n,m)}}\sum_{j \in S}\|\nabla s^j_i(x_0)\| \\
            &\quad \quad = M'\|\xi - \xi_n^*\|\|\hat{x}_n - x_0\| + \|\xi^*_n \cdot J^n_{\cdot i}(\hat{x}_n) - \xi^*\cdot\mathbb{E}\nabla S_i(x_0)\| + \|\xi^*_n - \xi^*\|\frac{1}{n}\sum_{j=1}^n\|\nabla s^j_i(x_0)\|
        \end{align*}
        Under Assumption \ref{assum:moment}, $n^{-1}\sum_{j=1}^{n}\|\nabla s^j_i(x_0)\| \to \mathbb{E}\|\nabla S_i(x_0)\| < \infty$ almost surely as $n \to \infty$. Hence, under the hypotheses,
        \[
            \sup_{\xi \in K} \left|\tilde{\lambda}^{n}_{\text{cv}, i}(\xi, v) - \frac{1}{m|\mathcal{S}_{(n,m)}|}\sum_{S \in \mathcal{S}_{(n,m)}}\left(v_i \cdot \sum_{j \in S}G^j_i(\xi)\right)_{+} \right| \to 0,
        \]
        in $\Prob-$probability. Finally, observe that given $\xi^*$ and for each fixed $i$, $G^j_i$s are conditionally independent and identically distributed. Also given $\xi^*$, for each fixed $(\xi, v)$ and $i$,
        \[
            \frac{1}{m|\mathcal{S}_{(n,m)}|}\sum_{S \in \mathcal{S}_{(n,m)}}\left(v_i \cdot \sum_{j \in S}G^j_i(\xi)\right)_{+}
        \]
        is a $U-$statistic \citep[see][]{arcones1993limit} of degree $m$ with kernel $f_m(y_1, \dots, y_m)$ given by,
        \[
            f_m(y_1, \dots, y_m) = \frac{1}{m}\left(v_i \cdot \sum_{j=1}^m \xi\cdot \nabla s_i(x_0; y_j) - \xi^*\cdot (\nabla s_i(x_0; y_j) - \mathbb{E}\nabla S_i(x_0))\right)_{+}.
        \]  
        Then, conditioned on $\xi^*$, the strong law of large numbers for $U$-statistics \citep{arcones1993limit} gives convergence for each $(\xi, v)$. Furthermore, since $\xi$ and $v$ appear only as multiplicative factors, the strong law is uniform on compact sets and the result follows.  
    \end{proof}

    \begin{lemma}
    \label{lemma:zz_cv_3}
        Let $f$ be such that $f(\cdot, v) \in \mathcal{C}_c^{\infty}(\mathbb{R}^d)$ for each $v$. Suppose Assumptions \ref{assum:smoothness} - \ref{assum:mle_consistent} all hold. Given $\xi^*$, 
        \[
            \sup_{\xi \in \mathbb{R}^d} \left|\mathcal{L}^{n}f(\xi, v) - \mathcal{L}f(\xi, v)\right| \to 0, \quad \text{in } \Prob-\text{probability}. 
        \]
    \end{lemma}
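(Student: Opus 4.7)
The plan is to subtract the two generators and exploit the compact support of $f$ to reduce everything to a uniform statement on a bounded set, at which point Lemma \ref{lemma:zz_cv_2} closes the argument.

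First, since both $\mathcal{L}^n$ and $\mathcal{L}$ share the same transport term $v\cdot\nabla_\xi f(\xi,v)$, the difference collapses to
\[
    \mathcal{L}^n f(\xi,v) - \mathcal{L}f(\xi,v) = \sum_{i=1}^d \bigl[n^{-1/2}\lambda^n_{\text{cv},i}(\xi,v) - \lambda_{\text{cv},i}(\xi,v\mid \xi^*)\bigr]\bigl\{f(\xi,F_i(v)) - f(\xi,v)\bigr\}.
\]
Because $f(\cdot,v)$ and each $f(\cdot,F_i(v))$ are compactly supported, there exists a compact set $K \subset \mathbb{R}^d$ (independent of $n$) such that $f(\xi,F_i(v)) - f(\xi,v) = 0$ for all $\xi \notin K$ and all $i$. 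Moreover, $C := \max_{i,v}\|f(\cdot,F_i(v)) - f(\cdot,v)\|_\infty < \infty$. Therefore
\[
    \sup_{\xi\in\mathbb{R}^d}\bigl|\mathcal{L}^n f(\xi,v) - \mathcal{L}f(\xi,v)\bigr| \le C\sum_{i=1}^d \sup_{\xi \in K}\bigl|n^{-1/2}\lambda^n_{\text{cv},i}(\xi,v) - \lambda_{\text{cv},i}(\xi,v\mid \xi^*)\bigr|.
\]
By Lemma \ref{lemma:zz_cv_2} applied to this compact $K$, each of the $d$ suprema on the right tends to $0$ in $\Prob$-probability, and the conclusion follows.

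I do not anticipate a main obstacle here: the transport term cancels exactly, the compact support of the test function handles what would otherwise be the only tricky point (upgrading compact-set convergence of the rates to global uniform convergence of the generator), and the rate convergence has already been established. The only small care needed is to verify that the same compact $K$ works for every $v\in\mathcal{V}$, but this is immediate since $\mathcal{V}$ is finite and each $f(\cdot,v)$ is compactly supported; one can simply take $K$ to be the union of the supports of $f(\cdot,v)$ and $f(\cdot,F_i(v))$ over all finitely many $i,v$.
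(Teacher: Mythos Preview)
Your proposal is correct and follows essentially the same approach as the paper: cancel the shared transport term, use the compact support of $f$ to localize to a compact $K$, bound the jump differences by a constant, and invoke Lemma~\ref{lemma:zz_cv_2}. The paper's version is slightly terser (it bounds $|f(\xi,F_i(v))-f(\xi,v)|$ by $2\|f\|_\infty$ and cites Lemmas~\ref{lemma:zz_cv_1} and~\ref{lemma:zz_cv_2} together), but the argument is the same.
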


    \begin{proof}
        Let $K$ be the compact support of $f$. Then, for an arbitrary $n$,
        \begin{align*}
            \sup_{\xi \in \mathbb{R}^d} \left|\mathcal{L}^{n}f(\xi, v) - \mathcal{L}f(\xi, v)\right| &\le \sup_{\xi \in K} 2\|f\|_{\infty}\sum_{i=1}^d\left|n^{-1/2}\lambda_{\text{cv}, i}(\xi, v) - \lambda_{\text{cv}, i}(\xi, v|\xi^*)\right|.
        \end{align*}
        Since $\|f\|_{\infty} < \infty$, the result follows by Lemma \ref{lemma:zz_cv_1} and Lemma \ref{lemma:zz_cv_2}.
    \end{proof}
    \vskip 15pt
    \noindent {\it Proof of Theorem \ref{theo:zzcv_stat_fixed}.} Lemma \ref{lemma:zz_cv_3} shows that the infinitesimal generators converge uniformly in $\Prob-$probability. For any subsequence $(n_k)$, let $(n_{k_l})$ be a further sub-sequence such that this convergence is almost sure along $(n_{k_{l}})$. Then by Corollary 8.7(f), Chapter 4 of \cite{ethier2009} and since $\mathcal{C}_c^{\infty}$ is a core for $\mathcal{L}$ \citep{holderrieth2021cores}, $U^{n}$ converges weakly (in Skorohod topology) to $U$, $\Prob-$almost surely along the subsequence $(n_{k_l})$. This in turn implies weak convergence (in Skorohod topology) in $\Prob-$probability of $U^n$ to $U$ as $n \to \infty$. \qed

    \vskip 15pt
    \noindent {\it Proof of Theorem \ref{theo:zzcv_stat_varying}.} Following the steps in Lemma \ref{lemma:zz_cv_1} and Lemma \ref{lemma:zz_cv_2}, it is possible to once again show that for any $K \subset \mathbb{R}^d$,
    \[
        \sup_{\xi \in K} \left|\tilde{\lambda}^{n}_{\text{cv}, i}(\xi, v) - \frac{1}{m(n)|\mathcal{S}_{(n,m(n))}|}\sum_{S \in \mathcal{S}_{(n,m(n))}}\left(v_i \cdot \sum_{j \in S}G^j_i(\xi)\right)_{+} \right| \to 0,
    \]
    for all $i = 1, \dots, d$. Following a similar argument as in the proof of Proposition 2.1, one can further show that whenever $m(n) \to \infty$ as $n \to \infty$, the difference
    \[
        \left|\frac{1}{|\mathcal{S}_{(n,m(n))}|}\sum_{S \in \mathcal{S}_{(n,m(n))}}\left(v_i \cdot m(n)^{-1}\sum_{j \in S}G^j_i(\xi)\right)_{+} - \left(v_i \cdot n^{-1} \sum_{j = 1}^n G^j_i(\xi)\right)_{+}\right|
    \]
    goes to $0$ uniformly on compact sets in $\Prob-$probability. This implies that for all $i$ and compact $K \subset \mathbb{R}^d$,
    \[
        \sup_{\xi \in K} \left|n^{-1/2}\lambda^{n}_{\text{cv}, i}(\xi, v) - n^{-1}\left(v_i \cdot \sum_{j = 1}^{n} G^j_i(\xi)\right)_{+} \right| \to 0, \quad \text{in } \Prob-\text{probability}. 
    \] 
    Also, the strong law of large numbers gives, 
    \[
        \frac{1}{n}\left(v_i \cdot \sum_{j = 1}^{n} G^j_i(\xi)\right)_{+} \xrightarrow{n \to \infty} \left(v_i \mathbb{E}_{Y}\left[\xi\cdot \nabla S_i(x_0; Y) - \xi^*\cdot(\nabla S_i(x_0; Y) - \mathbb{E}\nabla S_i(x_0))\right]\right)_{+},
    \]
    almost surely. But, $\mathbb{E}_{Y}\left[\xi\cdot \nabla S_i(x_0; Y) - \xi^*\cdot(\nabla S_i(x_0; Y) - \mathbb{E}\nabla S_i(x_0))\right] = \xi\cdot \mathbb{E}_{Y} \nabla S_i(x_0; Y)$ almost surely. And thus we get, for all $i$,
    \[
        \sup_{\xi \in K} \left|n^{-1/2}\lambda^{n}_{\text{cv}, i}(\xi, v) - (v_i (\xi\cdot\mathbb{E} \nabla S_i(x_0; Y))_{+} \right| \to 0, \quad \text{in } \Prob-\text{probability}.
    \] 
    The result follows.   \qed
\end{appendix}

\bibliographystyle{apalike}
\bibliography{references}

\end{document}